\newtheorem{theorem}{Theorem}[section]
\newtheorem{proposition}[theorem]{Proposition}
\newtheorem{lemma}[theorem]{Lemma}
\newtheorem{corollary}[theorem]{Corollary}
\newtheorem{definition}[theorem]{Definition}
\newcommand{\bE}{\ensuremath{\mathbf{E}}}
\DeclareMathOperator{\poly}{poly}
\DeclareMathOperator{\polylog}{polylog}
\begin{document}

\author[David G. Harris]{David G. Harris$^1$}
\setcounter{footnote}{0}
\addtocounter{footnote}{1}
\footnotetext{Department of Computer Science, University of Maryland, 
College Park, MD 20742. 
Research supported in part by NSF Awards CNS-1010789 and CCF-1422569.
Email: \texttt{davidgharris29@gmail.com}.}

\title[Fooling polylogarithmic juntas and the Lov\'{a}sz Local Lemma]{Deterministic parallel algorithms for fooling polylogarithmic juntas and the Lov\'{a}sz Local Lemma}

\begin{abstract}
Many randomized algorithms can be derandomized efficiently using either the method of conditional expectations or probability spaces with low (almost-) independence. A series of papers, beginning with Luby (1993) and continuing with Berger \& Rompel (1991) and Chari et al. (2000), showed that these techniques can be combined to give deterministic parallel algorithms for combinatorial optimization problems involving sums of $w$-juntas. We improve these algorithms through derandomized variable partitioning, reducing the processor complexity to essentially independent of $w$ and time complexity to linear in $w$.

As a key subroutine, we give a new algorithm to generate a probability space which can fool a given set of neighborhoods. Schulman (1992) gave an NC algorithm to do so for neighborhoods of size $w \leq O(\log n)$. Our new algorithm is in $\text{NC}^1$, with essentially optimal time and processor complexity, when $w = O(\log n)$; it remains in NC  up to $w = \polylog(n)$. This answers an open problem of Schulman.

One major application of these algorithms is an NC algorithm for the Lov\'{a}sz Local Lemma. Previous NC algorithms, including the seminal algorithm of Moser \& Tardos (2010) and the work of Chandrasekaran et. al (2013), required that (essentially) the bad-events could span only $O(\log n)$ variables; we relax this to $\polylog(n)$ variables. We use this for an $\text{NC}^2$ algorithm for defective vertex coloring, which works for arbitrary degree graphs.
\end{abstract}

\maketitle

This is an extended version of a paper appearing in the Proceedings of the 28th ACM-SIAM Symposium on Discrete Algorithms (SODA) 2017.

\section{Introduction}
Many algorithms can be formulated as optimization problems, in which we seek to maximize or minimize a function of the form $S(x) = \sum_j f_j(x)$ over $x \in \{0, 1 \}^n$; we refer to the summands $f_j$ as \emph{objective functions}. These may correspond to a scoring function measuring solution quality, or they might be indicators for bad events to avoid.  We will consider cases in which each $f_j$ depends on at most $w$ coordinates of $x$; this is known as a \emph{$w$-junta}.  

This often leads to randomized algorithms with the following structure: if $X$ is drawn from a suitable distribution (say independent fair coins), then $\bE[S(X)] = \sum_j\bE[f_j(X)] = S_0$. Obviously, there exists some $x \in \{0, 1 \}^n$ with the property $S(x) \geq S_0$. Usually we can find such an $x$ with a randomized algorithm, since a ``typical'' vector $x$ has the property $S(x) \approx S_0$. 

A key derandomization problem is thus to find such $x$ deterministically. There are two main paradigms to do so: conditional expectations and low-independence probability spaces. To use conditional expectations, we gradually assign the bits of $X$ to $0$ or $1$, ensuring that at each step the conditional expectation $\bE[S(X)]$ increases. To solve this by low-independence, we draw the random variables $X$ from a probability space which has the same $w$-wise marginal distributions as the independent space $\{0, 1 \}^n$. Each of these methods has disadvantages. The method of conditional expectations is inherently sequential: decisions about some $x_i$ depend on the assignment of $x_1, \dots, x_{i-1}$.  The method of low-independence can easily be parallelized, but leads to large processor counts as each element of the probability space requires a separate processor.

A hybrid approach was proposed by Luby in \cite{luby-old} and extended by Berger \& Rompel \cite{berger-simulating}, which combines parallelism with low processor complexity. Their key observation is that probability spaces with polylog-wise-independence can be described as linear codes over $GF(2)$, of length $\polylog(n)$. The method of conditional expectations can be applied to the code itself, not directly to the solution vector.  The main limitation of this hybrid algorithm is that, at least in its simplest form, it has processor complexity which is \emph{exponential} in $w$. Berger \& Rompel describe a limited number of problem-specific techniques to overcome this.  In this paper, we will investigate a more general method of dealing with this computational bottleneck, based on a derandomization of random variable partitioning.

\subsection{Alternate derandomization approaches} We mention three other general approaches to derandomization, and the ways in which they fail to cover some key applications. The first approach is to use a probability space which is $\epsilon$-approximately $w$-wise-independent (see Definition~\ref{approx-def}). Such a space is significantly smaller than a fully-independent space. If the objective functions $f_j$ were simply monomial functions, or more generally had small decision tree complexity, then their expectation would differ only slightly between an $\epsilon$-approximately independent and a fully independent space. However, in many applications, $f_j$ may be significantly more complex and the overall bias can become as large as $2^w \epsilon$ --- requiring $\epsilon$ to be super-polynomially small for $w = \omega(\log n)$, and requiring the probability space to be too large to explore exhaustively.

A related approach is one of Schulman \cite{schulman} for generating a probability space which fools a given list of neighborhoods (see Section~\ref{fourier-sec} for a formal definition). If $\Omega$ fools the neighborhoods corresponding to each $f_j$, then there is guaranteed to exist some $x \in \Omega$ with $S(x) \geq S_0$; if $\Omega$ has small support then this leads to an efficient algorithm. Although the space $\Omega$ can be significantly smaller than a fully $w$-wise-independent space, it is still super-polynomial for $w = \omega(\log n)$ and so this approach does not give NC algorithms. Fooling neighborhoods will nonetheless be a key building block of our algorithms.
 
Finally, the derandomization technique of Sivakumar \cite{sivakumar} can be applied when the functions $f_j$ are computed via automata on a polynomial-sized state-space. One can build a relatively small probability distribution which fools a polynomial number of such automata. However, one critical aspect of Sivakumar's method is that the multiple automata must all process the input bits \emph{in the same order.} Many applications lack this property, most notably the algorithms for the Lov\'{a}sz Local Lemma. Another disadvantage of Sivakumar's method is its high processor complexity (on the order of $n^{20}$ or more). 

\subsection{Our contributions and overview}
In Section~\ref{r1fAheavy-codeword}, we present a new algorithm to produce probability spaces fooling a list of neighborhood or a list of Fourier characters over $GF(2)$ (a closely related problem). These are important subroutines needed for the algorithmic approach of Berger \& Rompel \cite{berger-simulating}. The algorithm we develop has significantly lower complexity than previous algorithms. In particular, when the neighborhood size $w$ is  $w = O(\log n)$, then we obtain an $\text{NC}^1$ algorithm and when $w = \polylog(n)$ we obtain an NC algorithm.

In addition to their use in the Berger-Rompel framework, these algorithms can be used for some other derandomization problems. For instance, we obtain a near-optimal algorithm to find a codeword of Hamming weight at least $L/2$ in a length-$L$ binary code, a toy derandomization problem introduced by \cite{naor2}.

In Section~\ref{r1fmain-alg-sec}, we consider fooling sums of $w$-juntas. As we have discussed, the main bottleneck in the Berger-Rompel algorithm \cite{berger-simulating} is the exponential processor dependency on $w$. We give an algorithm based on random variable partitioning, which is then derandomized. This approach makes the processor complexity independent of $w$ while giving a \emph{linear} time dependency on $w$. This allows us to handle, for the first time in NC, many applications with $w = \polylog (n)$.

We describe a sample application in Section~\ref{r1frainbow-sec} to rainbow hypergraph coloring. Given a $d$-uniform hypergraph on $m$ edges, we wish to $d$-color the vertices so that at least $m d!/d^d$ edges see all $d$ colors (as is expected in a uniform random coloring). This was an example application given in \cite{berger-simulating}. Although \cite{berger-simulating}  did not provide concrete complexity bounds, their algorithm appears to require $O(\log^4 m)$ time and $O(m^{1 + \log 2})$ processors. We reduce this to roughly $\tilde O(\log^2 m )$ time and $O(m)$ processors. This illustrates how our derandomization procedure has been optimized for processor and time complexity, so that it can be beneficial even for applications with prior NC algorithms.

In Section~\ref{r1fmt-sec} we consider the seminal Moser-Tardos algorithm for the Lov\'{a}sz Local Lemma \cite{mt}. In this setting, one seeks to avoid a set of ``bad events'', which are boolean functions of the variables $x_1, \dots, x_n$. There have been some NC versions of this algorithm, appearing in the original paper of Moser \& Tardos along with some extensions in \cite{mt2, mt3}. These algorithms are somewhat limited in the types of problems they can handle, with restrictive conditions on the decision-tree-complexity of the bad-events. We greatly expand the scope of these algorithms to give NC algorithms in almost any application where the bad-events are $w$-juntas for $w = \polylog(n)$. 

In Section~\ref{r1fmt-example}, we apply our LLL derandomization to two graph theory applications. First,  \emph{defective vertex coloring}: given a graph $G$ of maximum degree $\Delta$, we achieve an $\tilde O(\log^2 n)$-time algorithm for a $k$-defective vertex coloring with $c = O(\Delta/k)$ colors. Notably, although our general LLL algorithm only applies to bad-events which span a polylogarithmic number of variables (in particular here, when $\Delta \leq \polylog(n)$), our coloring algorithm works for arbitrary values of $\Delta$ and $k$. The second application is to domatic partition; here we only get an NC algorithm for graphs of degree $k = \text{polylog}(n)$.

\subsection{Notation and conventions}
All our algorithms will be described in the deterministic EREW PRAM model. In this model, we say an algorithm $\mathcal A$ has \emph{complexity} $(C_1, C_2)$ if it runs in $C_1$ time and $C_2$ processors. In order to focus on the leading-order terms, we often use a looser metric which we refer to as \emph{quasi-complexity}. We say $\mathcal A$ has \emph{quasi-complexity} $(C_1, C_2)$ if it has complexity $( \tilde O(C_1), C_2^{1+o(1)})$, where we define $\tilde O(t) = t (\log t)^{O(1)}$.\footnote{It is very difficult to obtain estimates which are finer than this; small changes in the computational model or the input data (for example, the register size, the precisions of the real-valued weights, or the atomic arithmetic operations) can change the runtime by hard-to-track polyloglog factors.}

We let $[n]$ denote the set $\{1, \dots, n \}$. For any collection of sets $E \subseteq 2^{[n]}$, we define the \emph{width} of $E$ as $\text{width}(E) = \max_{e \in E} |e|$. For a probability space $\Omega$, we use the notation $|\Omega|$ to mean the cardinality of the support of $\Omega$. Given a set $X$, we write $x \sim X$ to mean that $x$ is drawn from the uniform probability distribution on $X$.  For a boolean predicate $P$, we use the Iverson notation where $[P]$ is one if $P$ is true and zero otherwise. 

We write $GF(q)$ for the finite field with $q$ elements. In particular, the field $GF(2^s)$ can be represented as $s$-bit binary vectors, and addition in the field is taken mod 2, the same as coordinatewise XOR. We write this addition operation as $\oplus$.

Throughout, $\log x$ refers to the natural logarithm and $\log_2 x$ to the base-two logarithm.

\section{Fooling neighborhoods}
\label{r1fAheavy-codeword}

\subsection{Fourier characters, neighborhoods, and codes}
\label{fourier-sec}
 Many probability spaces satisfying (approximate) independence conditions are built on top of codes over $GF(2)$. These are closely related to Fourier characters over $GF(2)$. We begin by reviewing some definitions and basic results.

\begin{definition}
A \emph{Fourier character over $GF(2)$} is a function $\chi_e: GF(2)^n \rightarrow \{-1, 1\}$ defined by $\chi_e(x) = (-1)^{\sum_{i \in e} x_i}$, for some $e \subseteq [n]$
\end{definition}

We say that a probability space $\Omega$ is \emph{unbiased for $e$} if $\bE_{X \sim \Omega}[\chi_e(X)] = \bE_{X \sim \{0,1 \}^n}[ \chi_e(X)]$. This condition trivially holds for $e = \emptyset$ (in which case $\chi_e(X) = 1$ with probability one). For $e \neq \emptyset$, we have $\bE_{X \sim \{0,1 \}^n}[ \chi_e(X) ] = 0$ and so the condition is that $\bE_{X \sim \Omega} [ \chi_e(X)] = 0$ as well. 

We say $\Omega$ \emph{fools $e$} if $\Omega$ is unbiased for every subset $f \subseteq e$. In this context, the set $e$ is referred to as a \emph{neighborhood} and this condition is also referred to as \emph{fooling neighborhood $e$.} Likewise, we say $\Omega$ is unbiased for (respectively, fools) a list $E = \{e_1, \dots, e_m \}$ if $\Omega$ is unbiased for (respectively, fools) each $e_1, \dots, e_m$. 

Our notation and definitions will differ slightly from the standard use in coding theory. Given a list of vectors $A(1), \dots, A(n) \in GF(2)^L$,  we refer to the list $A = A(1), \dots, A(n)$ as a \emph{code of length $L$ and size $n$}. We use the following notational shortcut throughout: if $x \in \{0, 1 \}^n$ and $e \subseteq [n]$ is a set, then we define $x(e) = \bigoplus_{i \in e} x_i$. If $A$ is a collection of $n$ binary vectors and $e \subseteq [n]$, then $A(e)$ is the binary vector defined coordinatewise by $\bigoplus_{i \in e} A(i)$.

\begin{proposition}
\label{r1ffourier-prop}
For any boolean function $g: \{0, 1 \}^n \rightarrow \mathbf R$, there are weights $\gamma_e \in \mathbf R$, where $e$ ranges over all $2^n$ subsets of $[n]$, such that for all $x \in \{0, 1 \}^n$ we have
$$
g(x) = \sum_{e \subseteq [n]} \gamma_e \chi_e(x)
$$

The weights $\gamma$ can be determined with quasi-complexity $(n, 2^n)$.
\end{proposition}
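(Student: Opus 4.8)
The plan is to use the standard Fourier inversion formula over $GF(2)^n$ and then argue that the resulting linear transformation can be evaluated efficiently in parallel. First I would recall that the characters $\{\chi_e\}_{e \subseteq [n]}$ form an orthogonal basis for the space of real-valued functions on $\{0,1\}^n$ under the inner product $\langle f, g \rangle = 2^{-n} \sum_{x} f(x) g(x)$; indeed $\langle \chi_e, \chi_{e'} \rangle = [e = e']$, which follows from the identity $\chi_e(x) \chi_{e'}(x) = \chi_{e \triangle e'}(x)$ and the fact that $\sum_x \chi_f(x) = 0$ for $f \neq \emptyset$. Consequently the coefficients are forced to be $\gamma_e = \langle g, \chi_e \rangle = 2^{-n} \sum_{x \in \{0,1\}^n} g(x) \chi_e(x)$, and expanding $g$ in this basis reproduces $g$ exactly at every point. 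This establishes existence (and uniqueness) of the weights.

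The algorithmic content is that the map $g \mapsto \gamma$ is exactly (a scaling of) the Walsh--Hadamard transform on $2^n$ points, which factors as a product of $n$ ``butterfly'' stages, each touching all $2^n$ entries. I would lay out the vector $(g(x))_{x \in \{0,1\}^n}$ across $2^n$ processors and perform $n$ rounds; in round $i$, processors holding values that differ only in coordinate $i$ combine their values via the $2\times 2$ Hadamard step $(a,b) \mapsto (a+b, a-b)$, and a final division by $2^n$ produces $\gamma$. Each round is a single parallel step of local arithmetic plus an EREW-compatible data exchange between paired processors (the pairing in round $i$ is a fixed permutation, so no concurrent reads are needed), giving $O(n)$ arithmetic depth; accounting for the $\tilde O(1)$ overhead of arithmetic on the real weights and of routing in the PRAM model yields quasi-complexity $(n, 2^n)$.

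The one genuinely delicate point — and the part I'd expect to need the most care — is the PRAM bookkeeping: confirming that the butterfly communication pattern is realizable in EREW without concurrent reads (it is, since each stage is a perfect matching on processor indices, implemented by flipping bit $i$ of the index), and that the $\tilde O$ factors absorbing real-arithmetic precision and any index computation are legitimate under the paper's quasi-complexity convention. The mathematics of the inversion formula itself is routine; everything hinges on packaging it as a bounded-depth circuit of cheap operations. I would therefore keep the write-up brief on the Fourier side and slightly more explicit about the iterated-Hadamard implementation and its depth/processor count.
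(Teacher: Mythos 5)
Your proposal is correct and follows essentially the same route as the paper, which also defines $\gamma_e = 2^{-n}\sum_y \chi_e(y) g(y)$ and invokes the Fast Walsh--Hadamard Transform for the stated quasi-complexity. Your write-up simply makes explicit the orthogonality argument and the butterfly/EREW implementation details that the paper leaves implicit.
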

\begin{proof}
This is the Discrete Fourier Transform over $GF(2)$. Define $\gamma_e = 2^{-n} \sum_{y \in \{0,1 \}^n} \chi_e(y) g(y)$; these weights can be computed efficiently using the well-known Fast Walsh-Hadamard Transform algorithm.
\end{proof}

\begin{proposition}
\label{fool-nbhd-prop}
If a probability space $\Omega$ fools the neighborhood $e \subseteq [n]$, then for all $z \in \{0, 1 \}^n$ we have
$$
P_{x\sim \Omega}( \bigwedge_{i \in e} x_i = z_i) = 2^{-|e|}
$$
\end{proposition}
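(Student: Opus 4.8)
The plan is to expand the indicator of the event $\bigwedge_{i\in e} x_i = z_i$ into Fourier characters and then apply the fooling hypothesis term by term. First I would observe that for a single coordinate $\mathbf{1}[x_i = z_i] = \tfrac12\bigl(1 + (-1)^{x_i + z_i}\bigr)$, and so taking the product over $i \in e$ gives
$$
\mathbf{1}\Bigl[\,\bigwedge_{i\in e} x_i = z_i\,\Bigr] \;=\; 2^{-|e|}\prod_{i\in e}\bigl(1 + (-1)^{x_i+z_i}\bigr).
$$
Expanding the product over all subsets $f\subseteq e$ and pulling out the $z$-dependent factors yields
$$
\mathbf{1}\Bigl[\,\bigwedge_{i\in e} x_i = z_i\,\Bigr] \;=\; 2^{-|e|}\sum_{f\subseteq e}(-1)^{\sum_{i\in f}z_i}\,\chi_f(x),
$$
using that $\prod_{i\in f}(-1)^{x_i} = \chi_f(x)$ by definition of the Fourier character.

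Next I would take expectations over $x\sim\Omega$ and use linearity of expectation. Since $\Omega$ fools $e$, by definition it is unbiased for every subset $f\subseteq e$, i.e.\ $\bE_{x\sim\Omega}[\chi_f(x)] = \bE_{x\sim\{0,1\}^n}[\chi_f(x)]$ for each such $f$. Hence
$$
P_{x\sim\Omega}\Bigl(\,\bigwedge_{i\in e} x_i = z_i\,\Bigr) \;=\; 2^{-|e|}\sum_{f\subseteq e}(-1)^{\sum_{i\in f}z_i}\,\bE_{x\sim\{0,1\}^n}[\chi_f(x)].
$$
Finally, under the uniform distribution on $\{0,1\}^n$ the expectation $\bE_{x\sim\{0,1\}^n}[\chi_f(x)]$ equals $1$ when $f = \emptyset$ and $0$ otherwise, as already noted in the discussion preceding the proposition. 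Thus only the $f=\emptyset$ term survives and the right-hand side collapses to $2^{-|e|}$, as claimed. (Equivalently, one could skip this last computation by observing that the same Fourier expansion evaluated under $x\sim\{0,1\}^n$ gives $2^{-|e|}$ directly, so the two probabilities must agree.)

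There is no serious obstacle here. The only points needing a moment's care are the bookkeeping in the product expansion — keeping the sign factors $(-1)^{\sum_{i\in f}z_i}$ correct — and invoking the \emph{precise} definition of ``fools'': it demands unbiasedness for \emph{all} subsets $f\subseteq e$, not merely for $e$ itself, and this is exactly what licenses the term-by-term replacement of $\bE_{x\sim\Omega}[\chi_f]$ by $\bE_{x\sim\{0,1\}^n}[\chi_f]$.
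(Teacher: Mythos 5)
Your proof is correct and follows essentially the same route as the paper: expand the indicator of $\bigwedge_{i\in e} x_i = z_i$ into Fourier characters supported on subsets $f \subseteq e$, replace each $\bE_{x\sim\Omega}[\chi_f]$ by its uniform-distribution counterpart using the fooling hypothesis, and conclude the probability equals $2^{-|e|}$. The only cosmetic difference is that you compute the Fourier weights explicitly rather than citing the general decomposition (Proposition~\ref{r1ffourier-prop}), which the paper uses as a black box.
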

\begin{proof}
Let $g(x) = [ \bigwedge_{i \in e} x_i = z_i ]$. By Proposition~\ref{r1ffourier-prop}, there exist weights $\gamma_f$, where $f$ ranges over subsets of $e$, such that $g(x) = \sum_{f \subseteq e} \gamma_f \chi_f(x)$. Then
$$
\bE_{X \sim \Omega}[g(X)] = \sum_{f \subseteq e} \gamma_f \bE_{X \sim \Omega} [ \chi_f(X) ] = \sum_{f \subseteq e} \gamma_f \bE_{X \sim \{0,1 \}^n} [\chi_f(X)] = \bE_{X \sim \{0,1 \}^n}[ g(X) ] = 2^{-|e|}.
$$
\end{proof}

The main connection between codes, Fourier characters, and fooling neighborhoods comes from the following construction:
\begin{definition}
Given a code $A$ of length $L$, define the probability space $\Omega^A$ as follows: draw a vector $y \sim GF(2)^L$, and set $X_i = A(i) \cdot y$ for $i=1, \dots, n$, where $\cdot$ is the inner product over $GF(2)^L$. Note that $|\Omega^A| = 2^L$.
\end{definition}

\begin{definition}[$E$-unbiased code]
The code $A$ is an \emph{$E$-unbiased code} if $A(e) \neq \vec 0$ for all non-empty sets $e \in E$. 
\end{definition}

\begin{proposition}
If $A$ is an $E$-unbiased code, then $\Omega^A$ is unbiased for every $e \in E$.
\end{proposition}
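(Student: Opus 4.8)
The plan is to compute the expectation $\bE_{X \sim \Omega^A}[\chi_e(X)]$ directly and match it against the value in the fully independent space. Fix a set $e \in E$. If $e = \emptyset$ then $\chi_e \equiv 1$ in both spaces and there is nothing to prove, so assume $e$ is non-empty, in which case we must show $\bE_{X \sim \Omega^A}[\chi_e(X)] = 0$. By construction, a sample $X$ from $\Omega^A$ is obtained from a uniformly random $y \sim GF(2)^L$ via $X_i = A(i) \cdot y$, so the exponent appearing in the character is
$$
\sum_{i \in e} X_i \equiv \bigoplus_{i \in e} \bigl( A(i) \cdot y \bigr) \pmod 2 .
$$

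The first real step is to use bilinearity of the $GF(2)$ inner product to pull the XOR inside the inner product: $\bigoplus_{i \in e} \bigl(A(i) \cdot y\bigr) = \bigl( \bigoplus_{i \in e} A(i) \bigr) \cdot y = A(e) \cdot y$, using the shorthand $A(e)$ introduced above. Hence $\chi_e(X) = (-1)^{A(e) \cdot y}$ and
$$
\bE_{X \sim \Omega^A}[\chi_e(X)] = \bE_{y \sim GF(2)^L}\bigl[ (-1)^{A(e) \cdot y} \bigr] .
$$

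The final step is the standard fact that a nontrivial linear functional on $GF(2)^L$ is balanced: since $A$ is an $E$-unbiased code and $e$ is non-empty, we have $A(e) \neq \vec 0$, so the map $y \mapsto A(e) \cdot y$ is a surjective linear map onto $GF(2)$ and therefore takes the values $0$ and $1$ on exactly $2^{L-1}$ points each; consequently the expectation above is $0$, which equals $\bE_{X \sim \{0,1\}^n}[\chi_e(X)]$, as required. I do not expect any genuine obstacle here; the only points requiring a little care are handling the $e = \emptyset$ case separately (so that the "nontrivial functional" argument is legitimate) and being precise about the reduction mod $2$ when passing between the $\{-1,1\}$-valued character and the $GF(2)$-valued inner products.
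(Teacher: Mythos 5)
Your proof is correct and follows the same route as the paper: rewrite $\chi_e(X)$ as $(-1)^{A(e)\cdot y}$ via linearity of the $GF(2)$ inner product, then use that $A(e)\neq \vec 0$ makes the functional $y \mapsto A(e)\cdot y$ balanced, so the expectation vanishes. The only difference is that you spell out the trivial $e=\emptyset$ case and the balancedness argument, which the paper leaves implicit.
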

\begin{proof}
Let $e \in E$ with $e \neq \emptyset$. We have:
$$
\bE_{X \sim \Omega^A}[\chi_e(X)] = 2^{-L} \sum_{y \in GF(2)^L} (-1)^{\sum_{i \in e} A(i) \cdot y} = 2^{-L} \sum_{y \in GF(2)^L} (-1)^{A(e) \cdot y} = 0
$$
\end{proof}

\subsection{Unbiased codes and fooling neighborhoods} We begin with an algorithm to construct a code which is  $E$-unbiased for a given set $E \subseteq 2^{[n]}$; we later extend this to fool neighborhoods.  This algorithm has two phases: first, we show how to find a code which is unbiased for \emph{most} of a given set $E$; we then bootstrap this to be unbiased on all of $E$. 

We begin with a simple result about multivariate polynomials over a finite field:
\begin{proposition}
\label{r1fnon-zero-poly1}
Let $p(z_1, \dots, z_k)$ be a non-zero polynomial over $GF(2^s)$, with degree at most $d$ in each variable separately. For any $\alpha \in GF(2^s)$, note that $p(\alpha, z_2, \dots, z_k)$ is a $k-1$ variable polynomial over $GF(2^s)$. If $\alpha \sim GF(2^s)$, then $p(\alpha, z_2, \dots, z_k) \equiv 0$ with probability at most $d/2^s$.
\end{proposition}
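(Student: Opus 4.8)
The plan is to reduce the multivariate statement to the familiar univariate fact that a non-zero degree-$d$ polynomial over a field has at most $d$ roots. The key idea is to separate off the variable $z_1$ and track a single monomial in the remaining variables $z_2, \dots, z_k$.

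Concretely, I would write $p$ as a polynomial in $z_2, \dots, z_k$ whose coefficients live in the ring $GF(2^s)[z_1]$; that is,
$$
p(z_1, z_2, \dots, z_k) = \sum_{\vec e} c_{\vec e}(z_1)\, z_2^{e_2} \cdots z_k^{e_k},
$$
where the sum is over exponent vectors $\vec e = (e_2, \dots, e_k)$ with each $e_i \leq d$, and each $c_{\vec e} \in GF(2^s)[z_1]$ has degree at most $d$. Since $p$ is not the zero polynomial, there is at least one exponent vector $\vec e^\star$ for which $c_{\vec e^\star}$ is a non-zero univariate polynomial. I would then observe that, after substituting a field element $\alpha$ for $z_1$, the coefficient of the monomial $z_2^{e^\star_2} \cdots z_k^{e^\star_k}$ in $p(\alpha, z_2, \dots, z_k)$ is exactly $c_{\vec e^\star}(\alpha)$. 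Hence a necessary condition for $p(\alpha, z_2, \dots, z_k)$ to be the zero polynomial is that $c_{\vec e^\star}(\alpha) = 0$.

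The last step is to bound $\Pr_{\alpha \sim GF(2^s)}[c_{\vec e^\star}(\alpha) = 0]$. Because $c_{\vec e^\star}$ is a non-zero polynomial of degree at most $d$ over the field $GF(2^s)$, it has at most $d$ roots in that field, so this probability is at most $d/2^s$. Combining with the implication from the previous paragraph gives $\Pr_{\alpha}[\, p(\alpha, z_2, \dots, z_k) \equiv 0 \,] \leq d/2^s$, as claimed.

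I do not anticipate a genuine obstacle here — this is essentially a one-variable slice of the Schwartz–Zippel argument. The only point requiring a little care is the bookkeeping in choosing $\vec e^\star$: one must be sure that "non-zero as a multivariate polynomial" forces some univariate coefficient polynomial $c_{\vec e}$ to be non-zero, and that the degree bound "at most $d$ in each variable separately" is precisely what guarantees $\deg c_{\vec e^\star} \leq d$ so that the root count is $d$ rather than something larger.
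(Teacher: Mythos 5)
Your proof is correct and is essentially identical to the paper's: both decompose $p$ over the monomials in $z_2,\dots,z_k$ with coefficients in $GF(2^s)[z_1]$, fix one non-zero coefficient polynomial, and bound the probability that it vanishes at a random $\alpha$ by the univariate root count $d/2^s$. No issues.
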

\begin{proof}
Factor $p$ as 
$$
p(z_1, \dots, z_k) = \sum_{t_2, \dots, t_k} q_{t_2,\dots, t_k}(z_1) z_2^{t_2} \dots z_k^{t_k},
$$
where $t_2, \dots, t_k$ range over non-negative integers.  Each such polynomial $q$ has degree $d$, and they are not all zero (else $p \equiv 0$). Let $t_2, \dots, t_k$ be such that $q_{t_2,\dots, t_k} \neq 0$; with probability at most $d/2^s$ we have $q_{t_2, \dots, t_k} (\alpha) = 0$. But if $q_{t_2, \dots, t_k}(\alpha) \neq 0$, then $p(\alpha, z_2, \dots, z_k)$ has a non-zero coefficient of $z_2^{t_2} \dots z_k^{t_k}$, hence $p(\alpha, z_2, \dots, z_k) \not \equiv 0$.
\end{proof}

\begin{proposition}
\label{r1fAchoose-prop2}
Let $E \subseteq 2^{[n]}$. Given integer parameters $k \geq 1, s \geq 0$, there is an algorithm to construct a code $A$ of length $s$, such that at most $k n^{1/k} 2^{-s} |E|$ sets $e \in E$ have $A(e) = 0$. This procedure has quasi-complexity $(k(s+ \log(mn)), 2^s W)$, where $m = |E|$ and $W = n + \sum_{e \in E} |e|$.
\end{proposition}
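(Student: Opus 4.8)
The plan is to build the code $A$ one coordinate at a time using the method of conditional expectations, where the "randomness" at each step is a field element $\alpha \in GF(2^s)$ identifying one column of $A$. The key idea is to encode the event "$A(e) = \vec 0$" as a polynomial identity and then use Proposition~\ref{r1fnon-zero-poly1} to argue that a random choice of the next column kills many such events. Concretely, identify the $n$ code symbols $A(1),\dots,A(n)$ with elements of $GF(2^s)$, and let the code of length $s$ be given by the bit-expansion of the sequence $A(i) = g(\beta_i)$ where $g$ is a degree-$(k-1)$ (or thereabouts) polynomial over $GF(2^s)$ to be determined and $\beta_1,\dots,\beta_n$ are fixed distinct evaluation points; writing $g$ in terms of its $k$ coefficients $z_1,\dots,z_k \in GF(2^s)$, the quantity $A(e) = \bigoplus_{i\in e} g(\beta_i)$ is a fixed nonzero linear form $\ell_e(z_1,\dots,z_k)$ in the coefficients (nonzero because the $\beta_i$ are distinct and $|e|\ge 1$ forces a nontrivial combination of Vandermonde rows — this uses that a nonzero polynomial of degree $<n$ has fewer than $n$ roots, which is why we can take degree roughly $\log_2 n / \log_2 (\text{something})$; more carefully one picks the degree so that $n \le (2^s)^{\deg+1}$, i.e. $\deg + 1 = \lceil \log n / (s\log 2)\rceil$, giving the $n^{1/k}$-type factor when $k$ is the number of coefficients).

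With that setup, the whole quantity $\sum_{e \in E}[\ell_e(z) = 0]$ is what we want to keep small, but we cannot evaluate it symbolically; instead we use a potential function. For a partial assignment $z_1=\alpha_1,\dots,z_j=\alpha_j$, define $\Phi_j = \sum_{e\in E} \Pr_{\alpha_{j+1},\dots,\alpha_k}[\ell_e(\alpha_1,\dots,\alpha_j,\alpha_{j+1},\dots,\alpha_k) = 0]$, where the remaining coordinates are independent uniform over $GF(2^s)$. Initially $\Phi_0 = |E| \cdot 2^{-s} \cdot (\text{number of "bad" coefficient vectors per }e)$; since $\ell_e$ is linear and nonzero in $k$ variables, the fraction of assignments killing it is exactly $2^{-s}$, so $\Phi_0 = |E| 2^{-s}$ — but we actually want the degree-$d$ version and the bound $k n^{1/k} 2^{-s}|E|$, so I would instead track, per variable, the probability that $\ell_e$ becomes identically zero as a polynomial in the remaining variables, invoke Proposition~\ref{r1fnon-zero-poly1} to get that each conditioning step multiplies the "surviving mass" by at most $d/2^s$ per variable while there are $k$ variables and the combinatorial degree bound contributes the $n^{1/k}$, summing over the $k$ stages to get the stated $k n^{1/k} 2^{-s}|E|$ bound. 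At each stage $j$, we evaluate $\Phi_j$ for each of the $2^s$ candidate values of $\alpha_{j+1}$ in parallel, and pick the value minimizing it; since averaging $\Phi_{j+1}$ over $\alpha_{j+1}$ recovers $\Phi_j$, the minimum is at most $\Phi_j$, so $\Phi_k \le \Phi_0 \cdot (\text{the degree losses})$, and $\Phi_k$ is exactly the number of $e\in E$ with $A(e)=\vec 0$.

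The complexity accounting is the routine-but-delicate part. There are $k$ stages; in each stage we must, for every $e\in E$ and every candidate $\alpha \in GF(2^s)$, determine whether the relevant residual polynomial vanishes — this is where $W = n + \sum_{e\in E}|e|$ enters (each $e$ costs $O(|e|)$ field operations to form $A(e)$ via a prefix-XOR over its elements, and building the $A(i)$ from the coefficients costs $O(n\,\mathrm{poly}(k))$ via Horner/parallel polynomial evaluation) and where the $2^s$ factor multiplies the processor count. Each field operation in $GF(2^s)$ has quasi-complexity $(\mathrm{polylog}(2^s), \mathrm{poly}(s))$, absorbed into the $\tilde O$ and the $1+o(1)$ exponent. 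Summing/minimizing over $2^s$ candidates and over $m=|E|$ neighborhoods is a parallel reduction costing $O(\log(mn) + s)$ time per stage, hence $O(k(s+\log(mn)))$ overall, and $2^s W$ processors up to lower-order factors — matching the claimed quasi-complexity $(k(s+\log(mn)), 2^s W)$.

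The main obstacle I anticipate is getting the degree bookkeeping exactly right: showing that the linear/low-degree forms $\ell_e$ are genuinely nonzero (the Vandermonde non-degeneracy, which pins down the relationship between the polynomial degree, $s$, $k$, and the $n^{1/k}$ factor), and then threading Proposition~\ref{r1fnon-zero-poly1} through $k$ successive conditionings so that the total failure mass is $k n^{1/k} 2^{-s}|E|$ rather than something exponential in $k$. The parallelization of the conditional-expectations step itself is standard once the potential $\Phi_j$ is identified and shown to be efficiently computable for all candidates simultaneously.
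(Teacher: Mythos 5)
There is a genuine gap, and it lies in the very first step of your construction. You set $A(i) = g(\beta_i)$ for a univariate polynomial $g$ with $k$ random coefficients $z_1,\dots,z_k \in GF(2^s)$ and fixed distinct evaluation points $\beta_i$, and you claim that $A(e) = \bigoplus_{i \in e} g(\beta_i)$ is a \emph{nonzero} linear form $\ell_e(z)$ because the $\beta_i$ are distinct. Vandermonde independence only guarantees that any $k$ columns of the $k \times n$ matrix $(\beta_i^{j-1})$ are independent, so $\ell_e \not\equiv 0$ is guaranteed only when $|e| \leq k$. In the proposition $|e|$ can be as large as $n$, and as soon as $n > ks$ there must exist a nonempty $e \subseteq [n]$ whose indicator vector satisfies $\sum_{i \in e} \beta_i^{j} = 0$ for all $j = 0,\dots,k-1$ (each field equation is only $s$ binary constraints); for such an $e$ one has $\ell_e \equiv 0$, hence $A(e) = 0$ for \emph{every} choice of the seed, and no conditional-expectation argument can help. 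Since $E$ is adversarial (and in the intended application $k \approx \log\log m$ while sets in $E$ are huge), this breaks the claimed bound. Relatedly, in your linear setup each coefficient variable appears with degree $1$, so invoking Proposition~\ref{r1fnon-zero-poly1} can only lose a factor $1/2^s$ per stage; the stated $n^{1/k}$ factor has no source in your construction, which is the ``degree bookkeeping'' tension you yourself flag but do not resolve.

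The paper's proof fixes exactly this point: instead of a univariate Reed--Solomon map, it assigns to each index $i$ a \emph{distinct multivariate monomial} $\mu_i(z_1,\dots,z_k)$ over $GF(2^s)$ with per-variable degree at most $d = \lceil n^{1/k} - 1\rceil$ (there are $(d+1)^k \geq n$ such monomials), and sets $A(i) = \mu_i(\alpha_1,\dots,\alpha_k)$. Then $\mu_e = \sum_{i \in e} \mu_i$ is a sum of distinct monomials and is therefore automatically a nonzero polynomial for every nonempty $e$, with per-variable degree at most $d$ --- this is where the $n^{1/k}$ comes from. The rest of your outline (fix $\alpha_1,\dots,\alpha_k$ one at a time, try all $2^s$ values in parallel, keep the choice under which the fewest residual polynomials $\mu_e(\alpha_1,\dots,\alpha_i,z_{i+1},\dots,z_k)$ become identically zero, losing at most a $d/2^s$ fraction per stage by Proposition~\ref{r1fnon-zero-poly1}, with complexity $(k(s+\log mn), 2^s W)$) does match the paper's argument, but it only goes through once the symbol functions are chosen so that every subset sum is a nonzero polynomial of controlled degree.
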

\begin{proof}
Let $Z$ be the set of formal monomials of the form $z_1^{u_1} \dots z_k^{u_k}$ in the ring $GF(2^s)[z_1, \dots, z_k]$, where $u_1, \dots, u_k \in \{0, \dots, d \}$ and $d = \lceil n^{1/k} - 1\rceil$. Enumerate $Z$ (in some arbitrary order) as $\mu_1, \dots, \mu_{\ell}$ where $\ell \geq n$.

For $i \in [n]$, we will define $A(i)$ to be the binary representation of $\mu_i(\alpha_1, \dots, \alpha_k)$, where $\alpha_1, \dots, \alpha_k$ will be chosen suitably from  $GF(2^s)$. For any $e \subseteq [n]$, define the polynomial $\mu_e = \sum_{i \in e} \mu_i$. By linearity, $A(e)$ is the binary representation of $\mu_e(\alpha_1, \dots, \alpha_k)$. So we need to select $\alpha_1, \dots, \alpha_k$ so that there are few sets $e \in E$ with $\mu_e(\alpha_1, \dots, \alpha_k) = 0$. We select $\alpha_1, \dots, \alpha_k$ sequentially, according to the following rule. For $i = 1, \dots, k+1$ let us define
$$
E_i = \{e \in E \mid \mu_e(\alpha_1, \alpha_2, \dots, \alpha_{i-1}, z_i, z_{i+1}, \dots, z_k) \not \equiv 0 \}
$$

By Proposition~\ref{r1fnon-zero-poly1}, if $\alpha_i$ is chosen uniformly at random, then in expectation at most $(d/2^s) |E_i|$ sets $e \in E_i$ satisfy $\mu_e(\alpha_1, \alpha_2, \dots, \alpha_i, z_{i+1}, \dots, z_k) \equiv 0$. By enumerating over all possible values of $\alpha_i \in GF(2^s)$ to maximize the size of $|E_{i+1}|$ we ensure that
$$
|E_{i+1}| \geq (1 - d/2^s) |E_i|
$$

As $E_1 = E$, at the end of this process we have $|E_{k+1}| \geq |E| (1 - d/2^s)^k \geq |E| (1 - k n^{1/k}/2^s)$, as required. This procedure requires $k$ separate stages. In each stage, we must count $|E_{i+1}|$ for every choice of $\alpha_i \in GF(2^s)$, which requires quasi-complexity $(s + \log mn, 2^s W)$.
\end{proof}

\begin{theorem}
\label{r1fheavy-thm1}
Let $E \subseteq 2^{[n]}$. There is an algorithm with quasi-complexity $(\log mn, W)$ to find an $E$-unbiased code of length $L = \log_2 m + O( \frac{\log m}{\log \log \log m})$, where $m = |E|$ and $W = n + \sum_{e \in E} |e|$.
\end{theorem}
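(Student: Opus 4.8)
The plan is to build $A$ by \emph{concatenating} a short sequence of codes produced by Proposition~\ref{r1fAchoose-prop2}. Put $E_1 = E$; in stage $j = 1, 2, \dots, T$ apply Proposition~\ref{r1fAchoose-prop2} to the current set $E_j$ with parameters $(k_j, s_j)$ still to be chosen, obtaining a code $A_j$ of length $s_j$, and let $E_{j+1} = \{ e \in E_j : A_j(e) = \vec 0 \}$ be the sets that remain ``bad'', so that $|E_{j+1}| \leq k_j n^{1/k_j} 2^{-s_j} |E_j|$. Output $A = A_1 \| \cdots \| A_T$, which has length $L = s_1 + \cdots + s_T$. Since the coordinates of $A(e)$ are the concatenation of those of $A_1(e), \dots, A_T(e)$, we have $A(e) = \vec 0$ iff $A_j(e) = \vec 0$ for every $j$, i.e. iff $e \in E_{T+1}$; hence $A$ is $E$-unbiased as soon as $E_{T+1} = \emptyset$ (and since $|E_{T+1}|$ is an integer it suffices to drive it below $1$). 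Writing $\delta_j := s_j - \log_2 k_j - (\log_2 n)/k_j$ for the ``surplus'' of stage $j$, we have $\log_2 |E_{j+1}| \leq \log_2 |E_j| - \delta_j$, so it suffices that $\sum_j \delta_j$ exceed $\log_2 m$; then
\[
L \;=\; \log_2 m \;+\; \sum_{j=1}^{T} \Big( \log_2 k_j + \tfrac{\log_2 n}{k_j} \Big) \;+\; O(T),
\]
and the whole game is to choose $T$ and the $k_j$ so as to make the ``tax'' $\sum_j ( \log_2 k_j + (\log_2 n)/k_j )$ small while respecting the complexity budget.

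Next I would fix the parameters to be essentially uniform: for instance $T = \Theta(\log\log\log(mn))$ stages, $k_j = k := \Theta\big((\log\log\log(mn))^2\big)$ in every stage, and $s_j$ the least integer for which $\delta_j \geq \log_2 m / T$ (so the surpluses sum to a bit more than $\log_2 m$). The reason for taking $k$ a growing power of $T$ is that then the tax per stage $\log_2 k + (\log_2 n)/k$ is dominated by $(\log_2 n)/k$, and the total tax is $\tfrac{T \log_2 n}{k} + O\big(T \log\log\log\log(mn)\big)$; with the choices above, and in the regime $\log n = O(\log m)$, this is $O(\log m / \log\log\log m)$, while $s_j = O(\log m / \log\log\log(mn))$. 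Hence $L = \log_2 m + O( \log m / \log\log\log m )$, as claimed.

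It remains to verify quasi-complexity $(\log mn, W)$. For processors: Proposition~\ref{r1fAchoose-prop2} uses $(2^{s_j} W)^{1+o(1)}$ processors in stage $j$, and $2^{s_j} = 2^{O(\log m / \log\log\log(mn))} = m^{o(1)} \leq W^{o(1)}$ (using $W \geq m$), so the total is $W^{1+o(1)}$. For time: stage $j$ runs in $\tilde O\big(k_j(s_j + \log(mn))\big) = \tilde O\big((\log\log\log(mn))^2 \log(mn)\big)$, and there are $T = \Theta(\log\log\log(mn))$ of them, for a total of $\tilde O(\log mn)$; forming each $E_{j+1}$ from $E_j$ and $A_j$ costs $\tilde O(\log mn)$ time and $W^{1+o(1)}$ processors, which is absorbed.

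The main obstacle is precisely this balancing act. A single call to Proposition~\ref{r1fAchoose-prop2} powerful enough to finish (surplus $\log_2 m$) would need length $s \gtrsim \log_2 m$, hence $2^s W \gtrsim mW$ processors --- too many; so one is forced to split into several stages, but each stage carries a tax of at least $\Omega(\log\log n)$ toward the code length (that being the minimum of $\log_2 k + (\log_2 n)/k$ over $k$). Keeping (number of stages) $\times$ (tax) below the target $O(\log m / \log\log\log m)$ while simultaneously keeping each $s_j$ small (for processors) and both $k_j$ and the stage count small (for time) is what pins the parameters down to roughly the shape above, and getting all three constraints to hold at once is the delicate step. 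One caveat worth flagging: the leading term is genuinely $\log_2 m$ only when $\log n = O(\log m)$, which holds in the intended applications (there $W = n + \sum_e |e| = m^{1+o(1)}$); in general one should first delete coordinates outside $\bigcup_{e \in E} e$ so that at least $n \leq W$, and carry the ``$n$'' through the tax term.
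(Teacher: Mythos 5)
Your construction is essentially the paper's: iterate Proposition~\ref{r1fAchoose-prop2} on the residual family of still-biased sets, concatenate the stage codes, and balance $(k,s,\text{number of stages})$ so that the number of biased sets drops below $1$ while the total length stays near $\log_2 m$. Your particular parameter choices ($k \approx (\log\log\log(mn))^2$, $T \approx \log\log\log(mn)$) differ slightly from the paper's ($k = \lceil \log\log m\rceil$, $s = \lceil \log m/\log\log\log m\rceil$, $r = O(\log\log\log m)$) but sit in the same regime and the complexity accounting is fine.

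The genuine gap is the case $n \gg m$. The theorem promises $L = \log_2 m + O(\log m/\log\log\log m)$ with \emph{no} relation assumed between $n$ and $m$, but your ``tax'' contains a term of order $T\cdot(\log_2 n)/k \approx \log n/\log\log\log(mn)$, which you only control under the extra hypothesis $\log n = O(\log m)$. Your proposed remedy --- delete coordinates outside $\bigcup_{e\in E} e$ and ``carry the $n$ through the tax term'' --- does not close this: after that deletion one only has $n \leq \sum_e |e|$, which can still be arbitrarily larger than any $\poly(m)$ (e.g.\ two huge disjoint sets give $m=2$ but $n$ unbounded), and carrying the $n$ through yields a strictly weaker length bound than claimed. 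The missing idea is the paper's preprocessing for $m < n$: pick a single representative coordinate $v_e \in e$ for each $e \in E$, set $V' = \{v_e \mid e \in E\}$ (so $|V'| \leq m$), build the code on the family $E' = \{e \cap V' \mid e \in E\}$ over the coordinate set $V'$, and extend to all of $[n]$ by assigning the zero vector to coordinates outside $V'$. Since $A(i) = 0$ off $V'$, one has $A(e) = A'(e \cap V')$, and $e \cap V' \ni v_e$ is nonempty, so $E'$-unbiasedness of $A'$ gives $E$-unbiasedness of $A$; now the effective number of coordinates is at most $m$, and your parameter analysis (with $n$ replaced by $|V'| \leq m$) delivers exactly the stated bound. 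With that reduction inserted, your argument is complete and matches the paper's proof.
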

\begin{proof}
We will first discuss our algorithm under the assumption that $m \geq n$.

We form the code $A$ by concatenating $r$ separate codes $A_1, \dots, A_r$, each of length $s$, i.e., 
$$
A(\ell) = \Bigl( A_1(\ell)(1), \dots, A_1(\ell)(s), A_2(\ell)(1), \dots, A_2(\ell)(s), \dots, A_r(\ell)(1), \dots, A_r(\ell)(s) \Bigr)
$$

The resulting code $A$ has length $L = r s$. We form $A_1, \dots, A_r$ sequentially, according to the following rule. For $i = 1, \dots, r+1$ define
$$
E_i = \{ e \in E \mid A_1(e) = \dots = A_{i-1}(e) = 0 \}
$$
Note that $E_1 = E$. Select each $A_i$ in turn by applying Proposition~\ref{r1fAchoose-prop2} to the set $E_i$, so that $|E_{i+1}| \leq \epsilon |E_i|$ for $\epsilon = k n^{1/k} 2^{-s}$, where we set $k = \lceil \log \log m \rceil, s = \lceil \frac{\log m}{\log \log \log m} \rceil$ and $r = \lceil 1 + \frac{\log m}{\log(1/\epsilon)} \rceil$. 

At the end of this process, we have $|E_{r+1}| \leq \epsilon^r |E| < 1$, and hence $E_{r+1} = \emptyset$, and hence the resulting code $A$ is $E$-unbiased. 

Using the fact that $n \leq m$, we may compute $r$ as 
$$
r \leq 2 + \frac{\log m}{\log(1/\epsilon)} \leq 2 - \frac{\log m}{\log \Bigl(2^{-\frac{\log m}{\log \log \log m}} (\log \log m + 1) n^{\frac{1}{\log \log m}}\Bigr) } \leq \frac{\log \log \log m}{\log 2} + O(1)
$$

So $L = r s \leq \log_2 m + O( \frac{\log m}{\log \log \log m})$. Overall, this procedure has quasi-complexity $(r k \log m, 2^s W) = (\log (mn), W)$.

Next, we discuss how to modify this procedure when $m < n$. In that case, with a simple pre-processing step of quasi-complexity $(\log mn, W)$, we can identify for each $e \in E$ a coordinate $v_e \in e$.  Let $V' = \{ v_e \mid e \in E \}$, and define $E' = \{e \cap V' \mid e \in E \}$. Using the above procedure we find a code $A'$ of length $L \leq  \log_2 m + O( \frac{\log m}{\log \log \log m})$ which is $E'$-unbiased. We finish by setting 
$$
A(\ell) = \begin{cases}
A'(\ell) & \text{if $\ell \in V'$} \\
0 & \text{otherwise}
\end{cases}
$$
\end{proof}

\subsection{Fooling neighborhoods}
If we wish to fool a list of neighborhoods $E \subseteq 2^{[n]}$, we could apply Theorem~\ref{r1fheavy-thm1} to the set $E' = \{f \mid f \subseteq e \in E \}$. However, even forming $E'$ directly might require exponential work. Instead, we can modify our algorithm to construct an $E'$-unbiased code, without needing to list $E'$ explicitly.

As in the proof of Proposition~\ref{r1fAchoose-prop2}, we associate to each $i \in [n]$ a distinct non-zero monomial $\mu_i$ over $GF(2^s)[z_1, \dots, z_k]$, wherein each indeterminate $z_i$ has degree at most $d = \lceil n^{1/k} - 1 \rceil$. We also define $\mu_e = \sum_{i \in e} \mu_i$ for any $e \subseteq [n]$. 

We will form the code $A$ as 
$$
A(\ell) = \Bigl( \mu_\ell(\alpha_{1,1}, \dots, \alpha_{1,k}), \mu_\ell(\alpha_{2,1},\dots, \alpha_{2,k}), \dots, \mu_\ell(\alpha_{r,1}, \dots, \alpha_{r,k}) \Bigr)
$$
for appropriate values $\alpha_{i,j} \in GF(2^s)$, where $i = 1, \dots, r$ and $j = 1, \dots, k$ (where here we identify elements of $GF(2^s)$ with binary vectors of length $s$). We also define $\alpha_{(i)} = (\alpha_{i,1}, \dots, \alpha_{i,k})$, and we write $\alpha$ as shorthand for $(\alpha_{1,1}, \dots, \alpha_{r,k})$.  

For $i = 1, \dots, r$ and $j = 1, \dots, k$ and $e \in E$, let us define the potential function
\begin{align*}
F_{i,j,e}(\alpha) = \sum_{\substack{f \subseteq e \\ f \neq \emptyset}} \Bigl[ \mu_f(\alpha_{(1)}) = \dots = \mu_f(\alpha_{(i-1)}) = 0 \wedge \mu_f(\alpha_{i,1}, \dots, \alpha_{i,j}, z_{j+1}, \dots, z_{k}) \equiv 0 \Bigr]
\end{align*}

The function $F_{i,j,e}(\alpha)$ only depends on $\alpha_{1, 1}, \dots, \alpha_{1,k}, \alpha_{2,1}, \dots, \alpha_{2,k}, \dots, \alpha_{i,1}, \dots, \alpha_{i,j}$. Note here that $\mu_f(\alpha_{(1)}), \dots, \mu_f(\alpha_{(i-1)})$ are elements of $GF(2^s)$ while $\mu_f(\alpha_{i,1}, \dots, \alpha_{i,j}, z_{j+1}, \dots, z_{k})$ is a polynomial in $k-j$ variables over $GF(2^s)$.  This function $F_{i,j,e}(\alpha)$ can be regarded as a type of pessimistic estimator for the number of sets $f \subseteq e$ for which the code $A$ will be biased.

\begin{proposition}
\label{eval-f-prop}
For any values $i,j,e$, the function $F_{i,j,e}$ can be computed with complexity $(\polylog(|e|, r, s), \poly(|e|,r,s))$.
\end{proposition}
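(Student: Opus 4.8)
The plan is to show that, for fixed $\alpha$, the value $F_{i,j,e}(\alpha)$ is determined by a single linear-algebra computation over $GF(2)$. The key point is that both conditions inside the summand are $GF(2)$-linear constraints on the indicator vector $\mathbf{1}_f \in GF(2)^{|e|}$ of the set $f \subseteq e$. Write $\mu_a = z_1^{u_1^{(a)}} \cdots z_k^{u_k^{(a)}}$ for $a \in e$. Since $\mu_f = \sum_{a \in f} \mu_a$ and addition in $GF(2^s)$ is coordinatewise XOR, for each fixed $t$ the map $\mathbf{1}_f \mapsto \mu_f(\alpha_{(t)}) \in GF(2)^s$ is $GF(2)$-linear; hence ``$\mu_f(\alpha_{(1)}) = \dots = \mu_f(\alpha_{(i-1)}) = 0$'' is a homogeneous $GF(2)$-linear system of $s(i-1)$ equations. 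For the second condition, substituting $\alpha_{i,1}, \dots, \alpha_{i,j}$ turns each $\mu_a$ into a scalar $c_a = \prod_{l=1}^{j} \alpha_{i,l}^{u_l^{(a)}} \in GF(2^s)$ times the tail monomial $t_a = \prod_{l=j+1}^{k} z_l^{u_l^{(a)}}$; collecting the terms of $\mu_f(\alpha_{i,1}, \dots, \alpha_{i,j}, z_{j+1}, \dots, z_k)$ by tail monomial, this polynomial is $\equiv 0$ iff $\sum_{a \in f : t_a = m} c_a = 0$ for every distinct tail monomial $m$ occurring among $a \in e$. There are at most $|e|$ such $m$, so this too is a homogeneous $GF(2)$-linear system, of at most $s|e|$ equations.

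Stacking the two systems, there is a matrix $M$ over $GF(2)$ with at most $|e|$ columns and at most $s(r + |e|)$ rows such that a nonempty $f \subseteq e$ contributes to $F_{i,j,e}(\alpha)$ precisely when $\mathbf{1}_f \in \ker M$. Since $\mathbf{1}_\emptyset = \vec 0$ always lies in $\ker M$, we obtain the closed form
$$
F_{i,j,e}(\alpha) = |\ker M| - 1 = 2^{\,|e| - \operatorname{rank}(M)} - 1,
$$
an integer of at most $|e|$ bits. The computation thus reduces to: (i) filling in the entries of $M$, and (ii) computing $\operatorname{rank}(M)$ over $GF(2)$.

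For (i), each entry of $M$ is a bit of some $\mu_a(\alpha_{(t)})$ or of some $c_a$, and each such element of $GF(2^s)$ is the value of a single monomial at a point — computable by repeated squaring in each of the $k$ coordinates (whose exponents are at most $d = \lceil n^{1/k} - 1 \rceil$), followed by a balanced product of $k$ factors, so $O(k \log d)$ operations in $GF(2^s)$, each itself in $\text{NC}^1$. Hence $M$ is assembled in $\polylog$ time and polynomial work in $|e|, r, s, k, \log n$; in the parameter regime of Theorem~\ref{r1fheavy-thm1} one has $k \le s$ and $\log n \le \poly(s)$, so these extra factors are absorbed into $\poly(|e|, r, s)$. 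Identifying the distinct tail monomials (e.g.\ by sorting) is likewise within these bounds. For (ii), computing the rank of a matrix over $GF(2)$ is a classical $\text{NC}^2$ task, so it takes $\polylog$ time and polynomial work in the dimensions of $M$, i.e.\ in $|e|, r, s$. Finally $2^{\,|e| - \operatorname{rank}(M)} - 1$ is output directly as a bitstring, and combining the three steps gives the stated complexity.

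The only genuinely delicate step is the linearity reduction of the first paragraph — specifically, unpacking ``the polynomial $\mu_f(\alpha_{i,1}, \dots, \alpha_{i,j}, z_{j+1}, \dots, z_k)$ vanishes identically'' into one $GF(2)$-linear block per distinct tail monomial and bounding the number of blocks by $|e|$. Once that is in place, the rest is an invocation of a parallel rank algorithm together with routine bookkeeping for the finite-field arithmetic.
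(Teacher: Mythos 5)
Your proof is correct and follows essentially the same route as the paper: both observe that the qualifying sets $f \subseteq e$ form the kernel of a $GF(2)$-linear map (with the identical-vanishing condition encoded coefficientwise, using that at most $|e|$ distinct tail monomials occur), so that $F_{i,j,e}(\alpha) = 2^{\dim\ker} - 1$, and both then invoke an NC matrix-rank algorithm on a matrix of dimensions $O(|e|) \times O((r+|e|)s)$. Your write-up merely spells out the entry-assembly and field-arithmetic bookkeeping that the paper leaves implicit.
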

\begin{proof}
Let $w = |e|$. If we associate the collection of subsets of $e$ with binary vectors of length $w$, then the set of all $f \subseteq e$ which satisfy the given constraint is a linear subspace $U$, and so $F_{i,j,e}(\alpha)$ has the value $(2^{\text{rank}(U)} - 1)$. Thus, we need to compute the rank of the set of vectors 
$$
X = \{ \mu_\ell(\alpha_{(1)}), \dots, \mu_{\ell}(\alpha_{(i)}), \mu_{\ell}(\alpha_{i,1}, \dots, \alpha_{i,j}, z_{j+1}, \dots, z_k) \mid \ell \in e \}.
$$ 
Here $\mu_{\ell}(\alpha_{i,1}, \dots, \alpha_{i,j}, z_{j+1}, \dots, z_k)$ is regarded as a listing of coefficients.

Let us count the length of each such vector. Each term $\mu_{\ell}(\alpha_{(t)})$ is an entry of $GF(2^s)$, hence has length $s$. Each value of $\ell$ corresponds to a distinct monomial $\mu_{\ell}(\alpha_{i,1}, \dots, \alpha_{i,j}, z_{j+1}, \dots, z_k)$, so over all we need to keep track of at most $w$ distinct monomials for the polynomial $\mu_{\ell}(\alpha_{i,1}, \dots, \alpha_{i,j}, z_{j+1}, \dots, z_k)$, for which each coefficient also has length $s$. In total,  the length of a vector $x \in X$ is at most $(r + w) s$, and the number of such vectors is $|X| = w$. There is an NC algorithm to compute matrix rank \cite{nc-gauss}; thus, this rank calculation has overall complexity $(\polylog(w, r, s), \poly(w,r,s))$.
\end{proof}

\begin{theorem}
\label{fool-nbhd-prop2}
Let $E \subseteq 2^{[n]}, m = |E|$ and $\text{width}(E) = w$. There is an algorithm with quasi-complexity $(w + \log(mn), (m+n) \poly(w))$  to produce a code $A$ of length $L \leq (1 + o(1)) (w + \log_2 m)$, such that $\Omega^A$ fools $E$.
\end{theorem}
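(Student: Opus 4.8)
The plan is to imitate the algorithm of Theorem~\ref{r1fheavy-thm1}, but applied implicitly to the set $E' = \{f : \emptyset \neq f \subseteq e \text{ for some } e \in E\}$, which can have size as large as $m(2^w - 1)$ and so cannot be listed. Observe first that $\Omega^A$ fools $E$ exactly when $A$ is $E'$-unbiased, equivalently when for each $e \in E$ the vectors $\{A(i) : i \in e\}$ are linearly independent over $GF(2)$; in particular any such code has length $L \geq w$. We may assume the ground set is $\bigcup_{e \in E} e$, so that $n \leq mw$. We build the code from monomials exactly as in the discussion preceding the statement: fix distinct nonzero monomials $\mu_1,\dots,\mu_n \in GF(2^s)[z_1,\dots,z_k]$ of per-variable degree at most $d = \lceil n^{1/k}-1\rceil$, set $A(\ell) = (\mu_\ell(\alpha_{(1)}),\dots,\mu_\ell(\alpha_{(r)}))$, and choose the coordinates $\alpha_{i,j} \in GF(2^s)$ one at a time (in lexicographic order of $(i,j)$) by the method of conditional expectations, at each point enumerating all $2^s$ candidate values.

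The pessimistic estimator I would attach to a partial assignment $\alpha_{1,1},\dots,\alpha_{i,j}$ is
$$
\Phi_{i,j} \;=\; \epsilon^{\,r-i}\sum_{e \in E}\Bigl( F_{i,j,e}(\alpha) + \rho_j\bigl(S_{i-1,e} - F_{i,j,e}(\alpha)\bigr)\Bigr),
$$
where $S_{i-1,e}$ counts the nonempty $f \subseteq e$ with $\mu_f(\alpha_{(1)}) = \dots = \mu_f(\alpha_{(i-1)}) = 0$ (so $S_{i-1,e} - F_{i,j,e}$ counts those same $f$ whose round-$i$ restriction $\mu_f(\alpha_{i,1},\dots,\alpha_{i,j},z_{j+1},\dots,z_k)$ is \emph{not} identically zero), and the weights are defined by $\rho_k = 0$, $\rho_{j-1} = d2^{-s} + (1-d2^{-s})\rho_j$, and $\epsilon := \rho_0 = 1-(1-d2^{-s})^k \leq kd2^{-s}$. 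Two facts need checking. First, iterating Proposition~\ref{r1fnon-zero-poly1} to bound the chance that a not-identically-zero restriction picks up a new root, $\Phi_{i,j}$ dominates the expected number of $f$ for which the completed code $A$ is biased, conditioned on the current partial assignment; moreover $\Phi$ is continuous at the round boundaries (this is exactly why $\rho_k = 0$ and $\rho_0 = \epsilon$ are forced), and $\Phi_{r,k} = \sum_e S_{r,e}$ is precisely the number of $f$ on which $\Omega^A$ fails, so driving $\Phi$ below $1$ suffices. Second, the recursion for $\rho_j$ is exactly the Schwartz--Zippel recursion, and a short calculation then shows $\bE_{\alpha_{i,j}}[\Phi_{i,j}] \leq \Phi_{i,j-1}$, so picking the best of the $2^s$ candidates keeps $\Phi$ non-increasing; hence $\Phi_{r,k} \leq \Phi_{1,0} = \epsilon^r\sum_e(2^{|e|}-1) < \epsilon^r m2^w$.

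It remains to choose $k,s,r$ and count. Writing $M = w + \log_2 m$, I would take $k = \Theta(\log\log M)$, $s = \Theta(M/\log\log\log M)$, $d = \lceil n^{1/k}-1\rceil$, and $r = \lceil M/\log_2(1/\epsilon)\rceil + O(1)$; since $n \leq mw$ one checks $\log_2 d = \log_2 n/k \ll s$, hence $\log_2(1/\epsilon) = s - O(\log_2 d + \log_2 k) = (1-o(1))s$, so $r = (\log\log M)^{O(1)}$, $\epsilon^r m2^w < 1$ (forcing $\Phi_{r,k} = 0$, i.e.\ $\Omega^A$ fools $E$), and $L = rs = (1+o(1))M$. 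The derandomization runs $rk = (\log\log M)^{O(1)}$ sequential stages; each stage enumerates $2^s = (2^w m)^{o(1)}$ field values and, for each, evaluates $\sum_{e}F_{i,j,e}$ by running the rank computation of Proposition~\ref{eval-f-prop} (complexity $\poly(w,r,s)$, with $r,s = O(w+\log m)$) in parallel over $e \in E$. Using $n \leq mw$ once more, this gives quasi-complexity $(w + \log(mn),\,(m+n)\poly(w))$. The degenerate regimes are handled separately: when $2^w m$ is already polynomial in $n$, simply list $E'$ and invoke Theorem~\ref{r1fheavy-thm1}; when $w$ is so large that $2^s$ would exceed the processor budget, take $s = \Theta(\log n)$ instead, accepting a larger (but still admissible) $r$.

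I expect the main obstacle to be the design and verification of the estimator $\Phi$: one needs a weighting of the implicitly defined bad sets $f$ that is simultaneously an honest upper bound on the conditional expected number of biased $f$ and provably non-increasing under the greedy choice of each $\alpha_{i,j}$, and the only weights reconciling these two demands are those given by the Schwartz--Zippel recursion above, with the seam conditions $\rho_k = 0$, $\rho_0 = \epsilon$ making the per-round estimators fit together across round boundaries. The secondary obstacle is bookkeeping: because $\Phi$ is evaluated through a matrix-rank subroutine, one must track its cost over all $rk$ stages and all $2^s$ enumerated field values and confirm that it stays within the target quasi-complexity, which is precisely where the reduction $n \leq mw$ and the case analysis on the size of $E'$ (and on $w$ versus $\log n$) come into play.
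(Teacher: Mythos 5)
Your proposal is correct and follows essentially the same route as the paper: the same implicit monomial code, the same potentials $F_{i,j,e}$ evaluated by the rank computation of Proposition~\ref{eval-f-prop}, and the same greedy choice of each $\alpha_{i,j}$ over all $2^s$ field values (your $\rho_j$-weighted estimator $\Phi_{i,j}$ is a repackaging of the paper's chained inequality, since minimizing $\Phi_{i,j}$ is equivalent to minimizing $\sum_e F_{i,j,e}$). The only real deviation is bookkeeping: the paper avoids your case split by fixing $k = \lceil \log\log(mw)\rceil$ and $s = \lceil \log(mw)/\log\log\log(mw)\rceil$ in all regimes (so $2^s = (mw)^{o(1)}$ automatically), which also makes your ``list $E'$ when $2^w m = \poly(n)$'' fallback unnecessary --- and that fallback, if actually used, could exceed the stated $(m+n)\poly(w)$ processor bound.
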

\begin{proof}
We assume $n \leq m w$, as we can simply ignore all coordinates which do not appear in $E$. For $i = 1, \dots, r$ and $j = 1, \dots, k$ let us define
$$
H_{i,j}(\alpha)= \sum_{e \in E} F_{i,j,e}(\alpha)
$$

We also define $H_{0,k} (\alpha) = \sum_{e \in E} (2^{|e|} - 1)$, so for $i = 0, \dots, k$, we have 
$$
H_{i,k} (\alpha) = \sum_{e \in E} \sum_{\substack{f \subseteq e \\ f \neq \emptyset}} [\mu_f(\alpha_{(1)}) = \dots = \mu_f(\alpha_{(i)}) = 0]
$$

If $H_{r,k}(\alpha) = 0$, then the code $A$ is unbiased for every $f \subseteq e \in E$. Our strategy will be to loop over $i = 1, \dots, r$ and then $j = 1, \dots, k$, selecting $\alpha_{i,j}$ at each stage to minimize $H_{i,j}(\alpha)$.

We now make a few observations on the sizes of $F_{i,j,e}(\alpha)$. First, $H_0(\alpha) \leq 2^w m$. Also, since $\mu_f(z_1, \dots, z_k) \not \equiv 0$, we always have $F_{i,0,e}(\alpha) = 0$. Next, observe that if $\alpha_{i,j} \sim GF(2^s)$, then for any $f$ with $\mu_f(\alpha_{i,1}, \dots, \alpha_{i,j-1}, z_j, z_{j+1}, \dots, z_{k}) \not \equiv 0$, Proposition~\ref{r1fnon-zero-poly1} gives
$$
P( \mu_f(\alpha_{i,1}, \dots, \alpha_{i,j}, z_{j+1}, \dots, z_{k}) \equiv 0)  \leq d/2^s \leq n^{1/k}/2^s
$$

Consequently, when $\alpha_{i,j} \sim GF(2^s)$ and we condition on $\alpha_{1,1}, \dots, \alpha_{i,1}, \dots, \alpha_{i,j-1}$, we have
$$
\bE[ F_{i,j,e}(\alpha) ] \leq F_{i,j-1,e}(\alpha) + (n^{1/k}/2^s) (F_{i-1,k,e}(\alpha) - F_{i,j-1,e}(\alpha))
$$

By selecting $\alpha_{i,j}$ to minimize $H_{i,j}(\alpha)$, we thus ensure that
$$
H_{i,j}(\alpha) \leq H_{i,j-1}(\alpha) + (n^{1/k}/2^s) (H_{i-1,k}(\alpha) - H_{i,j-1}(\alpha)) \leq H_{i,j-1}(\alpha) + (n^{1/k}/2^s) H_{i-1,k}(\alpha)
$$

Since $F_{i,0,e} (\alpha) = 0$, this in turn ensures that $H_{i,k}(\alpha) \leq (k n^{1/k}/2^s) H_{i-1,k}(\alpha)$, so $H_{r, k}(\alpha) < (2^w m) (k n^{1/k} /2^s)^r$. Thus, for 
$$
r = \Big \lceil \frac{\log(2^w m)}{\log( 2^s/(k n^{1/k}))} \Big \rceil,
$$
the code $A$ will fool all of $E$. 

Now set $k = \lceil \log \log (m w) \rceil$ and $s = \lceil \frac{\log (m w)}{\log \log \log (m w)} \rceil$. Using the fact that $n \leq m w$, calculations similar to Theorem~\ref{r1fheavy-thm1} show that $r \leq (1+o(1)) (w + \log_2 m) \log \log \log(w m)/\log(w m)$. The code $A$ has length $L = r s \leq (1+o(1)) (w + \log_2 m)$.

Next let us examine the complexity of this process. In each iteration, we must evaluate $F_{i,j,e}$ for every $e \in E$ and every $\alpha_{i,j} \in GF(2^s)$. By Proposition~\ref{eval-f-prop}, each evaluation of $F_{i,j,e}$ has complexity $(\polylog(w, \log mn), \poly(w, \log mn))$.  Over all possible values $\alpha_{i,j} \in GF(2^s)$ and $e \in E$, this gives a total complexity of $( \polylog(w, \log mn), 2^s m \poly(w, \log mn))$.

There are $r k \leq \tilde O( \frac{w + \log m}{\log (mw)})$ iterations, so the overall complexity of this process is $\tilde O( \frac{w + \log m}{\log (mw)}) \times \tilde O(\log (mn) + (\log w)^{O(1)} )$ time and $(m + n)^{1+o(1)} w^{O(1)}$ processors. As $n \leq m w$, this simplifies to $\tilde O(w + \log(mn))$ time. 
\end{proof}

\subsection{Comparison with previous algorithms} Let us briefly compare Theorem~\ref{fool-nbhd-prop2} with previous algorithms for fooling neighborhoods. The simplest approach to fool a list $E \subseteq 2^{[n]}$, is to select a code $A$ whose dual code has weight $w+1$, where $w = \text{width}(E)$. The resulting probability space $\Omega^A$ is then $w$-wise-independent. There are algebraic constructions to do so efficiently; for example, \cite{alon-babai} discussed how to use BCH codes in this context for derandomization. Such codes have length roughly $(w/2) \log_2 n$.

An algorithm of Schulman \cite{schulman} reduces the code-length significantly to $O(w + \log |E|)$. To do so, it generates the set $E' = \{f \subseteq e \mid e \in E \}$ and then uses an algorithm fooling Fourier characters, similar to Theorem~\ref{r1fheavy-thm1}, to generate a code which is unbiased for $E'$. A similar approach is used in \cite{crs}, which interleaves other algorithmic steps with the generation of the code. The basic idea of both these works is to form $\Omega$ as a product of many independent copies of an $\epsilon$-approximately-independent probability space, where $\epsilon$ is constant. In \cite{schulman}, the underlying $\epsilon$-approximately-independent probability space was based on a construction of \cite{naor2} using Reed-Solomon codes; these have a particularly nice form for derandomizing part of the random seed.

These algorithms have high processor complexity (approximately $O(m n 2^w$)), and there are two main reasons for this. First, simply enumerating the set $E'$ requires a large processor count, exponential in $w$. Second, these algorithms test all possible seeds for the underlying Reed-Solomon code, and this requires a processor complexity exponential in the seed-length of that code. 

Theorem~\ref{fool-nbhd-prop2} thus improves in two ways over the previous algorithms. First, it has reduced time and processor complexity; in particular, it answers an open problem posed by Schulman \cite{schulman} in giving an $\text{NC}^1$ algorithm for $w \leq O(\log(mn))$, and it gives an NC algorithm for $w = \polylog(mn)$. Second, the code size is smaller: it gives $L \leq (1+o(1)) (w + \log_2 m)$ whereas the previous algorithms only guarantee $L \leq O(w + \log m)$.

\section{Fooling sums of juntas}
\label{r1fmain-alg-sec}
We say that a function $f: \{0, 1 \}^n \rightarrow \mathbf R$ is a \emph{$w$-junta} if there exists a set $Y = \{y_1, \dots, y_w \} \subseteq [n]$, such that
$$
f(x_1, \dots, x_n) = f'(x_{y_1}, \dots, x_{y_w})
$$
for some function $f': \{0, 1\}^w \rightarrow \mathbf R$.

In this section, we consider a function $S: \{0, 1 \}^n \rightarrow \mathbf R$ of the form
\begin{equation}
\label{dde}
S(x) = \sum_{j=1}^m f_j(x)
\end{equation}
where each $f_j$ is a $w$-junta whose value is determined by a variable subset $Y_j \subseteq [n]$. Our goal is to find some $x \in \{0, 1 \}^n$ with the property that $S(x) \geq \bE_{X \sim \{0, 1 \}^n} [S(X)]$. Our algorithm has four main components, which we will describe in turn:
\begin{enumerate}
\item We show how to apply conditional expectations when the objective functions are Fourier characters.
\item We apply Fourier decomposition to the sum (\ref{dde}), thus reducing a sum of $w$-juntas to a sum of Fourier characters. As we have discussed earlier, this step implemented directly has an exponential processor dependence on $w$.
\item We use (derandomized) random variable partitioning to break the overall sum into $w/w'$ subproblems involving $w'$-juntas, where $w' = o(\log n)$.
\item We introduce an object we refer to as \emph{partial-expectations oracle (PEO)}, which allows us to use conditional expectations to solve these subproblems sequentially. This requires $O(w)$ time, but only $2^{w'} = n^{o(1)}$ processors.
\end{enumerate}

Berger \& Rompel \cite{berger-simulating} discusses a few alternate strategies to mitigate the exponential dependence on $w$, for example when the underlying variables are drawn from $\{0, 1\}^b$ for $b = \polylog(n)$, or when $f_j$ are indicators of affine functions. But these strategies are not as general as we need for many applications. Our overall algorithm will handle all these situations as special cases.

\subsection{Conditional expectations for sums of Fourier characters and sums of $w$-juntas}
Our approach begins with a subroutine for optimization problems involving sums of Fourier characters. This idea has been used in a number of deterministic algorithms, starting with \cite{luby-old} and more extensively developed in \cite{berger-simulating, crs, harris2}. We present a slightly optimized form.

\begin{theorem}
\label{r1flinear-ce-thm}
Let $E \subseteq 2^{[n]}$ be given along with associated weights $\gamma_e$ for every $e \in E$. There is an algorithm with quasi-complexity $(\log mn, W)$ to find $x \in \{0, 1 \}^n$ such that
$$
\sum_{e \in E} \gamma_e \chi_e(x) \geq \gamma_{\emptyset},
$$
where $m = |E|$ and $W = n + \sum_{e \in E} |e|$.
\end{theorem}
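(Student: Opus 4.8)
The plan is to apply the method of conditional expectations not directly to the coordinates of $x$, but to a logarithmically short random seed, exploiting the explicit formula for conditional expectations of Fourier characters. First I would invoke Theorem~\ref{r1fheavy-thm1} to construct, with quasi-complexity $(\log mn, W)$, an $E$-unbiased code $A = A(1), \dots, A(n)$ of length $L = \log_2 m + O(\tfrac{\log m}{\log\log\log m}) = O(\log m)$. The associated space $\Omega^A$ — draw $y \sim GF(2)^L$ and set $X_i = A(i)\cdot y$ — is unbiased for every $e \in E$, so that
$$
\bE_{X\sim\Omega^A}\Bigl[\textstyle\sum_{e\in E}\gamma_e\chi_e(X)\Bigr] = \sum_{e\in E}\gamma_e\,\bE_{X\sim\Omega^A}[\chi_e(X)] = \gamma_{\emptyset},
$$
using $\chi_\emptyset\equiv 1$ and $\bE[\chi_e]=0$ for nonempty $e$ (interpreting $\gamma_\emptyset$ as $0$ if $\emptyset\notin E$). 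Hence it suffices to exhibit one seed $y\in GF(2)^L$ whose induced vector $x$ attains the average or better.

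The key observation is an exact formula for the conditional expectation after fixing a prefix $y_1,\dots,y_t$. Writing $A(e) = (a^e_1,\dots,a^e_L)$, we have $\chi_e(X) = (-1)^{\sum_\ell a^e_\ell y_\ell}$, and averaging over the free bits $y_{t+1},\dots,y_L$ gives
$$
\bE\bigl[\chi_e(X)\mid y_1,\dots,y_t\bigr] =
\begin{cases}
(-1)^{\sum_{\ell\le t} a^e_\ell y_\ell} & \text{if } a^e_{t+1}=\dots=a^e_L = 0,\\
0 & \text{otherwise.}
\end{cases}
$$
Thus $\bE[\sum_{e}\gamma_e\chi_e(X)\mid y_1,\dots,y_t]$ equals $\Phi_t := \sum_{e\in E:\, a^e_{t+1}=\dots=a^e_L=0}\gamma_e(-1)^{\sum_{\ell\le t}a^e_\ell y_\ell}$, which we can evaluate directly. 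The algorithm fixes $y_1,\dots,y_L$ one bit at a time: having fixed $y_1,\dots,y_t$ it computes $\Phi_{t+1}$ for both choices $y_{t+1}\in\{0,1\}$ and keeps whichever gives $\Phi_{t+1}\ge\Phi_t$; since $\Phi_t$ is the average of the two candidates, such a choice always exists. After $L$ rounds every bit of $y$ is fixed, every $e\in E$ survives (vacuously $a^e_{>L}=0$), and $\Phi_L = \sum_{e\in E}\gamma_e\chi_e(x)\ge\Phi_0=\gamma_\emptyset$.

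For the complexity, I would precompute all vectors $A(e) = \bigoplus_{i\in e}A(i)$ by a parallel reduction, with total work $O(L\sum_{e}|e|) = W^{1+o(1)}$ and polylogarithmic depth. In each of the $L = O(\log m)$ rounds, evaluating $\Phi_{t+1}$ for the two candidate bits amounts to, for each $e\in E$: testing whether $a^e_{t+1}=\dots=a^e_L=0$, computing the parity $\sum_{\ell\le t+1}a^e_\ell y_\ell \bmod 2$, and forming a signed sum of at most $m$ real weights $\gamma_e$ — this is $\tilde O(\log m)$ depth with $O(m + \sum_e|e|)\le W$ processors, reused across rounds. Summing over the $O(\log m)$ rounds and combining with the cost of Theorem~\ref{r1fheavy-thm1} yields quasi-complexity $(\log mn, W)$. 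The only mildly delicate point is this accounting: confirming that recomputing the surviving-set indicators and partial inner products each round stays within $W^{1+o(1)}$ processors and that the real arithmetic on the $\gamma_e$ contributes only the polylog factors absorbed by quasi-complexity — everything else is a routine application of conditional expectations to a logarithmically short seed.
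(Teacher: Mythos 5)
Your construction and existence argument coincide with the paper's: build the $E$-unbiased code $A$ of length $L = O(\log m)$ via Theorem~\ref{r1fheavy-thm1}, observe that $G(y) = \sum_e \gamma_e (-1)^{A(e)\cdot y}$ has expectation $\gamma_\emptyset$ over $y \sim GF(2)^L$, and derandomize the seed using the exact conditional-expectation formula for characters (a term survives iff the unfixed coordinates of $A(e)$ are all zero, in which case its sign is determined). Up to that point you are on the paper's track.

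The gap is in the final complexity accounting, which is the delicate point you flagged but then waved through. Fixing the seed one bit at a time requires $L = \Theta(\log m)$ sequential rounds, and in each round you must evaluate the conditional expectation, which involves a parallel reduction summing up to $m$ weighted terms and hence costs $\Theta(\log m)$ depth on an EREW PRAM; these rounds cannot be overlapped because the surviving signs in round $t+1$ depend on the bit chosen in round $t$. So your algorithm runs in $\tilde O(\log^2 m)$ time, not $\tilde O(\log mn)$ as the theorem (and your last sentence, ``summing over the $O(\log m)$ rounds \dots yields quasi-complexity $(\log mn, W)$'') claims; when $m = \poly(n)$ this is off by a full $\log$ factor, which quasi-complexity does not absorb. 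The paper's missing ingredient is chunked conditional expectations: fix $t = \frac{\log mn}{\log\log mn}$ seed bits per round by enumerating all $2^t = (mn)^{o(1)}$ candidate chunk values in parallel and evaluating $G$ for each. This reduces the number of sequential rounds to $\lceil L/t\rceil = O(\log\log mn)$, each of $\tilde O(\log mn)$ time, while the $2^t$ blowup in processors is absorbed into $W^{1+o(1)}$, giving the stated $(\log mn, W)$ bound. Your argument proves the theorem with the weaker quasi-complexity $(\log^2 mn, W)$; to get the stated bound you need the chunking step (or an equivalent device).
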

\begin{proof} 
First, use Theorem~\ref{r1fheavy-thm1} to construct the $E$-unbiased code $A$ of length $L = O(\log m)$, using quasi-complexity $(\log mn, W)$. For any $y \in GF(2)^L$, let us define
$$
G(y) = \sum_{e \in E} \gamma_e (-1)^{A(e) \cdot y}
$$

We want to find $y \in GF(2)^L$ with $G(y) \geq \gamma_{\emptyset}$; we can then produce the desired $x \in \{0, 1 \}^n$ by setting $x_i = A(i) \cdot y$ for $i = 1, \dots, n$. Since $A$ is $E$-unbiased, we have $\bE_{y \sim GF(2)^L}[ (-1)^{A(e) \cdot y}] = 0$ for every $e \neq \emptyset$. So $\bE_{y \sim GF(2)^L}[G(y)] = \gamma_{\emptyset}$, and thus a satisfying $y \in GF(2)^L$ exists.

To find it, we use conditional expectation: we guess chunks of $t = \frac{\log mn}{\log \log mn}$ bits of $y$ at a time, to ensure that the expected value of $\bE[G(y)]$ increases. For each such guess, we will compute in parallel the resulting expected value $\bE[G(y)]$, when certain bits of $y$ are fixed and the rest remain independent fair coins. We may compute the conditional expectations of a term $(-1)^{A(e) \cdot y}$, using the following observation: suppose that $y_1, \dots, y_k$ are determined while $y_{k+1}, \dots, y_L$ remain independent fair coins. Then $\bE[(-1)^{A(e) \cdot y}] = 0$ if $A(i) = 1$ for any $i \in e \cap \{k+1, \dots, L \}$, and otherwise $(-1)^{A(e) \cdot y}$ is determined by $y_1, \dots, y_k$.

This process requires  $\lceil L/t \rceil \leq O( \log \log mn )$ rounds. For each possible value for a $t$-bit chunk of $y$, evaluating $G$ has complexity $(\log(mn), W^{1+o(1)})$. As $t \leq o(\log mn)$, we get an overall quasi-complexity of $\tilde O(\log mn, W)$.
\end{proof}

\begin{lemma}
\label{junta-thm1}
Suppose we have a full listing of the truth-table of each $f_j$. There is an algorithm to find $x \in \mathcal \{0, 1 \}^n$ satisfying $S(x) \geq \bE_{X \sim \{0, 1 \}^n} [S(X)]$, using quasi-complexity $(w + \log mn, 2^w m + n)$.
\end{lemma}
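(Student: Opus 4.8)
The plan is to convert the sum of $w$-juntas into a sum of Fourier characters over $GF(2)$ and then invoke Theorem~\ref{r1flinear-ce-thm}, which already packages up the $E$-unbiased-code construction together with conditional expectations on the code.

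First I would apply the Fourier decomposition of Proposition~\ref{r1ffourier-prop} to each $f_j$, viewed as a function $f_j'$ on its $w$ relevant coordinates $Y_j$. Since we are handed the full truth-table of $f_j$, the Fast Walsh--Hadamard Transform gives weights $\gamma_{j,e}$, for $e \subseteq Y_j$, with $f_j(x) = \sum_{e \subseteq Y_j} \gamma_{j,e}\chi_e(x)$, in quasi-complexity $(w, 2^w)$ per junta, hence $(w, 2^w m)$ over all $j$ done in parallel. The crucial bookkeeping identity is that $\gamma_{j,\emptyset} = 2^{-w}\sum_{y \in \{0,1\}^w} f_j'(y) = \bE_{X \sim \{0,1\}^n}[f_j(X)]$, i.e. the empty-set Fourier coefficient of $f_j$ is exactly its mean. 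Next I would form the list $E = \bigcup_j \{\, e : e \subseteq Y_j \,\} \subseteq 2^{[n]}$, which has size at most $2^w m$, and set $\gamma_e = \sum_{j :\, e \subseteq Y_j} \gamma_{j,e}$; identical characters arising from different juntas are merged by a parallel sort, within quasi-complexity $(w + \log mn, 2^w m)$. By construction $S(x) = \sum_{e \in E} \gamma_e \chi_e(x)$ for every $x$, and $\gamma_{\emptyset} = \sum_j \bE[f_j(X)] = \bE_{X \sim \{0,1\}^n}[S(X)]$.

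Finally I would run the algorithm of Theorem~\ref{r1flinear-ce-thm} on $E$ with the weights $\gamma_e$; it returns $x \in \{0,1\}^n$ with $\sum_{e \in E} \gamma_e \chi_e(x) \geq \gamma_{\emptyset}$, which is precisely $S(x) \geq \bE_{X \sim \{0,1\}^n}[S(X)]$. For the complexity, Theorem~\ref{r1flinear-ce-thm} is invoked with $|E| \le 2^w m$ and with $W = n + \sum_{e \in E}|e| \le n + m w 2^{w-1}$, so its quasi-complexity $(\log(|E| \cdot n),\, W)$ reads as $(w + \log mn,\, n + 2^w m)$ once we observe that $w\,2^w m \le (2^w m)^{1+o(1)}$ (since $w = 2^{o(w)}$). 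Together with the Fourier and merging steps this yields overall quasi-complexity $(w + \log mn,\, 2^w m + n)$, as claimed.

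The ``hard part'' here is really just the accounting rather than any new idea: one must check that the blow-up from $m$ objective functions to at most $2^w m$ Fourier characters, and the extra factor of $w$ that enters the work parameter $W$ of Theorem~\ref{r1flinear-ce-thm}, both collapse back inside the stated quasi-complexity, and that the merging of repeated characters is carried out within budget. The exponential-in-$w$ processor count is of course exactly the bottleneck that the later sections (random-variable partitioning and the partial-expectations oracle) are designed to remove.
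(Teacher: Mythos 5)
Your proposal is correct and follows essentially the same route as the paper's proof: Fourier-decompose each $f_j$ on its $w$ relevant coordinates via Proposition~\ref{r1ffourier-prop}, observe that the empty-set coefficients sum to $\bE_{X \sim \{0,1\}^n}[S(X)]$, merge the resulting characters into the set $E = \{e \mid e \subseteq Y_j\}$ with combined weights, and invoke Theorem~\ref{r1flinear-ce-thm}. Your additional bookkeeping (the bound $W \leq n + m w 2^{w-1}$ and the observation that the factor $w$ is absorbed by the $(\cdot)^{1+o(1)}$ in the definition of quasi-complexity) is accurate and merely makes explicit what the paper leaves implicit.
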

\begin{proof}
Using Proposition~\ref{r1ffourier-prop}, transform each $f_j$ as $f_j(x) = \sum_{e \subseteq Y_j} \gamma_{j,e} \chi_e (x)$. This step has quasi-complexity $(w + \log mn, 2^w m)$. We thus have:
$$
S(x) = \sum_j \sum_{e \subseteq Y_j} \gamma_{j,e} \chi_e(x) = \sum_e \chi_e(x) (\sum_j \gamma_{j, e})
$$
and $\sum_j \gamma_{j,\emptyset} = \bE_{X \sim \{0, 1 \}^n} [S(X)]$.

Next apply Theorem~\ref{r1flinear-ce-thm} to the set $E = \{e \mid e \subseteq Y_j \}$ and associated weights $\sum_j \gamma_{j,e}$. Since $|E| \leq 2^w m$ and $\text{width}(E) \leq w$, this procedure has quasi-complexity $(\log (2^w m n), 2^w m + n)$. 
\end{proof}

When $w = \polylog(n)$, this means that Lemma~\ref{junta-thm1} gives quasi-NC algorithms. When $w = \Theta(\log n)$ then Lemma~\ref{junta-thm1} gives NC algorithms; however, the processor complexity (while polynomial) may be quite large, depending on the size of $w$. 

As a side application of Theorem~\ref{r1flinear-ce-thm} (which is not needed for our overall derandomization approach), let us consider the \emph{heavy-codeword problem.} We are given a code $A$ of length $L$ and size $n$, presented as a $L \times n$ generator matrix. Our goal is to find a codeword whose weight is at least the expected weight of a randomly-chosen codeword.  This was  introduced as a toy derandomization problem by \cite{naor2}; this work also gave an algorithm with complexity roughly $(\log Ln, L^2 n^2)$. This was later improved by \cite{crs} to complexity $(\log Ln , L n^2)$. We improve this further to nearly optimal time and processor complexities.
\begin{corollary}
\label{r1heavy-corr2}
There is an algorithm with quasi-complexity $(\log Ln, Ln)$ to find a heavy codeword.
\end{corollary}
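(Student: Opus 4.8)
The plan is to reduce the heavy-codeword problem directly to Theorem~\ref{r1flinear-ce-thm}, using the observation that the Hamming weight of a codeword, viewed as a function of the \emph{message}, is (up to an additive constant) a sum of Fourier characters over $GF(2)^n$. Index the rows of the $L\times n$ generator matrix by $[L]$ and the columns by $[n]$, and for each row $\ell\in[L]$ let $e_\ell\subseteq[n]$ be its support. A message $x\in\{0,1\}^n$ encodes to the codeword $c(x)=\bigoplus_{i:\,x_i=1}A(i)\in GF(2)^L$, whose $\ell$-th coordinate is $\bigoplus_{i\in e_\ell}x_i=\tfrac{1-\chi_{e_\ell}(x)}{2}$. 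Hence, writing $L_0$ for the number of all-zero rows,
$$
\mathrm{wt}(c(x))=\sum_{\ell=1}^{L}\frac{1-\chi_{e_\ell}(x)}{2}=\frac{L-L_0}{2}-\frac12\sum_{\ell:\,e_\ell\neq\emptyset}\chi_{e_\ell}(x),
$$
so under $x\sim\{0,1\}^n$ the expected weight is exactly $\tfrac{L-L_0}{2}$ (each nonzero row contributes $\tfrac12$).

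First I would read off the row-supports $e_\ell$ from the input matrix, a straightforward pass that fits within quasi-complexity $(\log Ln, Ln)$. Then I would form the family $E=\{\emptyset\}\cup\{e_\ell:\,e_\ell\neq\emptyset\}$ with weights $\gamma_\emptyset=\tfrac{L-L_0}{2}$ and $\gamma_e=-\tfrac12\,|\{\ell:\,e_\ell=e\}|$ for $e\neq\emptyset$, and apply Theorem~\ref{r1flinear-ce-thm} with ground set $[n]$. It returns $x\in\{0,1\}^n$ with $\sum_{e\in E}\gamma_e\chi_e(x)\geq\gamma_\emptyset$, i.e.\ $\mathrm{wt}(c(x))\geq\tfrac{L-L_0}{2}=\mathbf E[\mathrm{wt}]$, and I would output the codeword $c(x)=\bigoplus_{i:\,x_i=1}A(i)$, again computable within quasi-complexity $(\log Ln, Ln)$. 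For the complexity accounting inside Theorem~\ref{r1flinear-ce-thm}: the ground set has size $n$, we have $m=|E|\leq L+1$, and $\sum_{e\in E}|e|\leq\sum_\ell|e_\ell|$ is at most the number of ones in the generator matrix, hence at most $Ln$; so $W=n+\sum_{e\in E}|e|=O(Ln)$ and $\log(mn)=O(\log Ln)$, giving quasi-complexity $(\log Ln, Ln)$ overall.

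There is no genuinely hard step; the only points needing a little care are bookkeeping. One is that $E$ is nominally a set while the $e_\ell$ may repeat — but duplicated sets can simply be merged by summing their weights, and in any case the bound $m\leq L+1$ (and thus the complexity of Theorem~\ref{r1flinear-ce-thm}) is unaffected. The other is the role of $L_0$: in the non-degenerate case $L_0=0$ we recover the clean statement that the output has weight at least $L/2$, matching the formulation of \cite{naor2}, while the computation above handles degenerate rows at no extra cost.
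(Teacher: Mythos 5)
Your proof is correct and follows essentially the same route as the paper: express the codeword weight (as a function of the message $x$) as a constant plus a sum of Fourier characters $\chi_{e_\ell}$ over the row supports, and invoke Theorem~\ref{r1flinear-ce-thm} with $m \leq L+1$ and $W = O(Ln)$. The only cosmetic differences are that the paper assumes WLOG that no row is zero and works with $S(x) = -\sum_j \chi_{y_j}(x)$ and target $\gamma_\emptyset = 0$, whereas you carry the all-zero rows and the constant term $\tfrac{L-L_0}{2}$ explicitly; both are equivalent.
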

\begin{proof}
We suppose without loss of generality that no row of $A$ is all zero. In this case, the expected weight of a codeword is $L/2$. Letting $y_1, \dots, y_L$ denote the rows of $A$, we wish to find a vector $x \in \{0, 1 \}^n$ such that $y_j \cdot x = 1$ for at least $L/2$ values of $j$. 

Define $S(x) = -\sum_{j=1}^L \chi_{y_j}(x)$. If $S(x) \geq 0$ then $x$ is orthogonal to at least half of $y_1, \dots, y_L$ as desired. So we apply Theorem~\ref{r1flinear-ce-thm}, noting that $W \leq Ln$. Since $y_1, \dots, y_L$ are all distinct from zero, we have $\gamma_{\emptyset} = 0$.
\end{proof}

\subsection{Derandomized variable partitioning.}
This step is based on a derandomization technique of \cite{alon-srin} using symmetric polynomials and approximately-independent probability spaces (also known as small-bias probability spaces). We begin by defining and quoting some results on approximate independent probability spaces.
\begin{definition}
\label{approx-def}
A probability space $\Omega$ over $\{0, 1 \}^n$ is $t$-wise, $\epsilon$-approximately independent, if for any indices $1 \leq i_1 < i_2 < \dots <  i_t \leq n$, and any bits $y_1, \dots, y_t \in \{0, 1 \}^t$, we have
$$
P_{x \sim \Omega} (x_{i_1} = y_1 \wedge \dots \wedge x_{i_t} = y_t) \leq (1+\epsilon) 2^{-t}
$$
\end{definition}

\begin{theorem}[\cite{naor2}]
\label{naor2thm}
For any integer $t \geq 1$ and $\epsilon > 0$, there is a $t$-wise, $\epsilon$-approximately independent probability space $\Omega$ of support size $|\Omega| \leq 2^{O(t + \log (1/\epsilon) + \log \log n)}$. The space $\Omega$ can be constructed with quasi-complexity $(t + \log(1/\epsilon) + \log n, 2^{O(t + \log (1/\epsilon) + \log \log n)})$
\end{theorem}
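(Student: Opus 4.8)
This is the classical construction of Naor \& Naor \cite{naor2} together with its parallelization, so the plan is to reconstruct that argument. The first step is the standard reduction from approximate independence to \emph{small bias}: say that a space $\Omega$ over $\{0,1\}^n$ is \emph{$(t,\gamma)$-biased} if $|\bE_{x \sim \Omega}[\chi_S(x)]| \le \gamma$ for every nonempty $S \subseteq [n]$ with $|S| \le t$. A short Fourier/Parseval calculation (the Vazirani XOR lemma) shows that the restriction of such a space to any $t$ coordinates has $\ell_2$-distance at most $\gamma$ from the uniform distribution on $\{0,1\}^t$; in particular every atom has probability at most $2^{-t} + \gamma$, so taking $\gamma = \epsilon 2^{-t}$ makes $\Omega$ exactly a $t$-wise, $\epsilon$-approximately independent space in the sense of Definition~\ref{approx-def}. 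Hence it suffices to construct a $(t,\gamma)$-biased space of the advertised size with $\log(1/\gamma) = \log(1/\epsilon) + O(t)$.

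The crucial observation --- and the reason the support depends on $n$ only through $\log\log n$ --- is that the $n$ output bits need not be jointly small-biased, only every $t$ of them. So I would compose two gadgets. The first is a family $\mathcal H$ of hash functions $h \colon [n] \to [\ell]$ with $\ell = \poly(t, 1/\gamma)$ which is \emph{almost $t$-perfect}: for every $T \subseteq [n]$ with $|T| \le t$, a uniformly random $h \in \mathcal H$ is injective on $T$ except with probability at most $\gamma$. Suitable families follow from standard perfect-hash-family constructions (for instance, composing polynomial hash functions over small fields), and have $|\mathcal H| = 2^{O(t + \log(1/\gamma) + \log\log n)}$. The second gadget is a $\gamma$-biased space $B$ on $\ell$ bits, which a polynomial (Reed--Solomon / powering) construction in the style of \cite{alon-babai, naor2} provides with $|B| = 2^{O(\log \ell + \log(1/\gamma))} = 2^{O(t + \log(1/\gamma))}$. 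Define $\Omega$ by drawing $h \sim \mathcal H$ and $z \sim B$ independently and setting $x_i = z_{h(i)}$. To bound the bias of $\Omega$ on a set $T$ with $|T| \le t$: condition on $h$; whenever $h$ is injective on $T$ the bits $(x_i)_{i \in T}$ are distinct coordinates of $z$ and hence $\gamma$-biased, and since the bias of a mixture is at most the average of the biases, the ``bad'' event (probability at most $\gamma$) adds at most $\gamma$. Thus $\Omega$ is $(t, 2\gamma)$-biased, and rescaling $\gamma$ by a constant finishes it; multiplying supports gives $|\Omega| = |\mathcal H| \cdot |B| = 2^{O(t + \log(1/\epsilon) + \log\log n)}$.

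For the complexity claim, every ingredient is explicit and ``enumerable'': the hash family and the small-bias space on $\ell = \poly(t, 1/\gamma)$ bits are each specified by an $O(t + \log(1/\epsilon) + \log\log n)$-bit seed, and for a fixed seed the map $i \mapsto z_{h(i)}$ reduces to arithmetic in finite fields of size $\poly(n, 2^t, 1/\epsilon)$, which runs in $\tilde O(t + \log(1/\epsilon) + \log n)$ time; assigning one processor per seed yields the stated processor bound. I expect the main obstacle to be the first gadget: showing that an explicitly constructible hash family can be made injective on \emph{every} $t$-subset with high probability while keeping $|\mathcal H|$ free of any $\poly(n)$ factor. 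This is precisely where the improvement from $\log n$ (which a naive ``small-bias on all $n$ bits'' approach would cost in the exponent) down to $\log\log n$ is won, and it is the step that demands the most care both in the construction and in verifying its parallel implementation.
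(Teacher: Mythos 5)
The paper does not prove this statement at all --- it is imported verbatim from \cite{naor2} --- so there is no internal proof to compare against; what can be said is that your reconstruction is correct and achieves the stated parameters. Your reduction via the XOR lemma (atom deviation at most the sum of the at most $2^t-1$ biases, so $\gamma = \epsilon 2^{-t}$ suffices) is right, and the step you flag as the main obstacle is in fact unproblematic: a single polynomial-evaluation (Reed--Solomon) family, identifying $[n]$ with degree-$<d$ polynomials over $GF(q)$ and setting $h_a(i)=p_i(a)$, is injective on any fixed $t$-set except for at most $\binom{t}{2}(d-1)$ choices of $a$, so $q = O(t^2 \log n/\gamma)$ gives failure probability $\le \gamma$ with $|\mathcal H| = q = 2^{O(\log t + \log\log n + \log(1/\gamma))}$, and your ``bias of a mixture'' argument (with the extra additive $\gamma$ for the cancellation event) then yields a $(t,2\gamma)$-biased space of the claimed support size. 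This differs slightly from the construction usually attributed to \cite{naor2} (and Alon--Goldreich--H{\aa}stad--Peralta), which instead feeds a $\gamma$-biased seed of length $O(\log(t\log n)+\log(1/\gamma))$ into the generator matrix of a $t$-wise independent linear code of dual distance $>t$ (e.g.\ BCH, seed length $O(t\log n)$): there every parity of at most $t$ output bits is a \emph{nonzero} parity of seed bits, so no collision/cancellation case and no factor-$2$ loss arises; your hashing route pays a union bound over collisions but is equally explicit and gives the same exponent $O(t+\log(1/\epsilon)+\log\log n)$. One small caveat on the complexity claim: with one processor per seed you cannot also write down all $n$ coordinates of every support point within the stated processor bound, so the theorem (and your argument) should be read as constructing a generator --- seeds plus the finite-field evaluation rule --- from which any coordinate of any sample point is computable in $\tilde O(t+\log(1/\epsilon)+\log n)$ time; this is exactly how the paper uses the space later (the per-point evaluation processors are charged separately there), and your field-arithmetic observation supplies it.
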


\begin{lemma}
\label{r1fdiscrepthm}
Let $E \subseteq 2^{[n]}$, where $m = |E|$ and $w = \text{width}(E)$. One can construct a partition of $[n]$ into $R$ parts  $[n] = T_1 \sqcup T_2 \sqcup \dots \sqcup T_R$, for some $R = O( 1+ \frac{w (\log \log mn)^5}{ \log mn})$, satisfying
$$
|f \cap T_k| \leq O(\frac{\log mn}{\log \log \log mn}) \text{ for all $f \in E, k \in [R]$}
$$

This algorithm has quasi-complexity $(\log w \log (mn), w^{O(1)} (m+n))$.
\end{lemma}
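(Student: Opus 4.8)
\subsection*{Proof proposal}

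The plan is to pick a uniformly random labeling $\ell \colon [n] \to [R]$ and set $T_k = \ell^{-1}(k)$. After a preprocessing step of quasi-complexity $(\log mn, W)$ we may assume every coordinate lies in some $f \in E$, so $n \le mw$. For every $f \in E$ and $k \in [R]$ we have $\bE[\,|f\cap T_k|\,] = |f|/R \le w/R$, and we fix $R$ to be a power of two with $R = \Theta\bigl(1 + \tfrac{w(\log\log mn)^5}{\log mn}\bigr)$ and $b := \Theta\bigl(\tfrac{\log mn}{\log\log\log mn}\bigr)$, chosen so that $b$ exceeds the mean $w/R$ by a factor of $(\log\log mn)^{\Theta(1)}$. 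Thus ``$|f\cap T_k| \le b$ for all $f,k$'' is, for each of the at most $mR$ pairs $(f,k)$, a large-deviation event, and a routine binomial-tail computation shows it holds with positive probability; the whole task is to find such an $\ell$ deterministically in the stated complexity.

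For the derandomization I would use a pessimistic estimator built from elementary symmetric polynomials, in the style of \cite{alon-srin}. Encode $\ell$ by $\lceil\log_2 R\rceil$ label-bits per coordinate, and for a pair $(f,k)$ write $X_{f,k} = |f\cap T_k|$. Then $\binom{X_{f,k}}{b} = e_b\bigl(\{\,[\ell(i)=k] : i \in f\,\}\bigr)$ is a polynomial of degree $b\lceil\log_2 R\rceil$ in the label-bits, and crucially it is a sum of $\{0,1\}$-valued conjunctions of label-bits, so it has \emph{nonnegative} coefficients; hence over any $b\lceil\log_2 R\rceil$-wise, $\epsilon$-approximately independent distribution its expectation is at most $(1+\epsilon)$ times its value under a truly uniform $\ell$, with no blow-up by the $\ell_1$-Fourier-norm. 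Taking $\Phi = \sum_{f\in E}\sum_{k\in[R]}\binom{X_{f,k}}{b}$, the choice of $b$ and $R$ gives $\bE_{\mathrm{unif}}[\Phi] = \sum_{f,k}\binom{|f|}{b}R^{-b} < \tfrac12$, hence $\bE_\Omega[\Phi] < 1$ for an $\epsilon$-approximately independent $\Omega$ with $\epsilon = O(1)$. Since $\Phi$ is a nonnegative integer, any point of $\Omega$ with $\Phi < 1$ satisfies $\Phi = 0$, i.e. $|f\cap T_k| < b$ for all $f,k$.

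To actually find such a point, note that enumerating a $b\lceil\log_2 R\rceil$-wise approximately independent space (Theorem~\ref{naor2thm}) costs $2^{O(b\log R)}$ processors, which is polynomial only when $R$ is polylogarithmic. For larger $R$ (hence larger $w$) I would instead build $\ell$ one bit-layer at a time, refining the partition in $\lceil\log_2 R\rceil = O(\log w)$ rounds. In a typical round we have $2^{t-1}$ current buckets and choose the next bit of every coordinate, drawn from a fresh $O(\log mn)$-wise, $O(1)$-approximately independent space over $n$ bits (support $\poly(mn)$ by Theorem~\ref{naor2thm}, constructible in $\tilde O(\log mn)$ time), so as to split every current bucket of every $f$ roughly in half; the ``badness'' of a round is again controlled by a bounded-degree sum of elementary symmetric polynomials, and limited-independence Chernoff-type bounds let us keep each bucket of size $s$ splitting into two of size at most $s/2 + O(\sqrt{s\log mn})$, so summing the per-round deviations (a geometric series dominated by its last terms once buckets reach size $\Theta(\log mn)$) leaves every bucket of size $O(\log mn)$ after these rounds. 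A final round splits each such bucket into $(\log\log mn)^{5}$ sub-buckets: here $b\lceil\log_2\bigl((\log\log mn)^5\bigr)\rceil = O(\log mn)$, so the estimator again has degree $O(\log mn)$, the space again has $\poly(mn)$ support, and $b = \Theta(\log mn/\log\log\log mn)$ makes its $e_b$-potential have mean below $1$. Each round is $\tilde O(\log mn)$ time and $\poly(m,n,w)$ processors (build the space; evaluate the estimator in parallel over all support points, all pairs, and, if conditional expectations on the seed are used, all seed-prefixes), giving $\tilde O(\log w\,\log mn)$ time and $w^{O(1)}(m+n)$ processors overall.

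The main obstacle is exactly this complexity bookkeeping. One needs concentration strong enough --- relative deviation $o(1)$ per halving round once $s \gg \log mn$ --- so that the errors accumulated over $\Theta(\log w)$ rounds multiply out only to a $(\log\log mn)^{O(1)}$ factor (matching the $R = O(1+\tfrac{w(\log\log mn)^5}{\log mn})$ claim rather than $R = (w/\log mn)^{1+\Omega(1)}$); yet one must simultaneously keep the independence parameter at $O(\log mn)$ rather than something growing with $w$, since that parameter governs the support size and hence the processor count. Reconciling these forces the two-phase structure above (coarse halving down to bucket size $\Theta(\log mn)$, then one $(\log\log mn)^5$-way refinement) and the precise tuning of $b$ and $R$; the per-round estimator and its evaluation over the approximately independent space are the routine but delicate part. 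The remaining pieces --- the binomial-tail bound on $\bE_{\mathrm{unif}}[\Phi]$, the preprocessing to discard coordinates outside $\bigcup E$, and the merging of the rounds --- are straightforward.
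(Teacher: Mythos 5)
Your overall architecture --- assign the label bits in $\lceil \log_2 R\rceil$ rounds, derandomizing each round by minimizing a nonnegative symmetric-polynomial potential over a small approximately independent space --- is the same as the paper's, and both your one-shot analysis and the final $(\log\log mn)^5$-way refinement of size-$\Theta(\log mn)$ buckets are sound. The gap is in the coarse halving phase. You need every bucket of size $s$ to split into halves of size at most $s/2 + O(\sqrt{s\log mn})$, certified by an estimator that is a nonnegative sum of conjunctions of degree $O(\log mn)$, so that an $O(1)$-approximately independent space of polynomial support preserves its expectation up to $(1+\epsilon)$. But the degree-$b$ binomial-moment bound only gives $P\bigl(X \ge \mu(1+\delta)\bigr) \le (1+\epsilon)\binom{s}{b}2^{-b}/\binom{\mu(1+\delta)}{b} \approx e^{-b\delta + O(b^2/s)}$ with $\mu = s/2$; to drive the per-pair failure below $1/(mR)$ you need $\delta = \Omega(\log(mn)/b)$, which is a \emph{constant} relative deviation when $b = O(\log mn)$ and $s \gg \log mn$. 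Certifying deviations of order $\sqrt{s\log mn}$ requires degree (hence independence) $\Theta(\sqrt{s\log mn}) = \omega(\log mn)$, or central moments as in Bellare--Rompel, which have negative coefficients and so need exact (or superpolynomially-small-bias) $\Theta(\log mn)$-wise independence; either way the support, and hence the processor count, becomes superpolynomial. With only a constant relative loss per halving round, the number of parts inflates to $(w/\log mn)^{1+\Omega(1)}$ --- precisely the failure mode you flag --- so the reconciliation you assert in your last paragraph does not go through.

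The paper sidesteps per-round size control entirely. It fixes $t = \lceil \log(mn)/\log\log\log(mn)\rceil$ (the final threshold) once and for all and tracks the single potential $Q_\ell = \sum_{f,k} \binom{|f \cap T_k^\ell|}{t}$ across all $r = \log_2 R$ halving rounds. Each $t$-subset of a current bucket survives into a given child with probability at most $(1+\epsilon)2^{-t}$ under a $t$-wise, $\epsilon$-approximately independent space (with $\epsilon = 1/r$, support $(mn)^{o(1)} w^{O(1)}$ by Theorem~\ref{naor2thm}), so the potential contracts in expectation by $(1+\epsilon)2^{-t}$ per round; choosing each round's bit-vector from the support to minimize the potential gives $Q_r \le (1+1/r)^r 2^{-tr} Q_0 < 1$ for $R = O\bigl(1 + \frac{w(\log\log mn)^5}{\log mn}\bigr)$, and since $Q_r$ is a nonnegative integer this forces every final bucket below $t$. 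Intermediate buckets are never required to split evenly, so no per-round concentration --- and no independence beyond $t$-wise --- is ever needed; this "one potential carried across all rounds" device is the idea missing from your plan.
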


\begin{proof}

Let $r = \lceil \log_2 \frac{C w (\log \log mn)^5}{\log mn} \rceil$, where $C$ is a constant to be specified. We will construct binary vectors $y_1, \dots, y_r \in \{0, 1 \}^n$ and then define for each $k \in \{0,1 \}^r, \ell \in \{0, \dots, r\}$ the sets $T_k^{\ell} \subseteq[n]$ by
$$
T_k^{\ell} =  \{i \in [n] \mid y_1(i) = k(1) \wedge y_2(i) = k(2) \wedge \dots \wedge y_{\ell}(i) = k(\ell)  \}
$$

We will finish by setting $R = 2^r$ and forming the sets $T_1, \dots, T_R$ by $T_k = T_k^r$ where $k$ ranges over $\{0, 1 \}^r$. For each $k \in \{0,1 \}^r, \ell \in \{0, \dots, r \}, f \in E$ we define $H_\ell(f,k) = |f \cap T_k^{\ell}|$. We will achieve the goal of the theorem if we select $y_1, \dots, y_r$ so that every $f \in E, k \in \{0, 1 \}^r$ has $H_r(f,k) \leq t$ for $t = \lceil \frac{\log m n}{\log \log \log m n} \rceil$.

For each $\ell = 0, \dots, r$ let us define the potential function
$$
Q_{\ell} = \sum_{f,k} \binom{H_{\ell}(f,k)}{t}
$$
Observe that $Q_r$ is an integer; thus, if $Q_r < 1$, then it follows that $Q_r = 0$ and so $H_r(f,k) < t$ for all $f,k$ as desired. 

Let $\Omega$ be a probability distribution over $GF(2)$ which is $t$-wise, $\epsilon$-approximately independent, where $\epsilon = 1/r$, according to Definition~\ref{approx-def}. By Theorem~\ref{naor2thm}, we have $|\Omega| \leq (mn)^{o(1)} w^{O(1)}$; furthermore, the complexity of generating $\Omega$ will be negligible for the overall algorithm.

For $y_{\ell} \sim \Omega$, each $t$-tuple of elements in $f \cap T_k^\ell$ has a probability of at most $2^{-t} (1+\epsilon)$ of surviving to $T_k^{\ell+1}$. This implies that
$$
\bE[ \tbinom{H_{\ell+1}(f,k)}{t} \mid y_1, \dots, y_{\ell} ] \leq (1+\epsilon) 2^{-t} \tbinom{ H_{\ell}(f,k)}{t}
$$
and consequently $\bE[Q_{\ell+1} \mid y_1, \dots, y_{\ell}] \leq (1+\epsilon)  2^{-t}  Q_{\ell}$.

Our algorithm is to select $y_1, \dots, y_r$ sequentially in order to minimize $Q_{\ell+1}$ at each stage $\ell$. This ensures that $Q_{\ell} \leq (1+\epsilon)  2^{-t} Q_{\ell-1}$, and so at the end of the process we have
{\allowdisplaybreaks
\begin{align*}
Q_r &\leq (1+\epsilon)^r 2^{-t r} Q_0 = (1 + 1/r)^r R^{-t} \sum_{f,k} \binom{H_0(f,k)}{t} \leq e R^{1-t} m \binom{w}{t} \leq e R^{1-t} m (e w/t)^t
\end{align*}
}

Simple calculations now show that $Q_r < 1$ for $C$ a sufficiently large constant.

We now examine the complexity of this algorithm. There are $r$ stages; in each stage, we must search the probability space $\Omega$ and compute $Q_{\ell}$. The potential function $Q_{\ell}$ can be computed with quasi-complexity $(\log mn, m w R)$. Note now that $m w R \leq (mn)^{o(1)} m w^{O(1)}$. As $|\Omega|  \leq (mn)^{o(1)} w^{O(1)}$, this costs $(mn)^{o(1)} w^{O(1)} (m + n)$ processors and $\tilde O( r \log mn) = \tilde O(\log w \log mn)$ time.
\end{proof}

\subsection{The partial-expectations oracle}
As we have discussed, we need implicit access to $f_j$ in order to avoid the exponential dependence on $w$. A key idea of Berger \& Rompel \cite{berger-simulating} to achieve this is an algorithm capable of determining certain conditional expectations for the objective functions.
\begin{definition}
Algorithm $\mathcal A$ is a \emph{partial-expectations oracle (PEO)} for the functions $f_j$, if it is capable of the following operation. Given any $X' \in \{0, 1, \text{?} \}^n$, the algorithm $\mathcal A$ computes $F_j = \bE_{X \sim \Omega}[f_j(X)]$ for $j = 1, \dots, m$, where the probability distribution $\Omega$ is defined by drawing each bit $X_i$ independently, such that if $X'_i = \text{?}$ then $X_i$ is Bernoulli-$1/2$ and if $X'_i \neq \text{?}$ then $X_i = X'_i$.
\end{definition}

We note that this form of PEO is simpler than that used by Berger \& Rompel: the latter requires evaluating the conditional expectation of $f_j(X)$ given that $X$ is confined to an affine subspace, while our PEO only requires computing this conditional expectation when individual bits of $X$ are fixed. 

We now combine all the ingredients to obtain our conditional expectations algorithm.
\begin{theorem}
\label{r1fpolylog-thm}
Suppose $S(x) = \sum_{j=1}^m f_j(x)$ for $x \in \{0, 1\}^n$, where each $f_j$ is a $w$-junta. Suppose we have a PEO for the functions $f_j$ with complexity $(C_1, C_2)$. 

Then there is an algorithm to find a vector $x$ satisfying 
$$
S(x) \geq \bE_{X \sim \{0,1 \}^n} S(X),
$$ 
with quasi-complexity $( C_1 (1 +\frac{w}{\log mn}), w^{O(1)} C_2)$.
\end{theorem}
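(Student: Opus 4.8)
The plan is to wrap a single ``outer'' conditional-expectations loop around the partition of Lemma~\ref{r1fdiscrepthm}, processing one part at a time, solving each part with the junta-solver of Lemma~\ref{junta-thm1}, and using the PEO --- routed through a suitable universal set --- to supply the truth tables that Lemma~\ref{junta-thm1} requires. First I would apply Lemma~\ref{r1fdiscrepthm} to the list $E = \{Y_1, \dots, Y_m\}$ (so $|E| \le m$ and $\text{width}(E) \le w$), obtaining a partition $[n] = T_1 \sqcup \dots \sqcup T_R$ with $R = O\bigl(1 + \tfrac{w(\log\log mn)^5}{\log mn}\bigr)$ and $|Y_j \cap T_k| \le w' := O\bigl(\tfrac{\log mn}{\log\log\log mn}\bigr)$ for all $j,k$. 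I would also build, once, a \emph{$w'$-universal} set $\mathcal U \subseteq \{0,1\}^n$: a family whose restriction to any $w'$ coordinates realizes all $2^{w'}$ patterns. Taking the support of an $\epsilon$-almost $w'$-wise independent space with $\epsilon = 2^{-w'-1}$ works, since then no pattern can have probability $0$; so Theorem~\ref{naor2thm} furnishes such a set of size $|\mathcal U| = 2^{O(w' + \log\log n)} = (mn)^{o(1)}$ with negligible cost. Both of these are one-time steps costing $w^{O(1)}(m+n)$ processors and $\tilde O(\log^2 mn)$ time.

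The loop maintains a partial assignment $X' \in \{0,1,?\}^n$, initially all $?$, and in round $k$ fills in the coordinates of $T_k$; writing $\Omega(X')$ for the product distribution of the PEO definition, the invariant is that $\bE_{X \sim \Omega(X')}[S(X)]$ never decreases. In round $k$, define for each $j$ the function $h_j(z) = \bE_{X \sim \Omega(X')}[\,f_j(X) \mid X|_{Y_j \cap T_k} = z\,]$; this is a $w'$-junta on $Y_j \cap T_k \subseteq T_k$, and $\sum_j h_j$ is the function we must weakly maximize over assignments to $T_k$. To get its summands in truth-table form, I would issue in parallel, for every $u \in \mathcal U$, one PEO query on the input obtained from $X'$ by setting the coordinates of $T_k$ to $u|_{T_k}$ (leaving the later parts as $?$). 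Since $\mathcal U$ is $w'$-universal and $|Y_j \cap T_k| \le w'$, the values the oracle returns for $f_j$ range over all $2^{|Y_j \cap T_k|}$ values of $h_j$, so a cheap regrouping recovers every truth table. Feeding these into Lemma~\ref{junta-thm1} (with $w'$ for $w$ and $|T_k|$ for $n$) yields an assignment $z^\star$ of $T_k$ with $\sum_j h_j(z^\star) \ge \bE_{z} \sum_j h_j(z) = \bE_{X \sim \Omega(X')}[S(X)]$, which is exactly what preserves the invariant (by the tower property applied coordinate-block by coordinate-block). After $R$ rounds $X'$ is a full vector $x$ with $S(x) = \bE_{X \sim \Omega(x)}[S(X)] \ge \bE_{X \sim \{0,1\}^n}[S(X)]$.

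For the accounting: each round runs one parallel batch of $|\mathcal U| = (mn)^{o(1)}$ PEO queries, at $\tilde O(C_1)$ time and $C_2^{1+o(1)}$ processors, plus one call to Lemma~\ref{junta-thm1} at $\tilde O(\log mn)$ time and $2^{w'} m + n \le C_2^{1+o(1)}$ processors, where I use that one may assume $C_2 \ge m+n$ (the oracle must at least touch its $n$-bit input and $m$-entry output), so that $(mn)^{o(1)} \le C_2^{o(1)}$. Over the $R$ rounds, and since the $(\log\log mn)^5$ factor in $R$ is absorbed by quasi-complexity, the total time is $\tilde O\bigl(C_1(1 + w/\log mn)\bigr)$ --- using the mild bound $C_1 \ge \log mn$ to absorb the additive $\tilde O(\log mn + w)$ coming from the Lemma~\ref{junta-thm1} calls --- and the processor count is $C_2^{1+o(1)}$, which together with the $w^{O(1)}(m+n)$ preprocessing cost is $w^{O(1)} C_2$ in quasi-complexity.

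The step I expect to be the crux is the universal-set trick inside the loop. The naive move --- Fourier-decompose each $f_j$ restricted to $T_k$ --- would require about $m 2^{w'}$ independent PEO queries per round and hence a super-linear $m^{2-o(1)}$ processor count once $C_2 \gtrsim m$; routing every query through a single $w'$-universal family of size $(mn)^{o(1)}$ is what keeps the per-round overhead sub-polynomial. This is possible only because the partition lemma forces $w' = o(\log mn)$, so that $2^{w'}$, and with it $|\mathcal U|$ and the internal cost of Lemma~\ref{junta-thm1}, stays in $(mn)^{o(1)}$. The remaining points --- that $\epsilon$-almost independence with $\epsilon$ just below $2^{-w'}$ really forces every pattern to occur, that $2^{O(w' + \log\log n)} = (mn)^{o(1)}$, and the exactness of the invariant's bookkeeping --- are routine.
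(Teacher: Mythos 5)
Your proposal is correct and follows essentially the same route as the paper: partition the variables via Lemma~\ref{r1fdiscrepthm}, process the parts $T_1,\dots,T_R$ sequentially while maintaining the conditional-expectation invariant, use the PEO to obtain truth tables of the restricted $w'$-juntas, and solve each part with Lemma~\ref{junta-thm1}. Your explicit $w'$-universal family (from an $\epsilon$-almost $w'$-wise independent space with $\epsilon<2^{-w'}$) is a legitimate and slightly more careful implementation of the step the paper states only as ``$2^{w'}$ invocations of our PEO,'' and the cost accounting matches the claimed quasi-complexity.
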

\begin{proof}
We assume $C_1 \geq \Omega(\log mn)$ and $C_2 \geq \Omega(m+n)$ as it requires this complexity to take as input the values $j, X'$ and output $F_j$. We similarly assume that $n \leq m w$, as variables not involved in any objective function may be ignored.

First apply Lemma~\ref{r1fdiscrepthm} to determine a partition $[n] = T_1 \sqcup \dots \sqcup T_R$ for some $R = O( 1+ \frac{w (\log \log mn)^5}{ \log mn})$, such that $|Y_j \cap T_k| \leq w'$ for some $w' \leq O( \frac{\log mn}{\log \log \log mn} )$.  This stage has quasi-complexity $(\log w \log mn, \penalty 0 w^{O(1)} (m + n))$.

Next, for $r = 1, \dots, R$, we seek to determine the bits $\{x_i  \mid i \in T_r \}$. Define the function $f'_j(z)$ to be the expected value of $f_j(X)$, when the entries $X_i$ for $i \in T_r$ are set to $z_i$, the variables $X_i$ for $i \in T_1, \dots, T_{r-1}$ are set to $x_i$, and the remaining entries of $X$ (for $i \in T_{r+1}, \dots, T_R$) remain fair coins. Each $f'_j$ is a $w'$-junta and we can determine its truth-table $f'_j$ using $2^{w'}$ invocations of our PEO, where we define $X'_i = \text{?}$ for $i \in T_{r+1} \cup \dots \cup T_{R}$ and $X'_i \neq \text{?}$ otherwise. This in turn requires $C_1 + \tilde O(\log(2^{w'} mn)) \leq \tilde O(C_1)$ time and $(mn)^{o(1)} 2^{w'} C_2 \leq (mn)^{o(1)} C_2$ processors.

Next, apply Lemma~\ref{junta-thm1} to determine a value for the relevant variables in $T_r$; this step takes $\tilde O(w' + \log mn) \leq \tilde O(\log mn)$ time and $(n + 2^{w'} m)^{1+o(1)} \leq (mn)^{o(1)} (m+n)$ processors.

Over all $R$ stages, the total time for this algorithm is $\tilde O(R C_1) \leq \tilde O(\frac{w C_1}{\log mn} + w + C_1)$.
\end{proof}

We emphasize the low time and processor complexity of this algorithm. For example, if $w = \polylog(mn)$ and $C_1 = \tilde O(\log mn)$ (which are typical parameters), then this has quasi-complexity $(w, C_2)$. Even if $w = \Theta(\log mn)$, this can lead to greatly reduced complexities as compared to the algorithm of \cite{berger-simulating}.

This algorithm requires an appropriate PEO, which must be constructed in a problem-specific way. One important class of objective functions, which was one of the main cases considered by Berger \& Rompel \cite{berger-simulating}, is indicator functions for affine spaces; PEO's for such functions can be derived by a rank calculation. We will consider more complicated types of PEO's; one significant difficulty, as we discuss next, is that many objective functions are naturally represented as functions of integer-valued variables (not just isolated bits).

\subsection{Non-binary variables}
\label{r1multi-bit-sec}
Let us consider a slightly more general setting: we have $n$ variables $x_1, \dots, x_n$, each of which is an integer in the range $\{0, \dots, 2^b - 1 \}$. Our objective function is again a sum of $w$-juntas, that is, each $f_j(x)$ depends on at most $w$ coordinates of $x$.  This can easily be reduced to the model we have discussed earlier: we replace each variable $x_i$ with $b$ separate binary variables $x_{i1}, \dots, x_{ib}$. Now each $f_j$ depends on $w b$ bits of the expanded input, and so is a $wb$-junta.

However, there is a complication. In order to apply Theorem~\ref{r1fpolylog-thm}, we need a PEO for the functions $f_j$. Thus we need to compute the expected value of $f_j(x)$, given that certain \emph{bits} of $x$ are fixed to specific values. This can be somewhat awkward, as restricting arbitrary bits of $x_i$ does not necessarily have any natural interpretation when $x_i$ is an integer in the range $\{0, \dots, 2^b - 1 \}$. It is often easier to use the strategy of \cite{berger-simulating}, which fixes the bit-levels of $x_1, \dots, x_n$ one at a time. This allows us to use a simpler type of PEO where the pattern of known/unknown bits is more controlled.

For the purposes of the algorithm, we identify the integer set $\{0, \dots, 2^b - 1 \}$ with the set of length-$b$ binary vectors; a vector $(x_0, \dots, x_{b-1})$ corresponds to the integer $\sum_{i=0}^{b-1} 2^i x_{b-1-i}$. Note here that $x_0$ is the \emph{most-significant bit}. Let us define $\mathcal M_b$ as the set $\{0, \dots, 2^b - 1 \}$ equipped with this bit-based interpretation. Likewise, if $x \in \mathcal M_b^n$, we let $x(i, j)$ denote the $j^\text{th}$ most significant bit of the integer value $x_i$.

\begin{definition}
\label{grade-def}
We say that $X' \in \{0, 1, \text{?} \}^{n b}$ is \emph{graded} if there is some integer $\ell \in \{0, \dots, b - 1 \}$ such that for all $i, j$ the following two conditions hold:
\begin{enumerate}
\item $X'(i, j) \in \{0, 1 \}$ for $j = 0, \dots, \ell-1$
\item $X'(i,j) = \text{?}$ for $j = \ell+1, \dots, b-1$
\end{enumerate}

We say that $X'$ is \emph{fully-graded} if $X'$ satisfies for some integer $\ell \in \{0, \dots, b \}$ the stricter condition that for all $i, j$ the following two conditions hold:
\begin{enumerate}
\item $X'(i,j) \in \{0, 1 \}$ for $j = 0, \dots, \ell-1$
\item $X'(i,j) = \text{?}$ for $j = \ell,  \dots, b-1$
\end{enumerate}

An algorithm $\mathcal A$ is a \emph{graded PEO} (respectively \emph{fully-graded PEO}) for the functions $f_j$ if it is a PEO, but only for queries $X'$ which are graded (respectively, fully-graded).
\end{definition}

\begin{theorem}
\label{r1fpolylog-thm2}
Suppose that $S(x) = \sum_{j=1}^m f_j(x)$ for $x \in \mathcal M_b^n$, where each function $f_j$ is a $w$-junta, and we have a graded PEO for the functions $f_j$ with complexity $(C_1, C_2)$.

Then we can find $x \in \{0, 1 \}^n$ satisfying $S(x) \geq \bE_{X \sim \mathcal M_b^n}[S(X)]$, using quasi-complexity $(b C_1 (1 + \frac{w}{\log(mn)}), w^{O(1)} C_2)$.
\end{theorem}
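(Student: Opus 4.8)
The plan is to reduce to the binary case of Theorem~\ref{r1fpolylog-thm} by peeling off the bit-levels $\ell = 0, 1, \dots, b-1$ of the variables one at a time, from the most significant to the least significant — this order being chosen precisely so that the intermediate PEO queries are graded. As in Theorem~\ref{r1fpolylog-thm}, we may assume $C_1 \geq \Omega(\log mn)$, $C_2 \geq \Omega(m+n)$, and $n \leq m w$, since these are needed just to read the input and write the output of the graded PEO.

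First I would set up the outer loop and maintain the invariant that, at the start of stage $\ell$, every bit $x(i,j)$ with $j < \ell$ has been fixed to a specific value by the earlier stages, while all bits $x(i,j)$ with $j \geq \ell$ are regarded as independent fair coins. For $z \in \{0,1\}^n$, define $g^{(\ell)}_j(z)$ to be $\bE[f_j(X)]$, where $X(i,\ell) = z_i$, each $X(i,j)$ with $j < \ell$ equals its already-fixed value, and each $X(i,j)$ with $j > \ell$ is an independent fair coin. Since $f_j$ is a $w$-junta in the integer coordinates, $g^{(\ell)}_j$ depends on at most $w$ of the bits $z_1, \dots, z_n$, so it is a $w$-junta on $n$ binary variables. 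The key observation is that the graded PEO supplies a (plain) PEO for $g^{(\ell)}_1, \dots, g^{(\ell)}_m$: given any $X' \in \{0,1,\text{?}\}^n$, form the query $\tilde X' \in \{0,1,\text{?}\}^{nb}$ with $\tilde X'(i,j)$ equal to the fixed value for $j < \ell$, equal to $X'_i$ for $j = \ell$, and equal to $\text{?}$ for $j > \ell$. By the loop invariant this $\tilde X'$ is graded with grading parameter $\ell$, and by construction $\bE_{X \sim \tilde X'}[f_j(X)] = \bE_{Z \sim X'}[g^{(\ell)}_j(Z)]$; hence one call to the graded PEO answers one PEO query for the $g^{(\ell)}_j$'s, with complexity $(C_1, C_2)$ up to lower-order terms. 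Applying Theorem~\ref{r1fpolylog-thm} to $g^{(\ell)}_1, \dots, g^{(\ell)}_m$ with this PEO then produces, in quasi-complexity $(C_1 (1 + \frac{w}{\log mn}), w^{O(1)} C_2)$, a vector $z^{(\ell)} \in \{0,1\}^n$ with $\sum_j g^{(\ell)}_j(z^{(\ell)}) \geq \bE_Z[\sum_j g^{(\ell)}_j(Z)]$; I then fix $x(i,\ell) := z^{(\ell)}_i$ for all $i$, which re-establishes the invariant for stage $\ell+1$.

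Next I would run the telescoping argument. Let $\Phi_\ell$ denote the expected value of $S(X)$ when the bit-levels $0, \dots, \ell-1$ are fixed as above and the levels $\ell, \dots, b-1$ are fair coins, so that $\Phi_0 = \bE_{X \sim \mathcal M_b^n}[S(X)]$ and $\Phi_b = S(x)$ once every bit has been fixed. Unwinding the definitions, $\sum_j \bE_Z[g^{(\ell)}_j(Z)]$ equals $\Phi_\ell$ (averaging over a uniform $Z$ turns level $\ell$ back into fair coins) while $\sum_j g^{(\ell)}_j(z^{(\ell)})$ equals $\Phi_{\ell+1}$, so the choice of $z^{(\ell)}$ guarantees $\Phi_{\ell+1} \geq \Phi_\ell$. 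Chaining this over $\ell = 0, \dots, b-1$ gives $S(x) = \Phi_b \geq \Phi_0 = \bE_{X \sim \mathcal M_b^n}[S(X)]$, as desired. For the complexity, the algorithm consists of $b$ stages run sequentially, each of quasi-complexity $(C_1(1+\frac{w}{\log mn}), w^{O(1)}C_2)$ (the partition step of Lemma~\ref{r1fdiscrepthm} is absorbed inside each call to Theorem~\ref{r1fpolylog-thm}), so the processor bound is unchanged and the time bound picks up a factor of $b$, yielding overall quasi-complexity $(b C_1(1 + \frac{w}{\log mn}), w^{O(1)} C_2)$.

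The only genuinely delicate point is verifying that every query issued by the inner invocation of Theorem~\ref{r1fpolylog-thm} at stage $\ell$ translates to a \emph{graded} query for the $f_j$'s. This rests entirely on the loop invariant that at stage $\ell$ all more-significant bit-levels are completely fixed and all less-significant ones are completely free; once that invariant is in place, the translated query $\tilde X'$ is automatically graded, and the remainder is routine bookkeeping of conditional expectations and complexities. Everything else — that each $g^{(\ell)}_j$ is a $w$-junta, that the expectations telescope, and that the costs add as claimed — follows directly.
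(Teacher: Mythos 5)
Your proposal is correct and follows essentially the same route as the paper's proof: the paper likewise determines the bits in $b$ stages from most to least significant, defines at stage $\ell$ the functions $f_{\ell,j}(z)$ (your $g^{(\ell)}_j$), observes that the graded PEO yields a plain PEO for them, and invokes Theorem~\ref{r1fpolylog-thm} at each stage. Your write-up simply makes explicit the telescoping of conditional expectations and the verification that the translated queries are graded, which the paper leaves implicit.
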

\begin{proof}
We will determine the bits of $x$ in $b$ separate stages; at the $\ell^{\text{th}}$ stage, we determine the bit-level $\ell$ of each entry $x_i$. For $\ell = 0, \dots, b-1$, consider the following process. Define the function $f_{\ell,j}(z)$ to be the expected value of $f_{j}(X)$, when the bit-levels $0, \dots, \ell-1$ of $X$ are taken from the already-determined vector $x$; when the bit-level $\ell$ of $X$ is set to $z$; and when the bit-levels $\ell+1, \dots, b-1$ of $X$ are independent fair coins.

Each $f_{\ell,j}$ is a $w$-junta, and the graded PEO for the functions $f_j$ yields a PEO for the functions $f_{\ell,j}$. Therefore Theorem~\ref{r1fpolylog-thm} produces a $z \in \{0, 1 \}^n$ with $\sum_j f_{\ell, j}(z) \geq \bE_{Z \sim \{0, 1 \}^n } [ \sum_j f_{\ell,z}(Z)]$.
\end{proof}

One important application of non-binary variables concerns derandomizing biased coins. For a vector of probabilities $p \in [0,1 ]^n$, consider the probability space  with $n$ independent variables $X_1, \dots, X_n$, wherein each $X_i$ is Bernoulli-$p_i$. We write this more compactly as $X \sim p$. Most of our derandomization results we have proved earlier have assumed that the underlying random bits are independent fair coins (i.e. with $p_1, \dots, p_n = 1/2$).

\begin{definition}
An algorithm $\mathcal A$ is a \emph{continuous PEO} for the functions $f_j$, if it is capable of the following operation. Given any vector $q \in [0,1]^n$, whose entries are rational number with denominator $2^b$, the algorithm $\mathcal A$ computes $F_j = \bE_{X \sim q}[f_j(X)]$ for $j = 1, \dots, m$.
\end{definition}

Note that a PEO can be regarded as a special case of a continuous PEO, in which the probability vector $q$ is restricted to the entries $\{0, 1/2, 1 \}$.

\begin{theorem}
  Suppose that $S(x) = \sum_{j=1}^m f_j(x)$ for $x \in \{0, 1\}^n$, where each function $f_j$ is a $w$-junta, and we have a continuous PEO for the functions $f_j$ with complexity $(C_1, C_2)$.

  Let $p \in [0,1]^n$ be a vector of probabilities, wherein each entry $p_i$ is a rational number with denominator $2^b$. Then we can find a vector $x$ satisfying $S(x) \geq \bE_{X \sim p}[S(X)]$, using quasi-complexity  $(b C_1 (1 + \frac{w}{\log mn}), (w b)^{O(1)} C_2))$.
\end{theorem}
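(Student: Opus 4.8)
The plan is to reduce the biased-coin problem to the uniform, bit-based problem already solved by Theorem~\ref{r1fpolylog-thm2}. For each $i$, since $2^b p_i$ is an integer, we simulate $X_i$ by a fresh uniform integer $Z_i \in \mathcal M_b$ via $X_i = X_i(Z_i) := [\,Z_i < 2^b p_i\,]$; this gives $\bE_{Z_i \sim \mathcal M_b}[X_i(Z_i)] = p_i$, and the $X_i$ are mutually independent when $Z = (Z_1,\dots,Z_n) \sim \mathcal M_b^n$. Define $g_j(Z) = f_j(X_1(Z_1),\dots,X_n(Z_n))$. Each $g_j$ depends on $Z_i$ only for the $\le w$ indices $i$ on which $f_j$ depends, so $g_j$ is a $w$-junta over $\mathcal M_b^n$, and $\bE_{Z \sim \mathcal M_b^n}[g_j(Z)] = \bE_{X \sim p}[f_j(X)]$. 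Hence it suffices to run the algorithm of Theorem~\ref{r1fpolylog-thm2} on $\sum_j g_j$: it returns $Z$ with $\sum_j g_j(Z) \ge \bE_{Z \sim \mathcal M_b^n}[\sum_j g_j(Z)] = \bE_{X \sim p}[S(X)]$, and we output $x_i = X_i(Z_i) \in \{0,1\}$.

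To invoke Theorem~\ref{r1fpolylog-thm2} we must exhibit a graded PEO for the functions $g_j$, and this is exactly where the graded (prefix) conditioning pays off. In a graded query $Z'$, for each $i$ a prefix of the high-order bit-levels of $Z_i$ is fixed while the low-order levels remain fair coins; thus conditionally $Z_i$ is uniform on an interval $I_i \subseteq \{0,\dots,2^b-1\}$ of length a power of two dividing $2^b$, and the $Z_i$ stay independent. Therefore the conditional law of $X = (X_1(Z_1),\dots,X_n(Z_n))$ is precisely $X \sim q$ with $q_i = |I_i \cap \{0,\dots,2^b p_i - 1\}| \,/\, |I_i|$, a rational with denominator $|I_i|$ dividing $2^b$; and $\bE[g_j(Z) \mid Z'] = \bE_{X \sim q}[f_j(X)]$, which the continuous PEO evaluates on input $q$. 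Extracting the intervals $I_i$ and forming $q$ from $Z'$ is elementary $O(b)$-bit integer arithmetic and is negligible next to the PEO's own cost (recall $C_1 \ge \Omega(\log mn)$, $C_2 \ge \Omega(m+n)$, $n \le mw$), so this graded PEO has complexity $(\tilde O(C_1),\, b^{O(1)} C_2)$.

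Applying Theorem~\ref{r1fpolylog-thm2} with this graded PEO (and $w$-juntas over $\mathcal M_b^n$) yields quasi-complexity $(b\cdot\tilde O(C_1)(1 + \tfrac{w}{\log mn}),\ w^{O(1)} b^{O(1)} C_2) = (b C_1(1 + \tfrac{w}{\log mn}),\ (wb)^{O(1)} C_2)$, as claimed. I expect the only genuinely delicate point to be the middle step: it is the combination of a \emph{continuous} PEO with \emph{graded} conditioning that keeps each per-variable conditional distribution uniform on an interval, hence equal to an independent biased coin with denominator dividing $2^b$ — this compatibility would break if we instead allowed arbitrary bit positions of the $Z_i$ to be fixed, or if we had only an ordinary (non-continuous) PEO available to answer the queries.
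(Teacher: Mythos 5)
Your proposal is correct and follows essentially the same route as the paper: encode each biased coin as a threshold of a uniform $b$-bit integer, observe that graded (prefix) conditioning leaves each coordinate uniform on a dyadic interval so the conditional law of the thresholded bits is an independent product of Bernoulli-$q_i$'s with dyadic $q_i$, hence the continuous PEO answers graded queries, and then invoke Theorem~\ref{r1fpolylog-thm2}. Your explicit identification of the conditional probabilities $q_i$ and the remark about why graded (rather than arbitrary-bit) conditioning is essential is a slightly more careful write-up of exactly the step the paper states briefly.
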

\begin{proof}
Consider the function $f'_j: \mathcal M_b^n \rightarrow \mathbf R$ defined by 
$$
f'_j(y_1, \dots, y_n) = f_j \bigl( [y_1/2^b \leq p_1], \dots, [y_n/2^b \leq p_n] \bigr)
$$

Each function $f'_j$ depends on $w$ coordinates of $y$. Furthermore, if certain most-significant bit levels of $y$ are fixed to a certain value and the remaining least-significant bit-levels of $y$ are independent fair coins, then each term $[y_i/2^b \leq p_i]$ is a Bernoulli-$q_i$ variable, where $q_i$ depends on the fixed values of $y_i$. Therefore, the given continuous PEO for $f_j$ provides a graded PEO for the functions $f'_j$, with a complexity of $(\log (nb) + C_1, C_2 + nb)$. 

Finally, observe that when $Y \sim \mathcal M_b^n$, each term $[Y_i/2^b \leq p_i]$ is Bernoulli-$p_i$; therefore, we have
$$
\bE_{Y \sim \mathcal M_b^n} [ \sum_j f'_j(Y) ]  = \bE_{X \sim p} [ \sum_j f_j(X)]
$$

So Theorem~\ref{r1fpolylog-thm2} produces $y_1, \dots, y_n \in \mathcal M_b^n$ with $\sum_j f'_j(y) \geq \bE_{X \sim p} [ \sum_j f_j(X)]$. Output the vector $x \in \{0,1 \}^n$ defined by $x_i = [y_i/2^b \leq p_i ]$.
\end{proof}

This leads to PEOs for the class of functions computed by a read-once branching program (ROBP). In this computational model, the function $f$ is represented as a directed acyclic graph; at each node $v$, a single variable $x_v$ is read and the program branches to two possible destinations depending on the variable $x_v$. There is a designated starting vertex and at some designated sink vertices, a real number is output. In addition, every variable label appears at most once on each directed path. This is a quite general class of functions, which includes log-space statistical tests as used by Sivakumar's derandomization \cite{sivakumar}. See \cite{borodin-cook} for further details.
\begin{proposition}
\label{robp2}
If a $w$-junta $f$ can be computed by a ROBP on $M$ states, then it has a continuous PEO with quasi-complexity $(\log b \log w \log Mw, b M^3 w^{O(1)})$.
\end{proposition}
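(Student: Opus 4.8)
The plan is to recognize $\bE_{X \sim q}[f(X)]$ as a weighted reachability (algebraic transitive-closure) quantity on the DAG underlying the ROBP, and to compute it in parallel via a short geometric series of matrix powers; the $w$-junta hypothesis is what keeps both the number of iterations and the bit-length of the numbers involved small.

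First I would normalize and set up the matrix. Since $f$ is a $w$-junta we may assume, as is standard, that the ROBP queries only the $\le w$ variables in the support of $f$, so every directed path from the start state $s$ to a sink has length at most $w$; topologically sort the $M$ states. A state $v$ querying variable $x_i$ has two out-edges (to its ``$x_i=0$'' and ``$x_i=1$'' children) with weights $1-q_i$ and $q_i$ --- nonnegative rationals of denominator $2^b$ summing to $1$ --- while each sink $v$ carries an output value $\mathrm{out}(v)$. Let $A$ be the $M \times M$ matrix whose $(u,v)$ entry is the weight of the edge $(u,v)$ if present and $0$ otherwise; after the topological sort $A$ is strictly upper triangular with every row sum at most $1$. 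Since each variable is read at most once along a path, the weights of the edges on any $s$-to-$v$ path are probabilities of independent events, so the probability $p(v)$ that a random input $X\sim q$ drives the computation through $v$ equals the sum over all $s$-to-$v$ paths of the product of the edge weights; hence $(p(v))_v = \mathbf e_s^{\top}\sum_{k \ge 0} A^k$ and $\bE_{X \sim q}[f(X)] = \sum_{v \text{ a sink}} p(v)\,\mathrm{out}(v)$. Because every path has length at most $w$, we have $A^{w+1}=0$, so $\sum_{k\ge 0}A^k = \sum_{k=0}^{w}A^k = (I-A)^{-1}$.

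I would compute $B := \sum_{k=0}^{w} A^k$ by repeated doubling: starting from $(S,P)=(I,A)$, repeatedly set $(S,P)\leftarrow(S+SP,\ P^2)$ so that after $t$ steps $S = \sum_{k=0}^{2^{t+1}-1}A^k$; stopping after $\lceil \log_2(w+1)\rceil$ steps yields $S = B$, after which $\bE_{X\sim q}[f(X)]$ is read off by an inner product over the $\le M$ sinks. This is $O(\log w)$ multiplications of $M \times M$ matrices. Crucially, every entry arising in this process is a nonnegative rational in $[0,1]$ --- the partial sums are dominated entrywise by $(I-A)^{-1}$, whose $(u,v)$ entry is the probability of reaching $v$ from $u$, and the powers $A^{2^t}$ are row-substochastic --- with denominator dividing $2^{O(bw)}$, hence representable in $\tilde O(bw)$ bits. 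By the standard NC bounds for integer multiplication, iterated integer addition (via carry-save), and naive $M\times M$ matrix multiplication, each matrix multiplication costs $b M^3 w^{O(1)}$ work; multiplying through the $O(\log w)$ doubling rounds and adding the topological sort gives the claimed quasi-complexity $(\log b\,\log w\,\log Mw,\ b M^3 w^{O(1)})$.

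The conceptual part --- that an expectation along a read-once program is a weighted transitive closure, computable in NC by matrix powering --- is routine. The only real care lies in the complexity accounting: one must verify that the numerators stay within $w^{O(1)}b$ digits, which is exactly where the length-$\le w$ bound on paths together with the row-substochasticity of $A$ is used, and that the logarithmic depth contributions from the $O(\log w)$ doubling rounds, the $O(\log M)$-high summation trees inside a matrix product, and the $O(\log b)$-deep bit operations compose into the stated product of three logarithms rather than a higher power of one.
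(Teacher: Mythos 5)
Your proposal is correct and follows essentially the same route as the paper: both compute the reachability/transition probabilities of the ROBP over time horizons doubling up to $w$ via $O(\log w)$ multiplications of $M \times M$ matrices whose entries are rationals with denominator a power of two of $\tilde O(bw)$ bits, then read off the expectation at the sinks. Your geometric-series formulation $\sum_{k=0}^{w} A^k$ with the $(S,P)\leftarrow(S+SP,P^2)$ doubling is just a repackaging of the paper's ``probability of going from $s_1$ to $s_2$ in at most $h$ steps'' recursion, and your complexity accounting matches the claimed bounds.
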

\begin{proof}
We must calculate the expected value of $f$, given that the variables $X_1, \dots, X_n$ are independent Bernoulli-$q_i$. Now observe that, for any states $s_1, s_2$, the probability that $s_1$ goes to $s_2$ in at most $h$ time-steps
is the sum over intermediate states $s$ of the probability that $s_1$ goes to $s$ in at most $h/2$ time-steps and that $s$ goes to $s_2$ in at most $h/2$ time-steps; this follows from the definition of an ROBP.  Using this relation, one may recursively build the transition matrix for pairs of states $s_1 \rightarrow s_2$ over time horizons $h = 1, 2, 4, \dots, w$. Each such iteration takes time $\tilde O(\log b \log Mn)$ and there are $\tilde O(\log w)$ iterations.
\end{proof}

\subsection{Application: rainbow hypergraph coloring}
\label{r1frainbow-sec}
As a simple example of our derandomization method, let us consider a $d$-uniform  hypergraph $H$, with $m$ edges and $n$ vertices. We say that an edge $e$ is \emph{rainbow} for a vertex coloring of $H$, if all its vertices receive distinct colors. A challenge determinization problem is to find a $d$-coloring with $m \frac{d!}{d^d}$ rainbow edges, which is the expected number in a uniform random  coloring. In \cite{alon-babai}, an NC algorithm was given in the case $d = O(1)$. This was extended by \cite{berger-simulating} to arbitrary $d$; although \cite{berger-simulating} did not give any concrete time or processor complexity, the complexity appears to be roughly $(\log^4 mn, n + m^{1 + \log 2})$. 

We significantly improve both the time and processor costs. Note that it requires $\Omega(m d + n)$ space to store the hypergraph.
\begin{theorem}
\label{r1frainbow-thm}
There is an algorithm to find a vertex coloring with $d$ colors and at least $m \frac{d!}{d^d}$ rainbow edges, using quasi-complexity $(\log^2 mn, m d + n)$.
\end{theorem}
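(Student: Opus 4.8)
The plan is to realize the number of rainbow edges as a sum of juntas and feed it to Theorem~\ref{r1fpolylog-thm2}. For a $d$-coloring $x$ of the vertices, let $f_j(x)$ be the indicator that edge $e_j$ is rainbow; then $S(x) := \sum_{j=1}^m f_j(x)$ counts the rainbow edges, under a uniformly random coloring $\bE[S(X)] = m\,d!/d^d$, and the target is exactly an $x$ with $S(x) \ge \bE[S(X)]$. Each $f_j$ depends only on the colors of the $d$ vertices of $e_j$, so it is a $d$-junta. First I would peel off an easy case: if $m\,d!/d^d < 1$ then, since a count of rainbow edges is a nonnegative integer, it suffices to produce a single rainbow edge --- color the $d$ vertices of one edge with the $d$ distinct colors and everything else arbitrarily --- which costs $O(1)$ time and $O(md+n)$ processors. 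So we may assume $m\,d!/d^d \ge 1$; since $d^d/d! = e^{d - O(\log d)}$ by Stirling's formula, this forces $d \le \log m + O(\log\log m) = O(\log mn)$, and this is precisely the bound that keeps everything polynomial-processor. I also assume, for notational simplicity, that $d$ is a power of two, so a color is literally an element of $\mathcal M_b$ with $b = \log_2 d = O(\log\log mn)$ (the general case needs only routine changes). We are then squarely in the setting of Section~\ref{r1multi-bit-sec}: $S = \sum_j f_j$ over $x \in \mathcal M_b^n$, each $f_j$ a $d$-junta, and $\bE_{X\sim\mathcal M_b^n}[S(X)] = m\,d!/d^d$.

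The main work is to supply a graded PEO for the $f_j$ of low complexity. The naive route --- tabulating each $d$-junta $f_j$ and applying Lemma~\ref{junta-thm1} --- costs $2^d m$ processors, which here is $(mn)^{\Omega(1)}\cdot m$ and far too much, so instead I would compute $\bE[f_j]$ directly from the structure of a graded query. When the top $\ell$ bit-levels of every vertex color have been fixed and the remaining levels are fair coins, the color of a vertex $v$ is uniform over the dyadic block $B_v$ of length $2^{b-\ell}$ picked out by those top bits. The crucial observation is that two dyadic blocks of equal length are either identical or disjoint; so, for edge $e_j$, grouping its $d$ vertices by the block they land in with multiplicities $n_\beta$ (with $\sum_\beta n_\beta = d$), the edge is rainbow iff the $n_\beta$ vertices in each block receive distinct colors, and these events are independent across blocks. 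Hence
$$
\bE[f_j] = \prod_{\beta}\frac{(2^{b-\ell})!\,/\,(2^{b-\ell}-n_\beta)!}{(2^{b-\ell})^{n_\beta}} = \prod_{\beta}\prod_{i=0}^{n_\beta-1}\Bigl(1 - \frac{i}{2^{b-\ell}}\Bigr),
$$
so (sorting the $d$ vertices of $e_j$ by block label and taking a product) all the $F_j$ can be computed in $\tilde O(\log mn)$ time using $m\poly(d) + n = (md+n)^{1+o(1)}$ processors. This is a graded PEO of quasi-complexity $(\log mn,\,md+n)$.

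Finally I would invoke Theorem~\ref{r1fpolylog-thm2} with $w = d$, $b = O(\log\log mn)$, and PEO complexity $(C_1,C_2) = (\tilde O(\log mn),\,(md+n)^{1+o(1)})$; since $w = O(\log mn)$ its bound $(bC_1(1+w/\log mn),\,w^{O(1)}C_2)$ comes out to the claimed quasi-complexity $(\log^2 mn,\,md+n)$, which is essentially optimal since even storing the hypergraph requires $\Omega(md+n)$ space. The step I expect to be the real obstacle is the graded PEO: one has to see that the exact rainbow probability under a partially-fixed coloring still factors cleanly (via the dyadic equal-or-disjoint property), which is what avoids the exponential-in-$d$ blow-up in processors; the preliminary reduction to $d = O(\log mn)$, while easy, is what makes the junta framework applicable with polynomial work in the first place.
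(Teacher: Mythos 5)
Your high-level framework is the same as the paper's: reduce to $d \le \log m + O(\log\log m)$, view the rainbow count as a sum of $d$-juntas over $\mathcal M_b^n$, build a graded PEO, and invoke Theorem~\ref{r1fpolylog-thm2}. However, the step you yourself identify as the crux --- the graded PEO --- is only established in a special case, and the reduction from the general case is not ``routine.'' When $d$ is not a power of two you cannot make a vertex's color exactly uniform on $\{0,\dots,d-1\}$ from finitely many fair bits; the paper instead uses the floor map $\phi_x(v) = \lfloor (d/2^b) x_v \rfloor$ with $b = \tilde O(\log mn)$, which only gives $\bE[S(X)] > (m\,d!-1)/d^d$, and recovers the exact guarantee $m\,d!/d^d$ by an integrality argument (since $S(x)$ is an integer and the threshold has denominator $d^d$); your proposal addresses neither the sampling issue nor the resulting approximation. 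More importantly, your key structural fact --- that under a partially fixed coloring the rainbow event factors independently across dyadic blocks --- fails for the floor map: a single color $k$ can be attainable from two \emph{adjacent} dyadic blocks (the sets $G_k = \{c,c+1\}$ in Proposition~\ref{r1fgraded-rainbow-oracle}), so the distinctness constraints couple neighboring blocks and the product formula is wrong. The paper's appendix spends most of its effort exactly here: showing each color straddles at most one block boundary, and then computing the rainbow probability via a divide-and-conquer recurrence $g(c_0,c_1,z_0,z_1)$ that passes boundary information between blocks. That content is the bulk of the proof, and it is missing from your argument.

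There is also a smaller gap even in your power-of-two special case: your computation covers only \emph{fully-graded} queries, where every vertex has the same number of fixed bit-levels. But Theorem~\ref{r1fpolylog-thm2} determines bit-level $\ell$ by running Theorem~\ref{r1fpolylog-thm} over chunks of variables, so the PEO is queried with some vertices of an edge having their level-$\ell$ bit fixed and others not; then the relevant dyadic blocks have two different lengths, the ``equal or disjoint'' property no longer yields a clean product, and one must integrate over the binomial split of the undetermined level-$\ell$ bits within each maximal block (as the paper does in the last step of its appendix proof). This is fixable, but as written your PEO does not handle the queries the framework actually issues.
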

\begin{proof}
We begin with simple pre-processing steps using complexity $(\log(mn), md + n)$. First, when $d \geq \log m + C \log \log m$ for a sufficiently large constant $C$, then it suffices to rainbow-color a single edge, which may be done easily. Second, when $n < m d$, then some vertex is not used; we may delete it from the graph. Hence we assume $m \geq n d$ and $d \leq \log m + O(\log \log m)$.

Given a binary vector $x \in \mathcal M_b^n$, we define the associated $d$-coloring $\phi: V \rightarrow \{0, \dots, d-1 \}$ by $\phi_x(v) = \lfloor (d/2^b) x_v \rfloor$. For each edge $e \in H$ let $f_e(x)$ be the indicator function that $e$ is rainbow on the coloring $\phi_x$, and define $S(x) = \sum_{e} f_e(x)$.

As shown in \cite{berger-simulating}, by taking $b = \tilde O(\log mn)$, we can ensure that $\bE_{X \sim \mathcal M_b^n} [S(X)] > \frac{(m d!-1)}{d^d}$. Furthermore, since $S(x)$ is an integer and $\frac{(m d!-1)}{d^d}$ is a rational number with denominator $d^d$, when $S(x) > \frac{(m d!-1)}{d^d}$ we ensure that $S(x) \geq \frac{m d!}{d^d}$. So, if we find $x \in \mathcal M_b^n$ with $S(x) \geq \bE_{X \sim \mathcal M_b^n} [S(X)]$, then this will yield our desired coloring.

Since each $f_e$ is a $d$-junta, we apply Theorem~\ref{r1fpolylog-thm2} with $w = d$ to construct $\phi_x$ using total quasi-complexity of $O( b C_1 (1 + \frac{d}{\log mn}), n +  C_2)$, where $(C_1, C_2)$ is the complexity of PEO for the functions $f_e$.   Observe that $b \leq \tilde O(\log mn), d \leq \log m + O(\log \log m)$. In Proposition~\ref{r1fgraded-rainbow-oracle} (which we defer to the appendix) we show that  $C_1 \leq \tilde O(\log mn)$ and $C_2 \leq (m+n)^{1+o(1)}$, so this simplifies to quasi-complexity of $(\log^2 mn, m+n)$.
\end{proof}

\section{The Lov\'{a}sz Local Lemma with complex bad-events}
\label{r1fmt-sec}
The Lov\'{a}sz Local Lemma (LLL) is a keystone principle in probability theory which asserts that if one has a probability space $\Omega$ and and a set $\mathcal B$ of ``bad-events'' in $\Omega$, then under appropriate ``local'' conditions there is a positive probability that no event in $\mathcal B$ occurs. The LLL has numerous applications to combinatorics, graph theory, routing, etc.   The simplest ``symmetric'' form of the LLL states that if each bad-event $B \in \mathcal B$ has probability $P_{\Omega}(B) \leq p$ and affects at most $d$ bad-events (including itself), then if $e p d \leq 1$ then $P( \bigcap_{B \in \mathcal B} \overline{B} ) > 0$.

Although the LLL applies to general probability spaces, in most applications a simpler bit-based form suffices, wherein the space $\Omega$ has $n$ variables $x_1, \dots, x_n$, which are independently drawn from $\mathcal M_b$. In this setting, each $B \in \mathcal B$ is a boolean function $f_B$ on a subset $Y_B$ of the variables. We say that bad-events $B, B'$ affect each other (and write $B \sim B'$) if $Y_B \cap Y_{B'} \neq \emptyset$. We say that $x \in \mathcal M_b^n$ \emph{avoids $\mathcal B$} if $f_B(x) = 0$ for all $B \in \mathcal B$. 

In a seminal paper \cite{mt}, Moser \& Tardos introduced the following simple randomized algorithm, which we refer to as the MT algorithm, giving efficient randomized constructions for nearly all LLL applications in bit-based probability spaces.

\begin{algorithm}[H]
\centering
\begin{algorithmic}[1]
\State Generate $x_1, \dots, x_n$ as independent fair coins.
\While{some bad-event is true on $x$}
\State Arbitrarily select some true bad-event $B$
\State For each $i \in Y_B$, draw $x_i$ as an independent fair coin. (We refer to this as \emph{resampling} $B$.)
\EndWhile
\end{algorithmic}
\caption{The Moser-Tardos algorithm}
\end{algorithm}

Under nearly the same conditions as the probabilistic LLL, the MT algorithm terminates in polynomial expected time. Moser \& Tardos also gave a parallel (RNC) variant of this algorithm, requiring a slack compared to the LLL criterion.

There are two key analytic techniques introduced by \cite{mt} for this algorithm. The first is the idea of a \emph{resampling table}. In the MT algorithm as we have presented it, the new values for each variable are drawn in an online fashion. Instead, one can imagine a fixed table $R$. This table records, for each variable $i$, an infinite list of values $R(i,1), R(i,2), \dots, $ for that variable, which are all independent draws from $\mathcal M_b$. When the MT algorithm begins, it sets $x_i = R(i,1)$ for each variable $i$; if a variable $x_i$ needs to be resampled, it sets $x_i = R(i,2)$, and so forth. Once we have fixed a resampling table $R$, the MT algorithm can be executed deterministically.

We view the resampling table $R$ as a function $R: [n] \times \mathbf Z_+ \rightarrow \mathcal M_b$. We define a \emph{slice} to be a set $W \subseteq [n] \times \mathbf Z_+$ with the property that each $i \in [n]$ has at most one $j \in \mathbf Z_+$ with $(i,j) \in W$. For such a slice $W$, sorted as $W = \{ (i_1, j_1), \dots, (i_k, j_k) \}$ with $i_1 < i_2 < \dots < i_k$, we define the projection $\pi_W$ by setting $\pi_W(R) = ( R(i_1, j_1), \dots, R(i_k, j_k) )$

The other key idea introduced by Moser \& Tardos is the \emph{witness tree}, which represents a possible execution path for the MT algorithm leading to a given resampling.  This is explained in great detail in \cite{mt}, which we recommend as an introduction. As a brief summary, suppose we want to explain why some bad-event $B$ was resampled at time $t$. We form a witness tree $\tau$ by first placing a root node labeled by $B$, and then going in time through the execution log from time $t-1$ to time $1$. For each event $B$ we encounter at time $s < t$, we look in $\tau$ to find if there is some node $v'$ labeled by $B' \sim B$. If so, we place a node $v$ labeled by $B$ in the tree as a child of $v'$; if there are multiple choices for $v'$, we always choose the one of greatest depth (if there are multiple choices at greatest depth, we break the tie arbitrarily). 

For any witness tree $\tau$ and any node $v \in \tau$, we let $L(v) \in \mathcal B$ denote the label of $v$.

\begin{definition}[Weight and size of witness tree]
For a witness tree $\tau$, we define the \emph{size} of $\tau$ as the number of nodes in $\tau$ and we define the \emph{weight} of $\tau$ as $w(\tau) = \prod_{v \in \tau} P_{\Omega}(L(v))$.
\end{definition}

The most important result of \cite{mt}, which explains why the MT algorithm works, is the \emph{Witness Tree Lemma}:

\begin{lemma}[\cite{mt}]
\label{r1fwitness-tree-lemma}
The probability that a witness tree $\tau$ appears during the execution of the MT algorithm is at most $w(\tau)$.
\end{lemma}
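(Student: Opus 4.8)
The Witness Tree Lemma is the central technical result of Moser-Tardos, and the standard proof proceeds via a coupling argument between the MT algorithm and an idealized "witness-tree-checking" process run directly on the resampling table. Let me sketch that approach.

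The plan is to define, for a fixed witness tree $\tau$, a deterministic procedure $\tau$-\textsc{Check} that takes as input the resampling table $R$ and decides whether $\tau$ is "consistent" with $R$. Concretely, I would traverse the nodes of $\tau$ in an order of non-increasing depth (so leaves first, root last); this is well-defined because in any witness tree, two nodes at the same depth have non-adjacent labels (they affect disjoint variable sets), so the order of processing within a depth level does not matter. When processing a node $v$ with label $L(v) = B$, I maintain, for each variable $i$, a counter recording how many nodes labeled by a bad-event containing $i$ have already been processed; the entry of $R$ to be "read" for variable $i \in Y_B$ at node $v$ is then $R(i, c_i)$ where $c_i$ is the current value of that counter plus one. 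I check whether $f_B$ evaluates to true on the tuple of values $(R(i, c_i))_{i \in Y_B}$; if this check ever fails, $\tau$-\textsc{Check} rejects, and otherwise (after all nodes are processed) it accepts.

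The key steps are then: (i) show that whenever the witness tree $\tau$ actually occurs during a real run of the MT algorithm, the procedure $\tau$-\textsc{Check} accepts $R$ — this requires tracking that the variable-value read at node $v$ in the tree construction is exactly the value of $x_i$ that was present immediately before the corresponding resampling in the algorithm, which follows from the depth-ordering rule used to build witness trees (a node at depth $k$ with a given label sits below all earlier-in-time occurrences of adjacent labels, hence captures the most recent prior resampling of each shared variable); (ii) observe that for a fixed $\tau$, the entries of $R$ that $\tau$-\textsc{Check} examines are a fixed slice $W_\tau$ of the table — each variable $i$ is read at most $\deg_\tau(i)$ times but the positions $(i, 1), (i, 2), \dots$ consumed are predetermined by $\tau$ alone, independent of the values — so the events "$f_B$ is true at node $v$" for $v \in \tau$ are mutually independent over the independent random entries of $R$; (iii) conclude
$$
\Pr[\tau \text{ occurs}] \le \Pr[\tau\text{-}\textsc{Check accepts } R] = \prod_{v \in \tau} \Pr\bigl[f_{L(v)} \text{ true}\bigr] = \prod_{v \in \tau} P_\Omega(L(v)) = w(\tau).
$$

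The main obstacle is step (i): verifying rigorously that the value consumed by $\tau$-\textsc{Check} at node $v$ coincides with the value $x$ held at the moment bad-event $L(v)$ was (about to be) resampled in the actual execution. This is exactly where the specific tie-breaking rule in the witness tree construction — always attaching a new node under the deepest available adjacent node — does its work, and making this precise requires a careful induction over the execution log, essentially reproducing the argument of \cite{mt}. Since the statement is quoted from \cite{mt}, I would present the coupling and independence argument in the above structure and refer to \cite{mt} for the detailed verification of the coupling invariant, rather than reproving it from scratch.
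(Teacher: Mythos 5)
Your proposal is correct and follows essentially the same route as the paper: the positions $(i, c_i)$ consumed by your $\tau$-\textsc{Check} procedure are exactly the slices $W_v = \{(i, u_{i,v}+1) \mid i \in Y_{L(v)}\}$ of the paper's Lemma~\ref{r1fwitness-tree-lemma2}, and both arguments conclude by independence of the conditions on these disjoint, $\tau$-determined table entries. Like the paper, you defer the coupling verification (that an actual occurrence of $\tau$ forces all checks to pass) to \cite{mt}, so there is no substantive difference.
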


To prove this Lemma, \cite{mt} shows that $\tau$ imposes certain conditions on the resampling table $R$. 
\begin{lemma}[\cite{mt}]
\label{r1fwitness-tree-lemma2}
For any witness tree $\tau$ on $t$ nodes, there is a set of slices $W_v$, indexed by nodes $v \in \tau$, such that
\begin{enumerate}
\item[(A1)] For $v \neq v'$ we have $W_v \cap W_{v'} = \emptyset$.
\item[(A2)] A necessary condition for $\tau$ to appear is that $f_{L(v)}( \pi_{W_v}(R)) = 1$ for every $v \in \tau$.
\item[(A3)] The sets $W_v$ can be determined from $\tau$ with quasi-complexity $(\log nt, n t)$.
\item[(A4)] Every $v \in \tau$ has $|W_v| = |Y_{L(v)}|$.
\end{enumerate}
\end{lemma}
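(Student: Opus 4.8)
The plan is to reconstruct the slices directly from the argument implicit in the proof of the Witness Tree Lemma in \cite{mt}, and then check that the accompanying bookkeeping is parallelizable. The one structural fact we need about witness trees is that two nodes $v,v'$ whose labels satisfy $L(v)\sim L(v')$ (including the case $L(v)=L(v')$) always lie at distinct depths in $\tau$: by the construction rule, a newly inserted node is made a child of the \emph{deepest} node already present whose label is $\sim$-related to it, so two $\sim$-related nodes can never occupy the same depth. In particular, for each variable $i\in[n]$ the set $N_i=\{v\in\tau : i\in Y_{L(v)}\}$ is totally ordered by depth, since any two of its members share the variable $i$ and hence have $\sim$-related labels. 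List $N_i$ as $v^{(i,1)},v^{(i,2)},\dots$ in order of strictly decreasing depth, put $\rho(v,i)=k$ when $v=v^{(i,k)}$, and define
$$
W_v=\{\,(i,\rho(v,i)) : i\in Y_{L(v)}\,\}.
$$
Each $i$ contributes exactly one pair, so $W_v$ is a slice with $|W_v|=|Y_{L(v)}|$, establishing (A4); and if $(i,k)\in W_v\cap W_{v'}$ then $v$ and $v'$ are both the rank-$k$ element of $N_i$, so $v=v'$, establishing (A1).

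For (A2), suppose $\tau$ appears during an execution of the MT algorithm with resampling table $R$, and let $s(v)$ denote the time at which the resampling recorded by $v$ takes place (so $s$ is injective on $\tau$). I claim the depth order on $N_i$ is the reverse of the $s$-order: if $v,v'\in N_i$ with $s(v')<s(v)$, then during the backward pass that builds $\tau$ the node $v$ is already present when time $s(v')$ is processed, and since $L(v)\sim L(v')$ the node $v'$ is attached below some node of depth $\ge\mathrm{depth}(v)$, so $\mathrm{depth}(v')\ge\mathrm{depth}(v)+1$. Conversely, every resampling of an $i$-containing event occurring strictly before time $s(v)$ is itself recorded by some node of $N_i$, again because $v$ is already present when that time is processed. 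Hence $\rho(v,i)-1$ equals the number of resamplings of events containing $i$ that occur before time $s(v)$; since the MT algorithm starts $x_i$ at $R(i,1)$ and advances it by one row at each such resampling, the value of $x_i$ at time $s(v)$ is exactly $R(i,\rho(v,i))$. As $L(v)$ is resampled at time $s(v)$ we have $f_{L(v)}=1$ on the current values, and as $f_{L(v)}$ depends only on the coordinates $Y_{L(v)}$ — which are precisely the coordinates named in $W_v$ — this reads $f_{L(v)}(\pi_{W_v}(R))=1$.

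For (A3): the tree $\tau$ is given explicitly on $t$ nodes, so all node depths are computable in $\tilde O(\log t)$ time with $t$ processors via an Euler tour and parallel prefix. The incidence list $\{(v,i): i\in Y_{L(v)}\}$ has size at most $nt$; a single parallel sort groups it by $i$ and orders each group by decreasing depth, after which the ranks $\rho(v,i)$, and then the sets $W_v$, are read off by parallel prefix and a regrouping by $v$. All of this lies within quasi-complexity $(\log nt, nt)$.

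The main obstacle is (A2): it is not a symbolic manipulation but requires importing the dynamics of the witness-tree construction from \cite{mt} — specifically, that backward insertion forces an earlier-resampled node sharing a variable to be strictly deeper — and using it to identify the purely combinatorial rank $\rho(v,i)$ with the value of the resampling-table pointer that $x_i$ holds at the moment $L(v)$ is resampled. Properties (A1), (A3), and (A4) are then essentially immediate from the definition of the $W_v$.
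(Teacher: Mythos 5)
Your construction coincides with the paper's: your rank $\rho(v,i)$ equals $u_{i,v}+1$, where $u_{i,v}$ counts the nodes strictly deeper than $v$ whose labels contain variable $i$, which is exactly how the paper defines $W_v$. The paper's proof stops at this definition and defers the verification of (A1)--(A4) to \cite{mt}; your write-up correctly supplies that standard Moser--Tardos argument, so the proposal is correct and takes essentially the same approach.
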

\begin{proof}
For each node $v \in \tau$ and each $i \in [n]$, let $u_{i,v}$ denote the number of nodes $v'$ which are at greater depth than $v$ and which have $i \in Y_{L(v')}$. Define $W_v = \{ (i, u_{i,v} + 1) \mid i \in Y_{L(v)} \}$.
\end{proof}

 Lemma~\ref{r1fwitness-tree-lemma} follows from Lemma~\ref{r1fwitness-tree-lemma2}; as the entries of $R$ are fair coins, the probabilities of each event $f_{L(v)}(\pi_{W_v}(R)) = 1$ is $P_{\Omega}(L(v))$; furthermore, since the sets $W_v$ are non-intersecting, these events are all independent.

\subsection{Derandomizing Moser-Tardos}
The original paper of Moser \& Tardos gave a sequential deterministic algorithm that only worked for a very limited class of LLL instances, for example when $d$ was constant. An NC algorithm was later given in \cite{mt2}, covering a slightly larger class of bad-events. This algorithm required satisfying the LLL criterion with a slack, in particular it required $e p d^{1+\epsilon} \leq 1$ for some $\epsilon > 0$, and had a complexity of roughly $(\frac{\log^3 mn}{\epsilon}, m^{O(1/\epsilon)})$. An alternative NC algorithm was provided in \cite{mt3}, which is slightly faster than \cite{mt2}. 

These latter algorithms have numerous conditions on the functions $f_B$; roughly speaking, they require $f_B$ to have decision-tree complexity of order $\log d$. The clearest example of this problem type is $k$-SAT, in which bad-event corresponds to a clause being violated. So each bad-event is defined by $x_{i_1} = j_1 \wedge \dots  \wedge x_{i_k} = j_k$, a monomial in $k$ variables. 

Many other LLL applications, particularly those in which the bad-events are determined by sums of random variables, do not fit into this paradigm; we discuss two examples in Section~\ref{r1fmt-example}. The hallmark of these types of problems is that the bad-events are \emph{complex} boolean functions; our focus here will be to give NC algorithms for such problems.

The analysis of \cite{mt3} is based on an extension of the witness tree to a more general object referred to as a \emph{collectible witness DAG (CWD)}. These objects represent in a sense all the ways the MT algorithm could require a long execution time. This requires a great deal of notation to define properly, but the important point for us is that each CWD $\tau$ satisfies Lemma~\ref{r1fwitness-tree-lemma2} in the same a witness tree does.  We will not discuss the (technical) differences between witness trees and CWD's.

We say that a CWD $\tau$ is \emph{compatible} with a resampling table $R$ if condition (A2) of Lemma~\ref{r1fwitness-tree-lemma2} is satisfied, namely $f_{L(v)}(\pi_{W_v}(R)) = 1$ for every $v \in \tau$. For any set $\mathcal T$ of CWD's and a resampling table $R$,  we define $\mathcal T^R \subseteq \mathcal T$ to be the set of CWD's $\tau \in \mathcal T$ compatible with $R$. We summarize some key results of \cite{mt3} which are relevant to us.
\begin{lemma}[\cite{mt3}]
\label{r1flem3}
Suppose that $e p d^{1+\epsilon} \leq 1$ for $\epsilon > 0$, and suppose that the functions $f_B$ can be evaluated with complexity $(U, \poly(m,n))$ where $m = | \mathcal B |$.

Let $K = \frac{c \log (mn/\epsilon)}{\epsilon \log d}$ for some constant $c > 0$. There is a set $\mathcal T$ of CWD's with the following properties:
\begin{enumerate}
\item[(T1)] $|\mathcal T| \leq (m n/\epsilon)^{O(1/\epsilon)}$
\item[(T2)] Each $\tau \in \mathcal T$ has size at most $2 K$.
\item[(T3)] If every $\tau \in \mathcal T^R$ has size less than  $K$, then an assignment avoiding $\mathcal B$ can be found with complexity $(K U + K \log(mn |\mathcal T^R|) + \log^2 |\mathcal T^R|, \poly(K, m,|\mathcal T^R|))$.
\item[(T4)] $\sum_{\tau \in \mathcal T, |\tau| \geq K} w(\tau) < 1/2$.
\item[(T5)] $\sum_{\tau \in \mathcal T} w(\tau) < O(m)$.
\item[(T6)] The set $\mathcal T$ can be enumerated with quasi-complexity $( \frac{\log^2(mn/\epsilon)}{\epsilon}, (m n/\epsilon)^{O(1/\epsilon)})$.
\end{enumerate}
\end{lemma}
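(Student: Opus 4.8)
The plan is straightforward: since (T1)--(T6) are, up to reparametrization, drawn directly from \cite{mt3}, I would extract each claim from the corresponding construction there, now tracking the slack parameter $\epsilon$ and the per-bad-event evaluation cost $U$ explicitly. I would take $\mathcal T$ to be the family of collectible witness DAGs of size at most $2K$ associated with the dependency structure of $\mathcal B$, i.e.\ the size-$(2K)$ truncation of the CWD family of \cite{mt3}; then (T2) holds by construction. For (T1), a CWD on $s$ nodes is determined by a labelling of its nodes from $\mathcal B$ (at most $m^s$ choices) together with an edge pattern which, since the dependency degree is $d$, has at most $d^{O(s)}$ possibilities; with $s \leq 2K$ and $K = \frac{c\log(mn/\epsilon)}{\epsilon\log d}$ this gives $|\mathcal T| \leq (mn/\epsilon)^{O(1/\epsilon)}$.

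For (T4) and (T5) I would use the weight estimates underlying the Witness Tree Lemma (Lemma~\ref{r1fwitness-tree-lemma}), in the CWD-extended form of \cite{mt3}; recall that a CWD is compatible with $R$ exactly when condition (A2) of Lemma~\ref{r1fwitness-tree-lemma2} holds, so compatibility is controlled by the $f_B$ on disjoint slices. The hypothesis $e p d^{1+\epsilon} \leq 1$ supplies a multiplicative factor of $d^{-\epsilon}$ of slack per node, so the total weight of all CWDs of size exactly $s$ decays geometrically in $s$; hence the tail $\sum_{|\tau| \geq K} w(\tau)$ is dominated by its leading term, and choosing the constant $c$ in $K$ large enough forces it below $1/2$, which is (T4). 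Summing the resulting geometric series over the $m$ possible root labels gives $\sum_\tau w(\tau) = O(m)$, which is (T5). For (T6), enumerating $\mathcal T$ amounts to a level-by-level generation of all CWDs of size at most $2K$: there are $(mn/\epsilon)^{O(1/\epsilon)}$ of them, each built in parallel, the generation has depth $O(K)$, and each level costs $\tilde O(\log(mn/\epsilon))$ time, for the stated $\tilde O(\frac{\log^2(mn/\epsilon)}{\epsilon})$ total.

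The substantive part, which I expect to be the main obstacle, is (T3). Given a resampling table $R$ all of whose compatible CWDs have size $< K$, one runs the parallel Moser--Tardos algorithm with table $R$; the size bound on compatible CWDs caps the number of parallel resampling rounds at $O(K)$, and the only per-round cost that depends on the bad-events is one evaluation of each $f_B$ (to detect which are currently true), contributing $KU$ over all rounds, while the remaining per-round work is a maximal-independent-set and reachability computation over the active witness structures, whose total size is $\poly(m, |\mathcal T^R|)$ --- this is where the $K\log(mn|\mathcal T^R|)$ and $\log^2|\mathcal T^R|$ terms come from. Two points require care here: first, \cite{mt3} states its complexity bounds assuming cheap, decision-tree-like bad-events, so one must re-audit that accounting to confirm that a generic evaluation cost $U$ enters only linearly and only $K$ times; second, one must verify that the decomposition of the execution into small pieces genuinely relies on the CWD machinery rather than on plain witness trees. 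Since \cite{mt3} already develops everything in the CWD framework, I expect both to be bookkeeping rather than requiring new ideas, but they are the steps that need to be carried out carefully.
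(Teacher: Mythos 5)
The paper itself offers no proof of this lemma: it is imported wholesale from \cite{mt3}, so your plan of re-deriving each of (T1)--(T6) from the CWD machinery of that paper, while tracking $\epsilon$ and the evaluation cost $U$, is exactly the route the paper implicitly takes, and your reading of (T3) is the right one --- in \cite{mt3} the parallel phase after fixing $R$ works entirely with the compatible witness structures, the bad-event evaluations enter only as $K U$ (one sweep per round over $O(K)$ rounds), and the remaining per-round work is independent-set/reachability bookkeeping over $\mathcal T^R$, which is where the $K\log(mn|\mathcal T^R|)+\log^2|\mathcal T^R|$ terms and the $\poly(K,m,|\mathcal T^R|)$ processor count come from. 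Your treatment of (T2), (T4), (T5), (T6) likewise matches the source: truncation at size $2K$, geometric decay of per-size weight coming from the $d^{-\epsilon}$ slack per node with $c$ chosen large, and level-by-level enumeration.

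The one step that fails as written is your count for (T1). Bounding the labellings by $m^s$ and multiplying by $d^{O(s)}$ gives $m^{2K} d^{O(K)} = \exp\bigl(O(K\log m)\bigr)$, and since $K = \frac{c\log(mn/\epsilon)}{\epsilon\log d}$ this is $\exp\bigl(O(\frac{\log m \log(mn/\epsilon)}{\epsilon \log d})\bigr)$, which is quasi-polynomial rather than $(mn/\epsilon)^{O(1/\epsilon)}$ whenever $\log d \ll \log m$ (e.g.\ $d = O(1)$, $m = \poly(n)$). The correct counting, as in \cite{mt} and \cite{mt3}, exploits the dependency structure: only the root (or the constant number of sink nodes of a CWD) is labelled freely from $\mathcal B$, at most $m^{O(1)}$ choices, while every other node's label must lie among the $\le d$ neighbors of its parent's label, so the number of structures of size $\le 2K$ is at most $m^{O(1)} \cdot (O(d))^{2K} = (mn/\epsilon)^{O(1/\epsilon)}$. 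This same parent-to-neighbor constraint is also what makes the level-by-level enumeration in your (T6) argument run in $(mn/\epsilon)^{O(1/\epsilon)}$ processors; with the free-labelling count that step would blow up as well. The fix is standard Galton--Watson-type counting and does not require a new idea, but as stated the bound does not give (T1).
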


Now consider drawing a resampling table $R(i,j)$ where $i = 1, \dots, n$ and $j = 1, \dots, 2 K$. Given any CWD $\tau$, let $f(\tau, R)$ denote the indicator function that $\tau$ is compatible with $R$. We also define
\begin{equation}
\label{sen1}
S(R) = \frac{1}{C m} \sum_{\tau \in \mathcal T} f(\tau,R) + \sum_{\substack{\tau \in \mathcal T \\ |\tau| \geq K}} f(\tau,R)
\end{equation}
for some constant $C > 0$.

When the entries of $R$ are drawnly independent from $\mathcal M_b$, properties (T4) and (T5) ensure that $\bE[S(R)] < 1$ for a sufficiently large $C$. If $S(R) < 1$, the resulting resampling table $R$ has the property that $| \mathcal T^R | \leq O(m)$ and that every $\tau \in \mathcal T^R$ has size less than $K$. Therefore, by property (T3), a satisfying assignment can be found in  quasi-complexity $(\frac{U \log(mn) + \log^2(mn)}{\epsilon}, \text{poly}(m, n, 1/\epsilon))$. Thus, the problem of finding a satisfying assignment is reduced to the problem of minimizing $S(R)$, which is a sum of juntas. 

\begin{theorem}
\label{r1four-det-thm}
Suppose that $e p d^{1+\epsilon} \leq 1$ for some $\epsilon > 0$. Let $m = |\mathcal B|$.  Suppose that each bad-event $B \in \mathcal B$ is a $w$-junta (i.e. $|Y_B| \leq w$) on $\mathcal M_b^n$.

Suppose that we have a graded PEO for the collection of functions $f_B$ which has complexity $(C_1, \poly(m, n))$; namely, given any $B \in \mathcal B$ as well as a graded partial assignment to the variables in $Y_B$, it computes the corresponding probability that $f_B = 1$.

Then we can find $x \in \mathcal M_b^n$ avoiding $\mathcal B$ in quasi-complexity $(w b (C_1 + \frac{\log mn}{\epsilon}) + \frac{C_1 \log mn}{\epsilon}, (mn)^{O(1/\epsilon)})$.
\end{theorem}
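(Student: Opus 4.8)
The plan is to carry out the reduction sketched around Equation~\eqref{sen1}. First I would invoke Lemma~\ref{r1flem3} to obtain the CWD collection $\mathcal T$, with $K = \frac{c\log(mn/\epsilon)}{\epsilon\log d}$, and thereby reduce the task to finding a resampling table $R \in \mathcal M_b^{n\times 2K}$ with $S(R) < 1$, where $S(R)$ is the weighted sum of compatibility indicators $f(\tau,R)$ from \eqref{sen1}. Properties (T4) and (T5) guarantee $\bE_R[S(R)] < 1$ under independent draws, so such an $R$ exists; Theorem~\ref{r1fpolylog-thm2}, applied to $-S$ (so that minimizing $S$ becomes maximizing $-S$), will produce it once we exhibit a graded PEO for the functions $f(\tau,R)$. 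Having found $R$, property (T3) — with $|\mathcal T^R| = O(m)$ and $U \le C_1$ — yields the assignment avoiding $\mathcal B$ at a cost of $\tilde O(\tfrac{C_1\log mn}{\epsilon})$ time and $\poly(m,n)$ processors, which contributes the $\tfrac{C_1\log mn}{\epsilon}$ term; enumerating $\mathcal T$ via (T6) is a negligible additive cost.

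The crux is the graded PEO for $\{f(\tau,R)\}_{\tau\in\mathcal T}$. By Lemma~\ref{r1fwitness-tree-lemma2}, each $\tau$ carries pairwise-disjoint slices $\{W_v\}_{v\in\tau}$ with $|W_v| = |Y_{L(v)}| \le w$, and $\tau$ is compatible with $R$ exactly when $f_{L(v)}(\pi_{W_v}(R)) = 1$ for all $v$; hence $f(\tau,R) = \prod_{v\in\tau} f_{L(v)}(\pi_{W_v}(R))$ depends on at most $\sum_v|W_v| \le 2Kw$ of the $\mathcal M_b$-valued table entries, i.e.\ it is a $(2Kw)$-junta on $\mathcal M_b^{n\times 2K}$. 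The two observations that make the PEO work are: (i) the restriction of a \emph{graded} partial assignment of the table to a slice $W_v$ is again graded, with the same level $\ell$, when read off as a partial assignment to $Y_{L(v)}$, so the hypothesized graded PEO for $f_{L(v)}$ computes $\bE[f_{L(v)}(\pi_{W_v}(R))]$; and (ii) disjointness of the $W_v$ means these factors depend on disjoint blocks of fair coins, so $\bE[f(\tau,R)] = \prod_{v\in\tau}\bE[f_{L(v)}(\pi_{W_v}(R))]$. Thus the composite PEO computes the slices $W_v$ (via (A3) of Lemma~\ref{r1fwitness-tree-lemma2}, plus a relabeling), performs $\sum_\tau|\tau| = (mn/\epsilon)^{O(1/\epsilon)}$ parallel invocations of the given graded PEO, and multiplies the outputs within each $\tau$; this has complexity $\big(C_1 + \tilde O(\tfrac{\log mn}{\epsilon}),\ (mn)^{O(1/\epsilon)}\big)$.

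Finally I would feed this into Theorem~\ref{r1fpolylog-thm2} with $m' = |\mathcal T|$, junta width $w' = 2Kw$, bit-width $b$, and $n'$ equal to the number of table entries — padding $n'$ with unused dummy coordinates, if necessary, so that $\log(m'n') \ge \Omega(\tfrac{\log(mn/\epsilon)}{\epsilon})$, which neither changes $S$ nor pushes $m',n'$ past $(mn)^{O(1/\epsilon)}$. With this padding, $\tfrac{w'}{\log m'n'} = O(\tfrac{w}{\log d})$, so the time bound $\tilde O(b C_1'(1 + \tfrac{w'}{\log m'n'}))$ of Theorem~\ref{r1fpolylog-thm2}, with $C_1' = C_1 + \tilde O(\tfrac{\log mn}{\epsilon})$, becomes $\tilde O\!\big(w b (C_1 + \tfrac{\log mn}{\epsilon})\big)$ time and $(w')^{O(1)}(mn)^{O(1/\epsilon)} = (mn)^{O(1/\epsilon)}$ processors; the Lemma~\ref{r1fdiscrepthm} partitioning invoked inside Theorem~\ref{r1fpolylog-thm2}, the (T6) enumeration, and the (T3) reconstruction all fit within the claimed quasi-complexity.

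I expect the real work to be the complexity accounting, not the construction: one must simultaneously track $|\mathcal T| = (mn/\epsilon)^{O(1/\epsilon)}$, the CWD-size bound $2K$, the inflated junta width $2Kw$, the table dimension, and several lower-order terms, and verify they collapse to $wb(C_1 + \tfrac{\log mn}{\epsilon}) + \tfrac{C_1\log mn}{\epsilon}$; the padding of $n'$ (equivalently, a lower bound on $|\mathcal T|$) is precisely what keeps the $\tfrac{w'}{\log m'n'}$ factor from introducing a spurious $\tfrac{wbC_1}{\epsilon}$ term. The only non-mechanical points are observations (i) and (ii) above, and both are immediate from Lemma~\ref{r1fwitness-tree-lemma2}.
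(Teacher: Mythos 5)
Your proposal follows essentially the same route as the paper's proof: reduce via Lemma~\ref{r1flem3} to minimizing the potential $S(R)$ of (\ref{sen1}), build a graded PEO for the compatibility indicators $f(\tau,R)$ by exploiting the disjointness of the slices $W_v$ (so the expectation factors into per-node probabilities and each restricted query is again graded), feed this into the graded-junta derandomization with the nominal function count $(mn)^{O(1/\epsilon)}$ --- your padding of $n'$ plays exactly the role of the paper's choice $m' = (mn)^{c/\epsilon}$ in keeping $w'/\log(m'n')$ down to $O(w)$ --- and finish with (T3) using $U \le C_1$. The only detail you omit is the degenerate regime $\epsilon < 1/(mn)$, which the paper dispatches in one line by exhaustive search so that the $\log(mn/\epsilon)$ factors and the $(mn/\epsilon)^{O(1/\epsilon)}$ bounds collapse to the stated complexity.
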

\begin{proof} 
If $\epsilon < 1/(mn)$, then we can solve this problem by exhaustive search in $2^n \leq (m n)^{O(1/\epsilon)}$ processors. So let us assume that $\epsilon \geq 1/(mn)$. We also assume $C_1 \geq \Omega(\log mn)$ as this time is required to read the input.

Let $m' = (m n)^{c/\epsilon}$. For an appropriate constant $c$, the objective function $S(R)$ is by (T1) a sum of at most $m'$ functions $f(\tau, R)$. By (T2), each $\tau \in \mathcal T$ has size at most $2 K$; property (A4) ensures that each term $f_{L(v)}(\pi_{W_v}(R))$ depends on at most $w$ entries of $R$, so in all each function $f(\tau, R)$ is a $w'$-junta for some $w' = O(\epsilon^{-1} w \log mn)$. The total number of variables determining $R$ is $n' = 2 n K \leq O(\epsilon^{-1} n \log(mn))$. (We do not need to compute any entries of $R$ beyond this point.)

We claim next that we can form a graded PEO for the functions $f(\tau, R)$ with  quasi-complexity $(C'_1, C'_2)$, where $C_1' = C_1 + \frac{\log mn}{\epsilon}$ and $C_2' = (mn)^{O(1/\epsilon)}$. For suppose we are given a graded partial assignment query $R'$. We can compute the associated projections $\pi_{W_1}(R'), \dots, \pi_{W_v}(R')$ for each $\tau$ using $\tilde O(\frac{\log mn}{\epsilon})$ time. The probability that any $\tau$ is compatible with $R'$ is simply the product of the probabilities of $f_{B_i}(\pi_{W_i}(R))$. The PEO for $f_B$ allows us to compute these probabilities in parallel with quasi-complexity $( (\log K) C_1, (mn)^{O(1/\epsilon)})$. Finally, we multiply the probabilities together in $\tilde O(\log K)$ time.

Now apply Theorem~\ref{r1fpolylog-thm} to find $R$ with $S(R) \leq \bE[S(R)] < 1$ using quasi-complexity $\tilde O( b C_1' (1 + \frac{w'}{\log m'n'}) , (w')^{O(1)} C_2')$, which simplifies to $(w b (C_1 + \frac{\log mn}{\epsilon}), (mn)^{O(1/\epsilon)} )$.  Once we have found $R$, we use (T3) to find $x$ avoiding $\mathcal B$. The PEO can be used to check whether a given bad-event is true, so $U \leq C_1$ and this step requires $O(\frac{C_1 \log mn}{\epsilon})$ time. 
\end{proof}

\section{Applications of the LLL}
\label{r1fmt-example}

\subsection{Defective vertex coloring}
A \emph{$k$-defective vertex $c$-coloring} of a graph $G = (V,E)$, is an assignment of colors to the vertices such that every vertex $v$ has at most $k$ neighbors with the same color as $v$. This generalizes proper vertex coloring, in that a proper vertex coloring is a $0$-defective coloring. In this section, we give an algorithm which gives a $k$-defective $c$-coloring of a graph $G$ of maximum degree $\Delta$ with $c = O(\Delta/k)$, for any choice of $k$ in the range $\{1, \dots, \Delta \}$. The main idea, inspired by a similar randomized distributed algorithm of \cite{ghaffari}, is a degree-splitting step; when $\Delta$ is small, this can be achieved efficiently using our deterministic LLL algorithm and when $\Delta$ is large then we can use an alternate algorithm based on simple Chernoff bounds.

\begin{lemma}
\label{splitprop1}
There is an absolute constant $K$ with the following property. Given a graph $G$ of maximum degree $\Delta$ and an integer parameter $j  \leq \log_2 (\frac{\Delta}{K \log \Delta})$, there is an algorithm with quasi-complexity $(\Delta \log n + \log^2 n, \poly(n))$ to $2^j$-color the vertices, so that each vertex $v$ has at most $(\Delta/2^j) (1 + K \sqrt{(2^j/\Delta) \log \Delta})$ neighbors with the same color as $v$.
\end{lemma}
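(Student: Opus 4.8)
The plan is to cast this as an instance of the Moser--Tardos derandomization of Theorem~\ref{r1four-det-thm}. I would work in the variable space $\mathcal M_j^n$, identifying the color of vertex $v$ with $x_v \in \mathcal M_j = \{0,\dots,2^j-1\}$, so that $b = j$. If $\Delta$ is below a suitable absolute constant then $j$ cannot be a positive integer (and $j=0$ is vacuous), so assume $j \geq 1$ and $\Delta$ larger than that constant. For each vertex $v$ introduce the bad-event $B_v$: strictly more than $\tau := (\Delta/2^j)\bigl(1+K\sqrt{(2^j/\Delta)\log\Delta}\bigr)$ of the neighbors of $v$ receive color $x_v$. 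Avoiding all of the $B_v$ is precisely the assertion of the lemma. Since the colors are independent and uniform, for a fixed target color the number of neighbors of $v$ receiving it is binomially distributed with parameters $(d_v, 2^{-j})$, hence has mean at most $\mu := \Delta 2^{-j}$, and the hypothesis $2^j \leq \Delta/(K\log\Delta)$ gives $\mu \geq K\log\Delta$. Averaging over $x_v$ and using stochastic domination, $\bP(B_v)$ is at most the probability that a $(\Delta,2^{-j})$-binomial exceeds $\mu(1+\delta)$ with $\delta = K\sqrt{(2^j/\Delta)\log\Delta}$; a multiplicative Chernoff bound makes this at most $\Delta^{-c(K)}$ where $c(K)\to\infty$ as $K\to\infty$.

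Checking the LLL criterion with slack is then routine. Two bad-events $B_u, B_v$ share a variable exactly when $u$ and $v$ are at distance at most $2$ in $G$, so the dependency degree satisfies $d \leq \Delta^2+1$, and there are $m = n$ bad-events. Fix $\epsilon = 1$. Then $e\,\bP(B_v)\,d^{1+\epsilon} \leq e(\Delta^2+1)^2\Delta^{-c(K)} \leq 1$ provided $K$ is a sufficiently large absolute constant (here we use that $\Delta$ exceeds an absolute constant).

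It remains to exhibit the graded PEO needed by Theorem~\ref{r1four-det-thm}. Each $f_{B_v}$ is a $(\Delta+1)$-junta on $\mathcal M_j^n$, depending on $x_v$ and $\{x_u : u \sim v\}$. Given a graded partial assignment to these variables --- a fixed prefix of the bit-levels of each color, with the remaining bits uniform --- the probability $\bP(f_{B_v}=1)$ is a sum, over the at most $2^j \leq \Delta$ colors $c$ consistent with the fixed bits of $x_v$, of $\bP(x_v = c)$ times the probability that more than $\tau$ neighbors of $v$ get color $c$; conditioned on the fixed bits of the neighbors, this last probability is the upper tail of a convolution of at most two binomial distributions, and hence can be evaluated with quasi-complexity $(\polylog n, \poly n)$. (Alternatively, after expanding each color into $j$ bits, $f_{B_v}$ is computed by a read-once branching program on $\poly(\Delta)$ states that scans the bits of $v$ and its neighbors in a fixed order while maintaining a capped running count, so Proposition~\ref{robp2} yields a continuous PEO --- and therefore the required graded PEO --- of quasi-complexity $(\polylog n, \poly n)$.) Thus we may take $C_1 = \polylog n$ and $C_2 = \poly n$.

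Applying Theorem~\ref{r1four-det-thm} with $w = \Delta+1$, $b = j$, $\epsilon = 1$, $m = n$ and this PEO produces $x \in \mathcal M_j^n$ avoiding $\mathcal B$, and $\phi(v) := x_v$ is the desired $2^j$-coloring. Substituting $w = O(\Delta)$, $b = j \leq \log_2\Delta \leq \log_2 n$, $\epsilon = 1$, $m = n$, $C_1 = \polylog n$ into the complexity bound of Theorem~\ref{r1four-det-thm} and simplifying gives the claimed quasi-complexity $(\Delta\log n + \log^2 n, \poly n)$. The step I expect to require the most care is the Chernoff estimate for $\bP(B_v)$: since the relative deviation $\delta$ need only be a (possibly large) constant rather than at most $1$, one must use the general bound $\bP(X \geq (1+\delta)\mu) \leq \bigl(e^\delta/(1+\delta)^{1+\delta}\bigr)^\mu$ and treat the ranges $\delta \leq 1$ and $\delta > 1$ separately, verifying in each that the resulting exponent is $\Omega(K^{3/2}\log\Delta)$, so that a single absolute constant $K$ simultaneously pushes $\bP(B_v)$ below the LLL-with-slack threshold; constructing the PEO and bookkeeping the complexity are comparatively mechanical.
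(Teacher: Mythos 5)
Your proposal is correct and follows essentially the same route as the paper's proof: one bad-event per vertex, Chernoff bound giving $p \leq \Delta^{-\Omega(1)}$ with dependency degree $d \leq \Delta^2$, a graded PEO for the binomial-type bad-events, and an application of Theorem~\ref{r1four-det-thm} with constant $\epsilon$. The only differences are cosmetic: the paper picks its Chernoff deviation constant $c$ separately from $K$ so that $\delta \leq 1$ (avoiding the general-form Chernoff bound you invoke) and uses $\epsilon = 1/2$ rather than $\epsilon = 1$, while you supply more explicit detail on the PEO construction than the paper does.
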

\begin{proof}
Consider applying the LLL to the random process in which each vertex independently and uniformly selects a color (represented as a $j$-bit string). Each vertex $v$ has a bad-event $B_v$ that it has too many neighbors of its own color. Thus there are $m = n$ bad-events, and each bad-event involves at most $\Delta+1$ variables. The number of neighbors of each color  is a binomial random variable with mean at most $\Delta 2^{-j}$.  Note that $B_v \sim B_w$ iff $v$ and $w$ are at distance at most 2 in $G$. So in the sense of the LLL we have $d \leq \Delta^2$. 

Let $\delta = c \sqrt{ (2^j/\Delta) \log \Delta_i }$ for some constant $c$. For $K$ sufficiently large, we ensure that $\delta \leq 1$. Therefore, by the Chernoff bound, $B_v$ has probability at most $e^{-\mu \delta^2/3}$, which is smaller than $\Delta^{-4}$ for an appropriate choice of $c$. So, in the sense of the LLL, we have $p \leq \Delta^{-4}$. These parameters satisfy Theorem~\ref{r1four-det-thm} with $\epsilon = 1/2$. Each bad-event $B_v$ is a boolean function on at most $\Delta$ variables, and a graded PEO can be constructed with quasi-complexity $(\log n, \poly(n))$. Thus Theorem~\ref{r1four-det-thm} gives the desired goal in quasi-complexity $( j \Delta \log n + \log^2 n, \poly(n))$.  Note that $j \leq O( \log \Delta)$, and so it can be dropped from the quasi-complexity bounds.
\end{proof}

\begin{theorem}
\label{defective-thm}
Let $G$ be a graph with maximum degree $\Delta$ and $k \in \{1, \dots, \Delta \}$. Then there is an NC algorithm running in time $\tilde O(\log^2 n)$ to obtain a $k$-defective vertex $c$-coloring with $c = O(\Delta/k)$.
\end{theorem}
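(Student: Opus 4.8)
The plan is to obtain the coloring by \emph{recursive degree splitting}, in the spirit of Lemma~\ref{splitprop1}. Starting from $G$ with maximum degree $\Delta$, I would repeatedly bisect each current color class, so that after round $i$ every class induces a subgraph of maximum degree $\Delta_{i+1} = \lceil (\Delta_i/2)(1+\delta_i) \rceil$ for a small slack $\delta_i$, stopping once $\Delta_i \le \max(k, C_0)$ for a suitable absolute constant $C_0$. The final coloring is the product of all the bisections, using $2^{(\text{number of rounds})}$ colors. The workhorse for one round is a degree-splitting subroutine; the difficulty is that the LLL-based subroutine of Lemma~\ref{splitprop1} runs in $\tilde O(\Delta_i \log n + \log^2 n)$ time, which is only $\tilde O(\log^2 n)$ once $\Delta_i = \tilde O(\log n)$, so a cheaper routine is needed while $\Delta_i$ is large.

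For the large-degree regime $\Delta_i \ge C\log n$, I would derandomize a Chernoff bound directly rather than invoke the LLL. A uniformly random bisection makes each bad-event ``$v$ has more than $(\deg v/2)(1+\delta_i)$ neighbours on its own side'' hold with probability at most $n^{-3}$, provided $\delta_i = \Theta(\sqrt{(\log n)/\Delta_i})$; moreover this tail estimate already holds when the per-vertex choices are only $O(\log n)$-wise independent (the standard bounded-independence Chernoff inequality). So I draw the bisection from a polynomial-size, $\epsilon$-approximately $O(\log n)$-wise independent space with $\epsilon = n^{-\Theta(1)}$ (Theorem~\ref{naor2thm}); since $\sum_v \Pr[B_v] < 1$ by the union bound, some point of the space bisects every class well simultaneously, and one is found by exhaustive parallel search (or by conditional expectations over the $O(\log n)$-bit seed, using an exponential-moment pessimistic estimator, which is NC-computable). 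Bisecting, rather than splitting into many parts, is what keeps the seed length $O(\log n)$ and hence the space polynomial-sized; this subroutine runs in $\tilde O(\log n)$ time with $\poly(n)$ processors.

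Assembling these: run the Chernoff subroutine while $\Delta_i \ge C\log n$ — this takes $O(\log(\Delta/\log n)) = O(\log n)$ rounds of $\tilde O(\log n)$ time each, hence $\tilde O(\log^2 n)$ total — then switch to Lemma~\ref{splitprop1} for $\Delta_i < C\log n$, which needs only $O(\log\log n)$ further rounds of $\tilde O(\log^2 n)$ each, again $\tilde O(\log^2 n)$ total; the processor count is $\poly(n)$ throughout, with all classes of a given round processed in parallel. For the colour count, one round multiplies the number of colours by $2(1+\delta_i)$, with $\delta_i = \Theta(\sqrt{(\log n)/\Delta_i})$ (large regime) resp.\ $\delta_i = \Theta(\sqrt{(\log\Delta_i)/\Delta_i})$ (small regime, as in Lemma~\ref{splitprop1}). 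Since the $\Delta_i$ decrease geometrically and we halt before any $\Delta_i$ drops below $C_0$, the sum $\sum_i \delta_i$ is a geometric-type series dominated by its last term and is $O(1)$, so $\prod_i (1+\delta_i) = O(1)$ and the total is $2^{\log_2(\Delta/\max(k,C_0)) + O(1)} = O(\Delta/k)$. If $k \ge C_0$ we have reached defect $\le k$; if $k < C_0$ we stop at degree $\le C_0$ and finish each constant-degree class with a $k$-defective $O(1)$-colouring via Theorem~\ref{r1four-det-thm} (applied with $w = O(1)$, $\epsilon$ constant, a trivial graded PEO, $d \le C_0^2$), costing a further $\tilde O(\log^2 n)$ time and only a constant colour factor — still $O(\Delta/k)$ since $k = O(1)$ here.

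The main obstacle is the joint optimization of time and colour count: the number of rounds must be small enough for total time $\tilde O(\log^2 n)$ (forcing the large-degree subroutine to cost only $\tilde O(\log n)$ per bisection, and limiting the expensive Lemma~\ref{splitprop1} rounds to $O(\log\log n)$), while the per-round slacks must be small enough that $\prod_i (1+\delta_i) = O(1)$ — which pins down the threshold $C\log n$ between the two regimes and dictates stopping the recursion before $\Delta_i$ falls below the constant $C_0$ at which splitting ceases to reduce the degree. The supporting technical point is verifying that $O(\log n)$-wise (approximate) independence really does yield a $\Pr[B_v] \le n^{-3}$ Chernoff tail at every degree scale $\Delta_i \ge C\log n$, and that the associated pessimistic estimator is NC-computable.
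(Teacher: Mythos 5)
Your overall skeleton (a cheap degree-reduction step for large $\Delta$, then iterated LLL-based splitting via Lemma~\ref{splitprop1} once $\Delta_i = O(\log n)$, with the colour count controlled by showing $\prod_i(1+\delta_i) = O(1)$) is the same as the paper's, and your small-degree phase and colour accounting are essentially correct -- using $j=1$ bisections instead of the paper's larger jumps $2^{j_i}\approx \Delta_i/\log^2\Delta_i$ only costs a $\log\log n$ factor, which the $\tilde O$ convention absorbs. The genuine gap is in your large-degree subroutine. You claim that a \emph{polynomial-size}, $\epsilon$-approximately $O(\log n)$-wise independent space with $\epsilon = n^{-\Theta(1)}$ (Theorem~\ref{naor2thm}) fools the events ``$v$ has more than $(\deg v/2)(1+\delta_i)$ same-side neighbours'' to within $n^{-3}$, so that a union bound plus exhaustive search works. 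The bounded-independence Chernoff bound does hold under \emph{exact} $t$-wise independence with $t=\Theta(\log n)$, but such a space has size $n^{\Theta(\log n)}$; and when you pass to an approximately independent space, the standard moment-based error analysis for an event depending on $\Delta_i$ (possibly $\operatorname{poly}(n)$) variables picks up a factor of roughly $\binom{\Delta_i}{t}$ monomials, i.e.\ an additive error of order $\Delta_i^{t}\epsilon$ against a target tail of $(\delta\mu)^{-t}\cdot$moment. Making this error $\le n^{-3}$ forces $\epsilon = 2^{-\Theta(\log^2 n)}$ when $\Delta_i = n^{\Theta(1)}$, so the space is quasipolynomial and exhaustive search is no longer NC. Your fallback -- conditional expectations over the generator's seed with an exponential-moment pessimistic estimator -- is exactly where the real work would lie (one must compute conditional expectations of that estimator under partial seeds of the specific generator), and it is not supplied; note that the paper's generic machinery (Theorem~\ref{r1fpolylog-thm}) cannot be invoked here either, since its time is linear in the junta width $w=\Delta_i+1$.

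The paper closes precisely this hole differently: for $\Delta \ge \log n$ it performs a \emph{single} splitting into $\Delta/\log n$ classes and derandomizes it with Sivakumar's method, observing that the $\operatorname{poly}(n)$ statistical tests (the degree of each vertex into each colour class) are logspace-computable and all read the input in a common order, which yields an $O(\log^2 n)$-time NC step reducing the degree to $O(\log n)$ in one shot, with no need for $O(\log n)$ rounds or for fooling threshold events by small-bias spaces. To repair your write-up you would either need to import that (or some equivalent tool, e.g.\ a seed-conditional pessimistic-estimator argument worked out for a concrete generator) for the regime $\Delta_i \ge C\log n$, or justify quantitatively how an inverse-polynomial bias suffices for your bad events; as written, that step does not go through.
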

\begin{proof}
When $\Delta \geq \log n$, let us consider the random process of assigning every vertex a color uniformly selected from $\frac{\Delta}{\log n}$; a simple Chernoff bound shows that, with high probability, this ensures that each vertex has at most $C \log n$ neighbors of each color class, where $C > 0$ is some sufficiently large constant. This can be derandomized by an algorithm of Sivakumar \cite{sivakumar} (among other methods), as there are a polynomial number of ``statistical tests'' (in this case, the degree of each vertex with respect to each color class) which can be computed in logspace. After this first coloring step, which can be executed in $O(\log^2 n)$ time, we get multiple subgraphs with maximum degree $\Delta \leq C \log n$.

Thus we can assume that $\Delta \leq C \log n$ for a constant $C > 0$. In this case we can use iterated applications of Lemma~\ref{splitprop1}. Each iteration reduces the degree of the residual graphs by a logarithmic factor, and so the overall running time is close to the running time of a single application of Lemma~\ref{splitprop1}. We defer the full proof to Appendix~\ref{def-thm-app}, as the construction is technical and similar to that of \cite{ghaffari}.
\end{proof}

\subsection{Domatic partition}
A \emph{domatic partition} of a graph is a $c$-coloring of the vertices of $G$ with the property that each vertex of $G$ sees all $c$-colors in its neighborhood (including itself). That is, for any color $\ell = 1, \dots, c$, the color-$\ell$ vertices form a dominating set of $G$. An algorithm was given in \cite{domatic} using the LLL to find a domatic partition with a large number of colors. For simplicity, we specialize their algorithm to $k$-regular graphs.
\begin{theorem}
Let $\eta > 0$ be any fixed constant. There is some constant $K = K_{\eta}$ with the following property.  If $G$ is $k$-regular with $k > K$, then $G$ has a domatic partition of size  $c \geq (1-\eta) \frac{k}{\log k}$, which can be found using $\tilde O_{\eta}(k \log n + \log^2 n)$ time and $n^{O_{\eta}(1)}$ processors. 
\end{theorem}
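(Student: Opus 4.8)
The plan is to reduce to Theorem~\ref{r1four-det-thm}, applied to the Lov\'asz Local Lemma instance of \cite{domatic} specialized to a $k$-regular graph. Fix the constant $\eta > 0$, set $c = \lfloor (1-\eta) k/\log k \rfloor$, pick $b = \Theta(\log k)$ (concretely $b = 2\lceil \log_2 k\rceil + \lceil\log_2\log k\rceil$), and for $x \in \mathcal M_b^n$ color vertex $v$ by $\phi_x(v) = \lfloor c\, x_v / 2^b \rfloor \in \{0,\dots,c-1\}$, exactly as in the proof of Theorem~\ref{r1frainbow-thm}. For each vertex $v$ and color $\ell$, let $B_{v,\ell}$ be the bad-event that no vertex of the closed neighborhood $N[v]$ receives color $\ell$; a point $x$ avoiding all the $B_{v,\ell}$ is precisely a domatic partition of size $c$. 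There are $m = nc \le n^2$ bad-events; each $B_{v,\ell}$ is a $w$-junta on $\mathcal M_b^n$ with $w \le |N[v]| = k+1$; and $B_{v,\ell}$ shares variables with $B_{v',\ell'}$ only when $v,v'$ are within distance $2$.

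The probabilistic heart of the argument is imported from \cite{domatic}, which establishes via the LLL that a (near-)uniform $c$-coloring avoids all of these events up to $c = (1-o(1)) k/\log k$. For Theorem~\ref{r1four-det-thm} I need this in the slack form $e p d^{1+\epsilon} \le 1$ with a constant $\epsilon>0$, and I would obtain it by keeping $c$ a constant factor below the \cite{domatic} threshold --- namely $(1-\eta) k/\log k$ rather than $(1-o(1)) k/\log k$. This suffices because the per-event probability $p$ is of the form $(1-1/c)^{\Theta(k)}$ and so depends \emph{exponentially} on $1/c$: a constant-factor decrease of $c$ multiplies $p$ by a polynomial factor $k^{-\Theta(\eta)}$, while $d$ remains $\poly(k)$, so \cite{domatic}'s criterion is converted into one with polynomial slack $\epsilon = \Theta(\eta)$. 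The one point requiring care is that $\phi_x$ is not exactly uniform over $\{0,\dots,c-1\}$; but with $b = \Theta(\log k)$ each color has probability $(1\pm o(1/k))/c$, so $p$ and $d$ are each perturbed by a factor $1+o(1)$, which is harmless (or can be absorbed by a cosmetic shrinking of $\eta$). I expect this step --- extracting a clean, slack-form LLL criterion from \cite{domatic} and verifying it survives the discretization --- to be the main obstacle, and it mirrors how slack is obtained in the defective-coloring argument (Lemma~\ref{splitprop1}).

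Next I would construct a graded PEO for the functions $f_{B_{v,\ell}}$, exactly parallel to the one in Lemma~\ref{splitprop1}. Given a graded partial assignment $X'$ (the top bit-levels of every $x_u$ fixed, the rest fair coins), the event ``$\phi_x(u) = \ell$'' says that $x_u$ lies in a fixed contiguous block of $\mathcal M_b$, so its conditional probability is an explicitly computable dyadic rational, and
$$
P\bigl(B_{v,\ell} \mid X'\bigr) \;=\; \prod_{u \in N[v]} \Bigl(1 - P\bigl(\phi_x(u) = \ell \mid X'\bigr)\Bigr),
$$
a product of at most $k+1$ such rationals, which is evaluated with a product tree. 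This yields a graded PEO of quasi-complexity $(\log n, \poly(n))$, i.e.\ with $C_1 = \tilde O(\log n)$ and $C_2 = \poly(n)$; the PEO also serves to test whether a given bad-event is true.

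Finally I would invoke Theorem~\ref{r1four-det-thm} with $w \le k+1$, $b = \Theta(\log k)$, $m \le n^2$, $\epsilon = \Theta(\eta)$, $C_1 = \tilde O(\log n)$, $C_2 = \poly(n)$. Its quasi-complexity bound $\bigl(wb(C_1 + \tfrac{\log mn}{\epsilon}) + \tfrac{C_1 \log mn}{\epsilon},\ (mn)^{O(1/\epsilon)}\bigr)$ becomes $\bigl(\tilde O_\eta(k\log k\log n + \log^2 n),\ n^{O_\eta(1)}\bigr)$; and since $\log k \le \log(k\log n)$ we have $\tilde O(k\log k\log n) = \tilde O(k\log n)$, giving the claimed $\tilde O_\eta(k\log n + \log^2 n)$ time and $n^{O_\eta(1)}$ processors. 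The returned $x \in \mathcal M_b^n$ avoids every $B_{v,\ell}$, so $\phi_x$ is a domatic partition of size $c \ge (1-\eta) k/\log k$, with $K = K_\eta$ taken large enough that all the $o(1)$ and $\Theta(\eta)$ estimates above hold. Apart from the LLL verification flagged above, everything here is routine; the only slightly non-obvious complexity point is that only $b = \Theta(\log k)$ bits of color precision are needed (not $\Theta(\log n)$), which is exactly what keeps the running time near-linear rather than near-quadratic in $k$.
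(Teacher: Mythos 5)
Your reduction via a \emph{single-phase} LLL does not work; the step you flagged as the main obstacle is a genuine gap, not a technicality. With one round of uniform coloring by $c = (1-\eta)k/\log k$ colors, the bad-event $B_{v,\ell}$ (``color $\ell$ missing from $N[v]$'') indeed has probability $p = (1-1/c)^{k+1} \approx k^{-1/(1-\eta)}$, but its dependency degree is $d \approx k^2 c \approx k^3/\log k$: the event $B_{v,\ell}$ shares variables with $B_{v',\ell'}$ for \emph{every} $v'$ within distance two of $v$ and \emph{every} color $\ell'$. Hence $e p d^{1+\epsilon} \geq k^{2-O(\eta)} \gg 1$, so the criterion of Theorem~\ref{r1four-det-thm} fails by a polynomial factor in $k$, and no choice of the constants $\epsilon = \Theta(\eta)$ or $K_\eta$ can repair it. In fact the single-phase scheme satisfies even the slack-free condition $epd \le 1$ only for $c \lesssim k/(3\log k)$, so it cannot reach $(1-\eta)k/\log k$ for small $\eta$. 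Your appeal to \cite{domatic} is misplaced at exactly this point: that paper does not prove the single-phase events are avoidable at density $(1-o(1))k/\log k$; its contribution is precisely an \emph{iterated (two-phase)} LLL designed to get around the $k^3$-sized dependency degree. Shrinking $c$ by the constant factor $(1-\eta)$ lowers $p$ only from about $k^{-1}$ to $k^{-1-\Theta(\eta)}$, which is nowhere near enough against $d = \Theta(k^3/\log k)$.

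The paper follows that two-phase construction, invoking Theorem~\ref{r1four-det-thm} once per phase. In phase I each vertex picks $\chi_1(v)$ among $c_1 = k/\log^3 k$ colors, and the bad-events require every vertex to see $\log^3 k \pm O(\log^2 k)$ neighbors of each phase-I color; Chernoff gives $p \le k^{-5}$ against $d \le k^4$, so the criterion holds with constant slack. In phase II each vertex picks $\chi_2(v)$ among $c_2 = (1-\eta)\log^2 k$ colors, and the bad-event $B_{v,j,j'}$ is that the phase-I class $N_j(v)$ misses phase-II color $j'$; now $p \le (1-1/c_2)^{t_0} \approx k^{-1/(1-\eta)+o(1)}$ while the dependency degree has dropped to $d \le t_1 k c_2 = k \polylog(k)$, because only events indexed by the same phase-I color interact through a shared vertex. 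This is where the slack $\epsilon = \eta/2$ genuinely exists, and the product coloring $(\chi_1,\chi_2)$ yields $c_1 c_2 = (1-\eta)k/\log k$ colors. Your graded-PEO construction, the discretization with $b = \Theta(\log k)$ bits per variable, and the complexity bookkeeping are all fine and carry over to each phase essentially unchanged; it is only the probabilistic setup that must be replaced by the two-phase one.
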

\begin{proof}
We follow the iterated LLL construction of \cite{domatic}, in which the color of each vertex is an ordered pair $\chi(v) = ( \chi_1(v), \chi_2(v) )$; here $\chi_1$ is chosen from $c_1 = k/\log^3 k$ colors, and $\chi_2$ is chosen from $c_2 = (1 - \eta) \log^2 k$ colors. In the first phase of the LLL, we will select $\chi_1$ and the second phase will select $\chi_2$. Each vertex chooses its colors uniformly at random among $[c_1], [c_2]$ respectively.\footnote{If $c_1, c_2$ are not powers of two, then we cannot directly represent this in our bit-based LLL formulation. However, we can simulate it by drawing values $u_1, u_2$ from $[2^{r_1}, 2^{r_2}]$ for $r_i = \log_2 c_i + O(\log k)$, and projecting uniformly down to $[c_i]$. This changes the probabilities of the bad-events by a negligible factor of $\text{polylog}(1/k)$. We omit further details for simplicity.}

Now consider the phase I coloring. For each vertex $v$ and each color $j \in [c_1]$, define $N_{j}(v)$ to the set of neighbors $w$ with $\chi_1(w) = j$ and let $X_{v,j} = |N_j(v)|$. The expected value of $X_{v,j}$ is $\mu = \log^3 k$. For each vertex $v$ and each color $j \in [c_1]$, we have a bad-event $B_{v,j}$ that $X_{v,j} \leq t_0$ or $X_{v,j} \geq t_1$, where $t_0 = \mu - \phi \log^2 k$ and $t_1 = \mu + \phi \log^2 k$ and $\phi$ is a large constant. 

For $\phi$ sufficiently large, the Chernoff bound shows that $B_{v,j}$ has probability at most $p \leq k^{-5}$. Furthermore, each bad-event $B_{v,j}$ affects $B_{v',j'}$ only if $\text{dist}(v,v') \leq 2$, so in the sense of the LLL we have $d \leq k^4$. A graded PEO for these for these bad-events with running time $C_1 = \tilde O(\log mn)$. Apply Theorem~\ref{r1four-det-thm} to find $\chi_1$ with quasi-complexity $(k \log n + \log^2 n, n^{O(1)})$. 

For each vertex $v$, each $j \in [c_1]$, and each $j' \in [c_2]$, we have a bad-event $B_{v,j, j'}$ that there is no $w \in N_j(v)$ with $\chi_2(w) = j'$; if all such bad-events are avoided then the resulting coloring $(\chi_1(v), \chi_2(v))$ gives a domatic partition. The only dependencies now are between bad-events $B_{v, j, j'}$ and $B_{w,j,j''}$ where $v, w$ share a neighbor $u$ with $\chi_1(u) = j$, so  $d \leq t_1 k c_2$ and $p \leq (1 - 1/c_2)^{t_0}$.

Set $\epsilon = \eta/2, \phi = 10$. It is straightforward to verify that the criterion $e p d^{1+\epsilon} \leq 1$ is satisfied when $k$ is sufficiently large. Thus, Theorem~\ref{r1four-det-thm} gives a coloring avoiding the phase-II bad-events using $\tilde O_{\eta}(k \log n + \log^2 n)$ time and $n^{O_{\eta}(1)}$ processors.
\end{proof}

\section{Acknowledgments}
Thanks to Aravind Srinivasan, Vance Faber, and anonymous referees for helpful comments and discussion.

\appendix

\section{The PEO for hypergraph rainbow coloring} 
\begin{proposition}
\label{r1fgraded-rainbow-oracle}
Let $f_e$ be the indicator function that edge $e$ is rainbow.  Then the collection of functions $f_e$ has a graded PEO with overall quasi-complexity $(\log mn, m+n)$.
\end{proposition}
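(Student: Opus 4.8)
The plan is to reduce the PEO, for each edge $e$, to the evaluation of a single permanent of a structured $d\times d$ matrix, and then to compute that permanent by a bounded-width dynamic program. Throughout I would work in the setting of the proof of Theorem~\ref{r1frainbow-thm}, where $d\le\log m+O(\log\log m)=\polylog(m)$, $b=\tilde O(\log mn)$, and $n\le md$. Fix a graded query: it fixes bit-levels $0,\dots,\ell-1$ of every variable $x_v$, fixes or leaves free bit-level $\ell$ (possibly differently across the $v$), and leaves bit-levels $\ell+1,\dots,b-1$ free. Since the bits of distinct variables are independent, the colors $\phi_X(v)$ for $v\in e$ are then independent, with $\phi_X(v)$ distributed according to some $p_v$ on $\{0,\dots,d-1\}$, and so
\[
  \bE[f_e(X)] \;=\; \sum_{\sigma}\ \prod_{v\in e} p_v(\sigma(v)),
\]
the sum ranging over bijections $\sigma$ from the $d$ vertices of $e$ onto the $d$ colors. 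That is, $\bE[f_e(X)]$ is exactly the permanent of the matrix $M^{(e)}$ with entries $M^{(e)}_{v,c}=p_v(c)$. Since a general permanent is out of reach, everything hinges on the structure the query forces on $M^{(e)}$.

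First I would compute the distributions. The fixed bits confine $x_v$ to an integer interval $I_v$ of length $2^{b-\ell}$ or $2^{b-\ell-1}$, and $p_v(c)=|I_v\cap\phi^{-1}(c)|/|I_v|$, where $\phi^{-1}(c)$ is itself an integer interval of length $\Theta(2^b/d)$. Hence each $p_v$ is supported on a \emph{contiguous} block of at most $\lceil d/2^\ell\rceil+O(1)$ colors, and $p_v$ depends only on the \emph{type} of $v$ --- the top $\ell$ bits of $x_v$ together with the status (fixed to $0$, fixed to $1$, or free) of bit $\ell$; there are at most $3\cdot 2^\ell$ types. Computing $p_v$, and its interval support, for every $v$ costs $\tilde O(\log mn)$ time and $O(nd)=(m+n)^{1+o(1)}$ processors. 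The decisive observation is that, for any fixed color $c$, only $O(1)$ distinct types have $c$ in their support: the support of any type with top-bit pattern $t$ lies inside $[\,dt/2^\ell-O(1),\ d(t+1)/2^\ell+O(1)\,]$, so these windows essentially tile $\{0,\dots,d-1\}$ and only $O(1)$ of them --- times the at most three statuses within each --- can straddle a given $c$. (The mild boundary overlaps come from the floor-function rounding in $\phi$.)

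Then I would evaluate $\operatorname{perm}(M^{(e)})$ by sweeping over colors $c=0,1,\dots,d-1$, expanding the permanent column by column while collapsing the symmetry among equal-type vertices. The state after colors $0,\dots,c-1$ records, for each type whose support has started but not yet ended (``open'' at $c$), how many of $e$'s vertices of that type have already received a color; by the observation above there are $O(1)$ open types at any moment, and each count lies in $\{0,\dots,d\}$, so there are $\poly(d)$ states. At color $c$ the transition either leaves $c$ unused or assigns it to one of the $O(1)$ open types still holding an unassigned vertex, with weight $(\text{number of unassigned vertices of that type})\cdot p_{\text{type}}(c)$; at the end every type's count must equal its multiplicity in $e$. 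This realizes $\operatorname{perm}(M^{(e)})$ as a product $A^{(e)}_{d-1}\cdots A^{(e)}_{0}$ of $d$ transition matrices of dimension $\poly(d)$, evaluable in $\polylog(d)$ depth by a balanced product tree. So each permanent takes $\poly(d)=\polylog(m)$ work and $\polylog(m)$ depth; running all $m$ edges in parallel uses $(m+n)^{1+o(1)}$ processors, and the time is dominated by the $\tilde O(\log mn)$ spent computing the $p_v$ and on polylog-bit arithmetic, giving overall quasi-complexity $(\log mn,\ m+n)$.

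The hard part is the structural analysis: first recognizing that the rainbow probability is a permanent, and then establishing the ``$O(1)$ types straddle each color'' bound rigorously despite the floor-function rounding in $\phi$ and the partial boundary bins it creates in the $p_v$. Once that bound is nailed down, the bounded-bandwidth permanent DP and its parallelization as a short chain of small matrix products are routine, as is the bookkeeping needed to keep the state space at $\poly(d)$.
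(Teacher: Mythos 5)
Your reduction is, in substance, the paper's own argument in dual form: after a graded query the colors of the vertices of $e$ are independent, each uniform over a contiguous block of colors determined by the vertex's ``type'', and the rainbow probability is evaluated by a small-state interval DP whose correctness rests on the fact that only one color can be shared across the boundary between adjacent types. The paper organizes this as a divide-and-conquer over the buckets $S_c$ of vertices with a common prefix, carrying a single occupancy bit for the one straddling color per boundary (its sets $G_k$ and $W_c$), whereas you sweep over the $d$ colors and parallelize via a balanced product of $\poly(d)\times\poly(d)$ transition matrices; both yield the claimed quasi-complexity since $d, b \le (mn)^{o(1)}$, and your handling of merely-graded queries (folding the status of bit $\ell$ into the type) is arguably cleaner than the paper's patch.

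The genuine problem is the structural claim you base the state bound on: ``for any fixed color $c$, only $O(1)$ distinct types have $c$ in their support,'' justified by the windows $[dt/2^{\ell}-O(1),\,d(t+1)/2^{\ell}+O(1)]$ ``essentially tiling'' the color range. This is false whenever $2^{\ell}\gg d$, which is the typical regime since $\ell$ ranges up to $b=\tilde O(\log mn)$ while $d\le \log m+O(\log\log m)$: then the type intervals have width $2^{b-\ell}\ll 2^b/d$, so up to $\Theta(2^{\ell}/d)$ types (and, within a single edge, up to $d$ of its types) lie entirely inside $F^{-1}(c)$ and all have support exactly $\{c\}$ --- the windows overlap heavily rather than tile. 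What is true, and what your DP actually needs, is the boundary version: a type whose support meets both $\{0,\dots,c-1\}$ and $\{c,\dots,d-1\}$ must have its value-interval contain the single point separating $F^{-1}(c-1)$ from $F^{-1}(c)$, and since the value-intervals of distinct types are disjoint (up to the factor-two overlap created by a free bit $\ell$), at most $O(1)$ types straddle any boundary. That keeps the between-column state $\poly(d)$, but your transition at column $c$ must then also absorb the possibly many singleton-support types: every vertex whose support is exactly $\{c\}$ is forced to color $c$, so the transition outputs $0$ if two such vertices exist (or one exists and $c$ is also given to an open type), and otherwise contributes a forced factor --- an easy per-color preprocessing count. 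With the claim restated and the transition amended this way your argument goes through, and it is then exactly the structural content the paper proves via ``$G_k$ is at most two adjacent buckets'' and $|W_c|\le 1$ (a step whose write-up in the paper is itself loose in precisely the same regime).
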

\begin{proof}
It suffices to compute the probability that a given edge $e$ will be rainbow on the coloring $\phi_x$ for some graded $u \in \{0, 1, \text{?} \}^{d b}$ (here $u$ represents the projection of the overall partially-graded $x \in \{0, 1,  \text{?} \}^{n b}$ to the vertices in $e$). Since $d b = (mn)^{o(1)}$, the processor complexity of this task can be an arbitrarily polynomial in $b, d$.

We first describe how to do so if $u$ is fully-graded; we then modify it to allow $u$ to be merely graded. Suppose the most-significant $\ell \leq b$ bit-levels of the vector $y$ have been determined and the least-significant $b - \ell$ bit-levels of $u$ remain fair coins.  We may write $u$ in the form $u_v = (y_v, \text{?}, \dots, \text{?})$, where $y_v \in \mathcal M_{\ell}$. For each $c \in \mathcal M_{\ell}$ let $S_c$ denote the set of vertices $v \in e$ with $y_v = c$.

For each  $k \in \{0, \dots, d-1 \}$, let us define $G_k$ to be the set of values $c \in \{0, \dots, 2^{\ell} - 1 \}$ such that some vertex $v \in S_c$ could (depending on the lower order bits of $y$) be assigned color $k$. Specifically, 
$$
G_k = \{ c \in \{0, \dots, 2^{\ell} - 1 \} \mid F(2^{b-\ell} c) \leq k \leq F( 2^{b-\ell} (c+1) - 1 ) \}
$$
where the function $F: \{0, 2^b - 1 \} \rightarrow \{0, \dots, d - 1\}$ is defined by $F(x) = \lfloor (d/2^b) x \rfloor$.

Observe that if $y \geq x + (d/2^b)$, then we must have $F(y) > F(x)$. Using this fact, we claim that each $G_k$ is either a singleton set, or a set of two adjacent elements $\{c, c+1 \}$. For, suppose not; then there must exist $c_1, c_2 \in G_k$ with $c_2 > c_1 + 1$ and
\begin{eqnarray*}
&F(2^{b-\ell} c_1) \leq k \leq F(2^{b-\ell} (c_1+1) - 1) \\
&F(2^{b-\ell} c_2) \leq k \leq F(2^{b-\ell} (c_2+1) - 1)
\end{eqnarray*}

But, note that in this case
\begin{align*}
(2^{b-\ell} c_2) - ( 2^{b-\ell} (c_1+1) - 1) &= 2^{b-\ell} (c_2 - c_1 - 1) + 1 \geq 2^{b-\ell} + 1 \geq 1 \geq (d/2^b)
\end{align*}
and so $F( 2^{b-\ell} c_2 ) > F(2^{b-\ell} (c_1+1) - 1)$, a contradiction.

Also, we claim that for each value of $c$, there is at most one value $k$ such that $G_k = \{c, c+1 \}$. For, if not, then there would be values $k_1 < k_2$ with
\begin{eqnarray*}
F(2^{b-\ell} (c+1)) &\leq k_1 &\leq F(2^{b-\ell}(c+2) - 1) \\
F(2^{b-\ell} c) &\leq k_2 & \leq F(2^{b-\ell}(c+1) - 1)
\end{eqnarray*}

But note then that
$$
F(2^{b-\ell} (c+1) - 1) \geq k_2 > k_1 \geq F(2^{b-\ell} (c+1)) > F(2^{b-\ell} (c+1)) \geq F(2^{b-\ell} (c+1) - 1),
$$
a contradiction.

Thus, for each $c \in \{0, \dots, 2^{\ell} - 1\}$, let us define $W_c$ to be the set of values $k \in \{0, \dots, d-1 \}$ such that $G_k = \{c, c + 1 \}$. We have just showed that $|W_c| \leq 1$.

Now consider the random experiment of assigning independent Bernoulli-$1/2$ values to the low-order $b - \ell$ bit-levels of $u$. Define the random variable $Z_c$ to be the number of vertices in $S_c$ assigned a value $k \in W_c$. (If $W_c = \emptyset$, then $Z_c = 0$ necessarily.) In order for $e$ to be rainbow, every $c \in \mathcal M_{\ell}$ must have $Z_c \in \{0, 1 \}$.

For any integers $0 \leq c_0 < c_1 \leq 2^\ell$ and values $z_0, z_1 \in \{0, 1 \}$, let us thus define the function $g$ by
\begin{align*}
&g(c_0, c_1, z_0, z_1) = \\
& \qquad P( \text{the vertices in $S_{c_0}, S_{c_0+1}, \dots, S_{c_1 - 1}, S_{c_1}$ receive distinct colors and $Z_{c_1} = z_1$} \mid Z_{c_0-1} = z_0)
\end{align*}

The overall probability that the random experiment results in a rainbow coloring of $e$ is given by $g(0,2^\ell,0,0)$. With a little thought, one can see that $g$ satisfies the recurrence:
$$
g(c_0, c_1, z_0, z_1) = g(c_0, c_2 - 1, z_0, 0) g(c_2, c_1, 0, z_1) +  g(c_0, c_2 - 1, z_0, 1) g(c_2, c_1, 1, z_1)
$$
for $c_2 = (c_0 + c_1)/2$. 

If $S_c = \emptyset$, then the value $c$ is not relevant to this calculation; thus, during this calculation, we can skip all such entries. As there are at most $d$ values of $c$ with $S_c \neq \emptyset$, we can recursively compute $g(0,2^\ell,0,0)$ using $\poly(db)$ processors and using $\tilde O(\log mn)$ time. (The base cases can be computed using simple arithmetic as functions of $|S_c|$.)

We next discuss how to modify this to graded PEO. Here, the top $\ell-1$ bits of each $y_v$ are completely known, while the lowest-order bit is in $\{0, 1, \text{?} \}$. Now suppose we want to calculate $g(0, 1, z_0, z_1)$; in this case, some vertices are known to correspond to the sets $S_0, S_1$ and some vertices (for which bit at level $\ell$ is unspecified) have a $1/2$ probability of going into $S_0$ and a $1/2$ probability of going into $S_1$. We can integrate over the sizes of $S_0$ and $S_1$ (which are now binomial random variables ), and use the above formulas to calculate $g(0,1,z_0, z_1)$.
\end{proof}

\section{Full proof of Theorem~\ref{defective-thm}}
\label{def-thm-app}

We assume here that $\Delta \leq C \log n$ for some sufficiently large constant $C$, and that $k$ is larger than any needed constant. The case of large $\Delta$ has already been discussed.

We will build the coloring gradually over stages $i = 0, \dots, r+1$; at stage $i$, the vertices have a $t_i$-coloring, in which  every vertex has at most $\Delta_i$ neighbors of its own color class.  At stage $i$, and for any integer $\ell \in \{1, \dots, t_i \}$, let us define $G_{i,\ell}$ to be the subgraph of $G$ induced on vertices with color $\ell$. So $G_{i,\ell}$ has maximum degree $\Delta_i$. We will apply Lemma~\ref{splitprop1} with parameter $j_i$ to each $G_{i,\ell}$; this is done in parallel across all values of $\ell$. This gives $t_{i+1} \leq 2^{j_i} t_i$, and at the end of this process, we thereby obtain a $\Delta_{r+1}$-defective coloring with $t_{r+1}$ colors.

We need to define the sequence of values $j_i, \Delta_i$ which will be valid for the degree splitting procedure. We do so recursively by setting $\Delta_0 = \Delta$, and
$$
\Delta_{i+1} = (\Delta_i/2^{j_i}) (1 + K \sqrt{2} \log^{-1/2} \Delta_i) \qquad j_i = \begin{cases}
\lceil \log_2 \bigl(\Delta_i/\log^2 \Delta_i\bigr) \rceil & i < r \\
\lceil \log_2 \bigl(\Delta_i/k\bigr) \rceil & i = r
\end{cases}
$$
where $r$ is a parameter to be determined. 

Let us verify that these parameters satisfy the conditions of Lemma~\ref{splitprop1}; specifically, we show by induction on $i$ that  $G_i$ has maximum degree $\Delta_i$ and $2^{j_i} \leq \frac{\Delta_i}{k \log \Delta_i}$ for $i = 0, \dots, r$.

In order to carry out this analysis, let us define a sequence of real numbers by
$$
b_0 = \Delta,  \qquad b_{i+1} = \tfrac{1}{2} \log^2 b_i
$$
Let $r$ be the largest integer with $b_{r} \geq k$; we stop this procedure at stage $r+1$. We easily see that $r \leq O(\log^* \Delta)$.   We claim that for $i = 0, \dots, r$ we have $k \leq b_i \leq \Delta_i \leq 4 b_i$, and we show this by induction on $i$. The bound $b_i \geq k$ is immediate from the definition of $r$. The bound on $\Delta_0$ is immediate. For $i < r$, the lower bound is shown by
\begin{align*}
\Delta_{i+1} &\geq \Delta_i/2^{j_i} \geq \Delta_i/2^{\lceil \log_2 (\Delta_i/\log^2 \Delta_i) \rceil} \geq \Delta_i/2^{1 + \log_2 (\Delta_i/\log^2 \Delta_i)} = \tfrac{1}{2} \log^2 \Delta_i \geq \tfrac{1}{2} \log^2 b_i = b_{i+1}
\end{align*}

For the upper bound, we have for $b_i \geq k$ and $k$ sufficiently large,
\begin{align*}
\Delta_{i+1} &= (\Delta_i / 2^{j_i}) (1 + K \sqrt{2} \log^{-1/2} \Delta_i) \leq (1 + K \sqrt{2} \log^{-1/2} \Delta_i) \log^2 \Delta_i \\
&\leq (1 + K \sqrt{2} \log^{-1/2} b_i) \log^2 (4 b_i) \leq 1.01 \log^2 (b_i)  \leq 4 b_{i+1}
\end{align*}

We can now show that $2^{j_i} \leq \frac{\Delta_i}{K \log \Delta_i}$ holds. For $i < r$, we have
\begin{align*}
2^{j_i} &\leq \frac{2 \Delta_i}{\log^2 \Delta_i} \leq \frac{\Delta_i}{\log \Delta_i} \times \frac{2}{\log \Delta_i} \leq \frac{\Delta_i}{\log \Delta_i} \times \frac{2}{k} \leq \frac{\Delta_i}{K \log \Delta_i}
\end{align*}

For $i = r$, we note that $b_{r+1} \leq k$ and so 
\begin{align*}
2^{j_r} &\leq 2 \Delta_r/ k \leq (8 b_r) / k \leq 16 b_r / \log^2 b_r \leq 16 \Delta_r / \log^2 \Delta_r \leq \Delta_r / (K \log \Delta_r) 
\end{align*}

So we can apply Lemma~\ref{splitprop1}. For $i < r$, the definition of $j_i$ gives 
$$
2^{j_i}/\Delta_i \leq \frac{2}{\log^2 \Delta_i}
$$
and therefore Lemma~\ref{splitprop1} shows that the graph $G_{i+1, \ell}$ has maximum degree
at most
\begin{align*}
(\Delta_i/2^{j_i}) (1 + K \sqrt{ (2^{j_i}/\Delta_i) \log \Delta_i } ) &\leq ( \Delta_i/2^{j_i}) (1 + K \sqrt{ \frac{2}{\log \Delta_i}}) = \Delta_{i+1}
\end{align*}

Similarly, for $i = r$,  Lemma~\ref{splitprop1} ensures that $G_{r+1, \ell}$ have maximum degree
$$
\Delta_{r+1} \leq (\Delta_r / 2^{j_r}) (1 + O(\log^{-1/2} \Delta_r)) \leq k (1 + O(\log^{-1/2} \Delta_r)) \leq O(k)
$$

Thus, the overall coloring we obtain is indeed $O(k)$-defective. Our next task is to count the number of colors used. Let us define $a_i = \Delta_i t_i$. We want to show that $t_{r+1} \leq O(\Delta/k)$. As $\Delta_{r+1} \geq k/2$, it suffices to show that $a_{r+1} \leq O(\Delta)$. The recursive formulas for $\Delta_i$ and $t_i$ give
\begin{align*}
a_{i+1} &\leq (\Delta_i / 2^{j_i}) (1 + K \sqrt{2} \log^{-1/2} \Delta_i) \times (t_i 2^{j_i}) \leq a_i (1 + K \sqrt{2} \log^{-1/2} \Delta_i) \leq a_i (1 + K \sqrt{2} \log^{-1/2} b_i)
\end{align*}

Therefore,
$$
a_{r+1} \leq a_0 \prod_{i=0}^r (1 + K \sqrt{2} \log^{-1/2} b_i) \leq \Delta e^{O( \sum_{i=0}^r \log^{-1/2} b_i)} \leq O(\Delta)
$$
where the last inequality follows by noting that the sequence $\log b_i$ is decreasing super-exponentially.

We finish by calculating the complexity of the algorithm. Each iteration $i$ requires $(\Delta_i \log n + \log^2 n, \poly(n))$ quasi-complexity. We see easily that $\Delta_i \leq O(\Delta) \leq O(\log n)$, so this is $\tilde O(\log^2 n)$ time per iteration. There are $r \leq O(\log^* \Delta)$ iterations, giving a total runtime again of $\tilde O(\log^2 n)$.

\end{document}